\documentclass[a4paper]{article}
 
\usepackage{graphicx}
\usepackage{amsmath,amssymb,latexsym,times,amsthm}
\usepackage{authblk}
\usepackage{url}
\usepackage{hyperref}
\usepackage[top=1in, bottom=1.25in, left=1.25in, right=1.25in]{geometry}

\usepackage{arydshln}
\usepackage{mathtools}
\usepackage{tikz}
\usetikzlibrary{arrows}
\usepackage{caption}
\usepackage{adjustbox,tabularx}
\usepackage{float}
\usepackage{verbatim}
\usepackage{comment,booktabs}
\usepackage{algorithm}
\usepackage[noend]{algpseudocode}
\usepackage{amsmath, amssymb, stmaryrd}
\usepackage{xspace}
\usepackage{xcolor}
\usepackage{needspace}
\usepackage{enumerate}
\usepackage{ifthen}
\usepackage{fancyhdr}
\usepackage{algorithmicx}
\usepackage{thmtools,thm-restate}
\usetikzlibrary{backgrounds}

\newcommand{\complClFont}[1]{\mathbf{#1}}         
\newcommand{\logicOpFont}[1]{\mathsf{#1}}         



\newcommand{\tuple}[1]{\vec{#1}}
\newcommand{\sub}{\subseteq}

\newcommand{\Fr}[1]{{\protect\ensuremath{\mathsf{Fr}(#1)}}}

\newcommand{\Po}{\mathcal{P}}
\newcommand{\MC}{\logicOpFont{MC}}
\newcommand{\MR}{\logicOpFont{MSM}}
\newcommand{\CQA}{\logicOpFont{CQA}}
\newcommand{\RC}{\logicOpFont{RC}}
\newcommand{\maxsub}{\nu}


\newcommand{\fr}{\ensuremath{\mathrm{Fr}}}




\newcommand{\LOGSPACE}{\protect\ensuremath{\complClFont{L}}\xspace}

\newcommand{\NP}{\protect\ensuremath{\complClFont{NP}}\xspace}
\newcommand{\NL}{\protect\ensuremath{\complClFont{NL}}\xspace}

\newcommand{\PTIME}{\protect\ensuremath{\complClFont{P}}\xspace}

\newcommand{\calC}{\protect\ensuremath{\mathcal{C}}}

\newcommand{\calL}{\protect\ensuremath{\mathcal{L}}}

\newcommand{\mA}{\mathfrak{A}}

\newcommand{\logicFont}[1]{\protect\ensuremath{\mathrm{#1}}
}
\newcommand{\cmdFont}[1]{\protect\ensuremath{\mathrm{#1}}\xspace}


\newcommand{\FO}{\logicFont{FO}}

\newcommand{\FOinc}{\FO(\sub)}
\newcommand{\FOkinc}[1]{\FO(#1\!\sub)}

\newcommand{\TC}{\logicFont{TC}}
\newcommand{\DTC}{\logicFont{DTC}}

\newcommand{\NLFOinc}{\FO(\sub)_{w}}
\newcommand{\game}{\mathsf{Gm}}

\newcommand{\dep}[1]{\cmdFont{dep}\!\left(#1\right)}

\newcommand{\dom}[1]{\cmdFont{dom}\!\left(#1\right)}

\newcommand{\onlyif}{\Longrightarrow}

\makeatletter
\newcommand{\xMapsto}[2][]{\ext@arrow 0599{\Mapstofill@}{#1}{#2}}
\def\Mapstofill@{\arrowfill@{\Mapstochar\Relbar}\Relbar\Rightarrow}
\makeatother

\newcommand{\al}{\alpha}
\newcommand{\safety}{\mathsf{safety}}

\renewcommand{\vec}{\overline}

\bibliographystyle{plainurl}

\title{Complexity Thresholds in Inclusion Logic}


\author[1]{Miika Hannula}
\author[2]{Lauri Hella}

\affil[1]{University of Helsinki, Finland}
\affil[2]{Tampere University, Finland}

\theoremstyle{plain}
\newtheorem{thm}{Theorem}
\newtheorem{lem}[thm]{Lemma}
\newtheorem{cor}[thm]{Corollary}
\newtheorem{prop}[thm]{Proposition}

\theoremstyle{definition}
\newtheorem{defi}[thm]{Definition}

\begin{document}

\maketitle

\begin{abstract}
Logics with team semantics provide alternative means for logical characterization of complexity classes. Both dependence and independence logic are known to capture non-deterministic polynomial time, and the frontiers of 
tractability in these logics are relatively well understood. Inclusion logic is similar to these team-based logical formalisms with the exception that it corresponds to deterministic polynomial time in ordered models. In this article we examine connections between syntactical fragments of inclusion logic and different complexity classes in terms of two computational problems: maximal subteam membership and the model checking problem for a fixed inclusion logic formula. We show that very simple quantifier-free formulae with one or two inclusion atoms generate instances of these problems that are complete for (non-deterministic) logarithmic space and polynomial time. Furthermore, we present a fragment of inclusion logic that captures non-deterministic logarithmic space in ordered models.
   \end{abstract}

\section{Introduction}
In this article we study the computational complexity of inclusion logic. Inclusion logic was
introduced by Galliani \cite{galliani12} as a variant of dependence logic,  developed   by V\"a\"an\"anen in 2007 \cite{vaananen07}. Dependence logic is a logical formalism that extends first-order logic with novel atomic formulae $\dep{x_1,\ldots ,x_n}$ expressing that a variable $x_n$ depends on  variables $x_1, \ldots ,x_{n-1}$. One motivation behind dependence logic is to find  a unifying logical framework for analyzing dependency notions from different contexts. 
Since its introduction, versions of dependence logic have been formulated and investigated in a variety of logical environments, including propositional logic \cite{HannulaKVV18,Virtema17,YangV16}, modal logic \cite{ebbing13,vaananen08b}, probabilistic logics \cite{Durand2018}, and two-variable logics \cite{KontinenKLV14}. Recent research has also pursued  connections and applications of dependence logic to fields such as database theory \cite{HannulaK16,HannulaKV18}, Bayesian networks \cite{CoranderHKPV16}, and social choice theory \cite{Pacuit2016}. 
 A common notion underlying all these endeavours is that of team semantics. Team semantics, introduced by Hodges in \cite{hodges97}, 
  is a semantical framework where formulae are evaluated over multitudes instead of singletons of objects
as in classical logics. Depending on the application domain these multitudes  may then refer to assignment sets, probability distributions, or database tables, each having their characteristic versions of team semantics \cite{vaananen07,Durand2018,HannulaKV18}.

After the introduction of dependence logic Gr\"adel and V\"a\"an\"anen observed that team semantics can be also used to create logics for independence \cite{gradel12}. This was followed by \cite{galliani12} in which Galliani investigated 
 logical languages built upon multiple different dependency notions. Inspired by the inclusion dependencies of database theory, one of the logics introduced was inclusion logic that extends first-order logic with inclusion atoms.
 Given two sequences of variables $\tuple x$ and $\tuple y$ having same length, an inclusion atom
$\tuple x \sub \tuple y$
 expresses that the set of values of $\tuple x$ is included in the set of values of $\tuple y$. 
 Inclusion logic was shown to be equi-expressive to positive greatest-fixed point logic in \cite{gallhella13}.
  In contrast to dependence logic which is equivalent to existential second-order logic \cite{vaananen07}, and thus to non-deterministic polynomial time (\NP), this finding established inclusion logic as the first team-based based logic for polynomial time ($\PTIME$). Our focus in this article is to pursue this connection further by investigating the complexity of quantifier-free inclusion logic in terms of two computational problems:
maximal subteam membership
and model checking problems. 
In particular, we identify complexity thresholds for these problems in terms of 
 first-order definability, (non-deterministic) logarithmic space, and polynomial time.

  
The maximal subteam 
membership
problem $\MR(\phi)$ for a formula $\phi$ asks whether a given assignment is in the maximal subteam of a given team that satisfies $\phi$. This problem is closely related to the notion of a repair of an inconsistent database \cite{ArenasBC99}. A repair of a database instance $I$ w.r.t.
  some set $\Sigma$ of constraints is an instance $J$ obtained by deleting and/or adding tuples from/to $I$ such that $J$ satisfies $\Sigma$, and the difference between $I$ and $J$ is minimal according to some measure.
   If only deletion of tuples is allowed, $J$ is called a subset repair. 
It was observed in \cite{ChomickiM05} that if $\Sigma$ consists of inclusion dependencies, then for every $I$ there exists a unique subset repair $J$ of $I$; this was later generalized to arbitrary 
LAV tgds (local-as-view tuple generating dependencies)
in \cite{CateFK15}. 

The research on database repair has been mainly focused on two problems: consistent query answering and repair checking. In the former, given a query $Q$ and a database instance $I$ the problem is to compute the set of tuples that belong to $Q(J)$ for every repair $J$ of $I$.  The latter is the decision problem: is $J$ a repair of $I$ for two given database instances $I$ and $J$. The complexity of these problems for various classes of dependencies and different types of repairs has been extensively studied in the literature; see e.g. \cite{AfratiK09,ChomickiM05,KoutrisW17,CateFK15}. 
In this setting, the maximal subteam 
membership
problem can be seen as a variant of the repair checking problem: regarding a team as a (unirelational) database instance $I$ and a formula $\phi$ of inclusion logic as a constraint, an assignment is a positive instance of $\MR(\phi)$ just in case it is in the 
unique subset repair
of $I$. Note however, that in $\MR(\phi)$, the task is essentially to compute the maximal subteam from a given database instance $I$, instead of just checking that a given $J$ is  the unique subset repair of $I$.  
Note further, that using a single formula $\phi$ as a constraint is actually more general than using a (finite) set $\Sigma$ of inclusion dependencies. Indeed, as $\phi$ we can take the conjunction of all inclusions in $\Sigma$. Furthermore, using disjunctions and quantifiers, we can form constraints not expressible in the usual formalism with a set of dependencies.

The complexity of model checking in team semantics has been  studied in \cite{DurandKRV15,kontinenj13}  for dependence and independence logics. 
 For these logics increase in complexity arises particularly from disjunctions. 
  For example, model checking for a disjunction of three (two, resp.) dependence atoms is complete for $\NP$ ($\NL$, resp.), while a single dependence atom is first-order definable \cite{kontinenj13}. 
 The results of this paper, in contrast, demonstrate that the complexity of inclusion logic formulae is particularly sensitive to conjunctions. We show that $\MR(\phi)$ is complete for non-deterministic logarithmic space if $\phi$ is of the form $x\sub y$ or $x\sub y\wedge y\sub x$; for any other conjunction of (non-trivial) unary inclusion atoms $\MR(\phi)$ is complete for polynomial time. This result gives a complete characterization of the maximal subteam membership problem for conjunctions of unary inclusion atoms. 
Based on 
it
 we also prove complexity results 
for model checking of quantifier-free inclusion logic formulae.
For instance, for any
non-trivial
quantifier-free $\phi$ in which $x,y,z$ do not occur, 
model checking of $\phi \vee x\sub y$ is $\NL$-hard, while that of $\phi \vee (x\sub z \wedge y \sub z)$  is $\PTIME$-complete.

We also present a safety game for the maximal subteam membership problem. Using this game 
 we examine instances of the 
 maximal subteam membership problem in which the inclusion atoms refer to a key, that is, all inclusion atoms are of the form $x\sub y$ where $y$ is a variable which uniquely determines all the remaining variables. We give example formulae for which the thresholds between $\NL$ and $\PTIME$ drop down to $\LOGSPACE$ and $\NL$ under these assumptions. 

We conclude the paper by presenting a fragment of inclusion logic that captures $\NL$. Analogous fragments have previously been established at least for dependence logic. By relating to the Horn fragment of existential second-order logic, Ebbing et al.  define a fragment of dependence logic that corresponds to $\PTIME$ \cite{ebbing14}. The fragment presented in this paper is constructed by restricting  occurrences of inclusion atoms and universal quantifiers,  and the correspondence with $\NL$ is shown by 
using the well-known characterization of $\NL$ in terms of transitive closure logic \cite{Immerman83,Immerman88}.

\section{Preliminaries}
We generally use $x,y,z,\ldots $ for variables and $a,b,c,\ldots $ for 
elements of models.
If $\tuple p$ and $\tuple q$ are two tuples, we write $\tuple p\tuple q$ for the concatenation of $\tuple p $ and $\tuple q$. 

Throughout the paper we assume that the reader has a basic familiarity of computational complexity. 
We use the notation $\LOGSPACE$, $\NL$, $\PTIME$ and $\NP$ for the classes consisting of all problems computable in logarithmic space, non-deterministic logarithmic space,  polynomial time and non-deterministic polynomial time, respectively.

\subsection{Team Semantics}
As is customary for logics in the team semantics setting, we assume that all formulae 
are 
in negation normal form (NNF). Thus, we give the syntax of first-order logic ($\FO$) as follows:
\[
\phi::= \; t =  t' \mid \neg  t =  t'\mid R\tuple t \mid \neg R\tuple t \mid \phi \wedge \phi \mid \phi \vee \phi \mid \exists x \phi \mid \forall x \phi,
\]
where $t$ and $t'$ are terms and $R$ is a relation symbol of the underlying vocabulary.
For a first-order formula $\phi$, we denote by $\Fr{\phi}$ the set of free variables of $\phi$, defined in the usual way. 
The team semantics of $\FO$ is given in terms of the notion of a \emph{team}. Let $\mA$ be a model with domain $A$. 
An \emph{assignment} $s$ of $A$ is a function from a finite set of variables into $A$. 
We write $s(a/x)$ for the assignment that maps all variables according to $s$, except that it maps $x$ to $a$.
For an assignment 
$s=\{(x_i,a_i)\mid 1\le i\le n\}$,  
we may use a shorthand $s=(a_1, \ldots ,a_n)$ if the underlying ordering $(x_1, \ldots ,x_n)$ of the domain is clear from the context. A \emph{team} $X$ of $A$ with domain $\dom{X} = \{x_1, \ldots ,x_n\}$ is a set of assignments from $\dom{X}$ into $A$. 
For $V\sub \dom{X}$, the \emph{restriction} $X\upharpoonright V$ of a team $X$ is defined as $\{s\upharpoonright V \mid s\in X\}$. 
If $X$ is a team, $V\sub \dom{X}$, and $F: X  \rightarrow \Po(A)\setminus \{\emptyset\}$, then $X[F/x]$ denotes the team $\{s(a/x) \mid s\in X, a \in F(s)\}$. For a set $B$, $X[B/x]$ is the team $\{s(b/x) \mid s\in X, b \in B\}$. Also, if $s$ is an assignment, then by $\mA \models_s \phi$ we refer to Tarski semantics. 

\begin{defi}\label{def2}
For a model $\mA$, a team $X$ and a formula in $\FO$, the satisfaction relation $\mA \models_X \phi$ is defined as follows:
\begin{itemize}
\item $\mA \models_X \al $ if $ \forall s\in X: \mA \models_s \al $, when $\al$ is a literal,
\item $\mA \models_X \phi \wedge \psi $ if $ \mA \models_X \phi \textrm{ and } \mA \models_X \psi$,
\item $\mA \models_X \phi \vee\psi $ if $\mA \models_{Y} \phi \textrm{ and } \mA \models_{Z} \psi$ 
for some $Y,Z\subseteq X$ such that $Y\cup Z= X$,
\item $\mA \models_X  \exists  x \phi$ if $  \mA \models_{X[F/x]} \phi$ for some $F: X \rightarrow \Po(A)\setminus \{\emptyset\}$,
\item $\mA \models_X \forall x \phi $ if $  \mA \models_{X[A/x]} \phi$.
\end{itemize}
\end{defi}

If $\mA \models_X \phi$, then we say that $\mA$ and $X$ \emph{satisfy} $\phi$. 
If $\phi$ does not contain any symbols from the underlying vocabulary, in which case satisfaction of a formula does not depend on the model $\mA$, we say that $X$ 
 \emph{satisfies} $\phi$,  written $X \models \phi$, if  $\mA \models_X \phi$ for all models $\mA$ with a suitable domain (i.e., a domain that includes all the elements appearing in $X$).
If $\phi$ is a sentence, that is, a formula without any free variables, then we say that $\mA$ \emph{satisfies} $\phi$, and write $\mA \models \phi$, if $\mA \models_{\{\emptyset\}} \phi$, where $\{\emptyset\}$ is the team that consists of the empty assignment
$\emptyset$.

We say that two sentences $\phi$ and $\psi$ are equivalent, written $\phi \equiv \psi$, if $\mA \models \phi\iff \mA\models \psi$ for all models $\mA$. For two logics $\calL_1$ and $\calL_2$, we write $\calL_1\leq \calL_2$ if every $\calL_1$-sentence is equivalent to some $\calL_2$-sentence; the relations ``$\equiv$'' and ``$<$'' for $\calL_1$ and $\calL_2$ are defined in terms of ``$\leq$''  in the standard way. 

Satisfaction of a first-order formula reduces to Tarski semantics in the following way.
\begin{prop}[Flatness \cite{vaananen07}]\label{prop:flatness}
For all models $\mA$, teams $X$, and formulae $\phi\in \FO$,
\[\mA\models_X \phi \text{ iff }\mA\models_{s} \phi\text{ for all }s\in X.\]
\end{prop}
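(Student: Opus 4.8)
The plan is to prove the biconditional by induction on the structure of $\phi$, maintaining the statement ``$\mA\models_X\phi \iff \mA\models_s\phi$ for all $s\in X$'' uniformly over all models $\mA$ and all teams $X$ with $\Fr{\phi}\subseteq\dom{X}$. The base case, where $\phi$ is a literal, is immediate: the first clause of Definition~\ref{def2} literally \emph{defines} $\mA\models_X\alpha$ as ``$\mA\models_s\alpha$ for all $s\in X$'', so there is nothing to prove.

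For the inductive step I would treat the four connectives in turn. Conjunction is routine: for $\phi=\psi\wedge\chi$ one unfolds the definition, applies the induction hypothesis to each conjunct, and reorders the two universal quantifiers over $s\in X$. The universal quantifier is equally direct: since $X[A/x]=\{s(a/x)\mid s\in X,\ a\in A\}$, the induction hypothesis gives that $\mA\models_{X[A/x]}\psi$ holds iff $\mA\models_{s(a/x)}\psi$ for all $s\in X$ and all $a\in A$, which is precisely ``$\mA\models_s\forall x\,\psi$ for all $s\in X$'' by Tarski semantics. The existential case $\phi=\exists x\,\psi$ follows the same pattern but has to produce a witness function: from left to right, a witnessing $F$ lets us pick, for each $s\in X$, some $a\in F(s)$ (possible since $F(s)\neq\emptyset$), so $s(a/x)\in X[F/x]$ satisfies $\psi$ by the induction hypothesis and hence $\mA\models_s\exists x\,\psi$; from right to left, we set $F(s)\dfn\{a\in A\mid \mA\models_{s(a/x)}\psi\}$, which is nonempty for every $s\in X$ by hypothesis, and verify via the induction hypothesis that this $F$ witnesses $\mA\models_X\exists x\,\psi$.

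The disjunction $\phi=\psi\vee\chi$ is the only case that needs genuine care, and I expect it to be the main obstacle, because team semantics here quantifies over splittings $X=Y\cup Z$ rather than over individual assignments. From left to right: given $Y,Z\subseteq X$ with $Y\cup Z=X$, $\mA\models_Y\psi$ and $\mA\models_Z\chi$, the induction hypothesis yields $\mA\models_s\psi$ for all $s\in Y$ and $\mA\models_s\chi$ for all $s\in Z$; since every $s\in X$ lies in $Y$ or in $Z$, it satisfies one of the disjuncts, so $\mA\models_s\psi\vee\chi$. The converse is where the splitting must be constructed: assuming $\mA\models_s\psi\vee\chi$ for every $s\in X$, put $Y\dfn\{s\in X\mid\mA\models_s\psi\}$ and $Z\dfn\{s\in X\mid\mA\models_s\chi\}$. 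By Tarski semantics each $s\in X$ belongs to $Y\cup Z$, so $Y\cup Z=X$, and by the induction hypothesis $\mA\models_Y\psi$ and $\mA\models_Z\chi$, whence $\mA\models_X\psi\vee\chi$.

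Finally, the boundary case $X=\emptyset$ requires no separate treatment: the right-hand side is then vacuously true, and on the left $Y=Z=\emptyset$ (for disjunction) and $X[F/x]=X[A/x]=\emptyset$ (for the quantifiers) make every team-satisfaction clause hold as well, so the biconditional goes through unchanged.
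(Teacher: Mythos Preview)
Your argument is correct and is the standard structural induction establishing flatness; each case is handled properly, including the disjunction, where you construct the split $Y=\{s\in X\mid\mA\models_s\psi\}$, $Z=\{s\in X\mid\mA\models_s\chi\}$. The paper itself does not prove this proposition: it is quoted as a known result from \cite{vaananen07}, so there is no proof in the paper to compare against, but your write-up is exactly the argument one finds in the literature.
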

A straightforward consequence is that first-order logic is downwards closed.
\begin{cor}[Downward Closure]\label{cor:dc}
For all models $\mA$, teams $X$, and formulae $\phi\in \FO$,
\[\text{If }\mA\models_X \phi \text{ and }Y \sub X\text{, then }\mA\models_Y \phi.\]
\end{cor}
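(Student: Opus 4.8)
The plan is to obtain this as an immediate consequence of the Flatness proposition (Proposition~\ref{prop:flatness}), with no induction on $\phi$ required. First I would apply flatness to the hypothesis $\mA\models_X\phi$: this yields that $\mA\models_s\phi$ holds for every assignment $s\in X$. Since $Y\sub X$ by assumption, every $s\in Y$ is in particular an element of $X$, so $\mA\models_s\phi$ holds for all $s\in Y$ as well. Applying flatness in the converse direction — now to the team $Y$ — gives $\mA\models_Y\phi$, which is exactly the claim. The whole argument is thus ``flatness, then monotonicity of a universal quantifier under passing to a subset''.

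The only points worth spelling out are bookkeeping ones. One should observe that $Y$ is a legitimate team, i.e.\ a set of assignments all with domain $\dom{X}$, so that $\mA\models_Y\phi$ is defined in the first place; this is immediate since restricting to a subset of the assignments of $X$ does not alter the common domain. One should also note that the degenerate case $Y=\emptyset$ is covered uniformly: the condition ``$\mA\models_s\phi$ for all $s\in Y$'' is then vacuously true, and flatness delivers $\mA\models_\emptyset\phi$, in accordance with the convention that the empty team satisfies every first-order formula.

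I do not anticipate any genuine obstacle here: all of the inductive content has already been absorbed into Proposition~\ref{prop:flatness}, and what remains is the trivial fact that a property holding of all elements of $X$ holds of all elements of any $Y\sub X$. If one preferred a self-contained argument one could instead prove the statement directly by induction on the structure of $\phi$ (literals by pointwise satisfaction, $\wedge$ directly, $\vee$ by splitting $Y$ into $Y\cap Y'$ and $Y\cap Z'$ for a witnessing split $X = Y'\cup Z'$, and the quantifier cases by restricting the witness functions to $Y$), but this merely re-proves flatness and is unnecessary given that Proposition~\ref{prop:flatness} is already available.
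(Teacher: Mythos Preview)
Your proposal is correct and matches the paper's approach: the paper presents Corollary~\ref{cor:dc} immediately after Proposition~\ref{prop:flatness} as ``a straightforward consequence'' of flatness, without further detail. Your two applications of flatness together with the trivial subset observation are exactly the intended argument.
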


 
 \subsection{Inclusion Logic}
Inclusion logic ($\FOinc$)   is  defined as the extension of $\FO$ by inclusion atoms.

\noindent
\textbf{Inclusion atom.} Let $\tuple x$ and $\tuple y$ be two 
tuples
of variables of the same length. Then $\tuple x \sub \tuple y$ is an \emph{inclusion atom} with the satisfaction relation:
 \[\mA \models_X \tuple x \sub \tuple y\text{ if } \forall s \in X \exists s' \in X:s(\tuple x) = s'(\tuple y).\]

Inclusion logic is \emph{local}, meaning that satisfaction of a formula depends only on its free variables. Furthermore, the expressive power of inclusion logic is restricted by its \emph{union closure property} which states that satisfaction of a formula is preserved under taking arbitrary unions of teams.
\begin{prop}[Locality \cite{galliani12}]\label{prop:locality}
Let $\mA$ be a model, $X$ a team,  $\phi\in \FOinc$ a formula, and $V$ a set of variables such that $\fr(\phi)\sub V\sub \dom{X}$. Then
\[\mA\models_X \phi \iff \mA \models_{X\upharpoonright V} \phi.\] 
\end{prop}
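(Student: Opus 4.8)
The plan is to prove the biconditional by induction on the structure of $\phi$. Observe first that the statement is already universally quantified over the team and over the variable set, so the induction hypothesis may be invoked on a subformula with a \emph{different} team and a \emph{different} variable set (this is what makes the quantifier cases go through). Throughout, the basic fact is that the map $s\mapsto s\upharpoonright V$ sends $X$ \emph{onto} $X\upharpoonright V$ and preserves the value $s(\tuple t)$ of any tuple $\tuple t$ of variables contained in $V$. For the base cases: if $\alpha$ is a literal, then by Definition~\ref{def2} and Tarski semantics $\mA\models_X\alpha$ means $\mA\models_s\alpha$ for all $s\in X$, and since $\fr(\alpha)\sub V$ this is equivalent to $\mA\models_{s\upharpoonright V}\alpha$ for all $s\in X$, \ie $\mA\models_{X\upharpoonright V}\alpha$. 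For an inclusion atom $\tuple x\sub\tuple y$ every variable of $\tuple x,\tuple y$ lies in $\fr(\phi)\sub V$, so $s(\tuple x)=(s\upharpoonright V)(\tuple x)$ and $s(\tuple y)=(s\upharpoonright V)(\tuple y)$, and together with surjectivity of $s\mapsto s\upharpoonright V$ the condition ``$\forall s\in X\,\exists s'\in X:s(\tuple x)=s'(\tuple y)$'' translates verbatim to the one over $X\upharpoonright V$. The conjunction case is immediate since $\fr(\phi\wedge\psi)=\fr(\phi)\cup\fr(\psi)\sub V$.

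For the disjunction $\phi\vee\psi$, assume $\mA\models_X\phi\vee\psi$ with a split $X=Y\cup Z$, $\mA\models_Y\phi$, $\mA\models_Z\psi$; since $\fr(\phi),\fr(\psi)\sub\fr(\phi\vee\psi)\sub V$ and $\dom{Y}=\dom{Z}=\dom{X}\supseteq V$, the induction hypothesis gives $\mA\models_{Y\upharpoonright V}\phi$ and $\mA\models_{Z\upharpoonright V}\psi$, while $(Y\upharpoonright V)\cup(Z\upharpoonright V)=(Y\cup Z)\upharpoonright V=X\upharpoonright V$, so $\mA\models_{X\upharpoonright V}\phi\vee\psi$. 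Conversely, from a split $X\upharpoonright V=Y'\cup Z'$ witnessing $\mA\models_{X\upharpoonright V}\phi\vee\psi$, set $Y\dfn\{s\in X\mid s\upharpoonright V\in Y'\}$ and $Z\dfn\{s\in X\mid s\upharpoonright V\in Z'\}$; then $Y\cup Z=X$, and $Y\upharpoonright V=Y'$, $Z\upharpoonright V=Z'$ by surjectivity, so the induction hypothesis yields $\mA\models_Y\phi$, $\mA\models_Z\psi$, hence $\mA\models_X\phi\vee\psi$.

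The quantifier cases rest on a small bookkeeping claim that I would state and prove separately: given $F\colon X\to\Po(A)\setminus\{\emptyset\}$, define $F'\colon X\upharpoonright V\to\Po(A)\setminus\{\emptyset\}$ by $F'(s')\dfn\bigcup\{F(s)\mid s\in X,\ s\upharpoonright V=s'\}$; then $X[F/x]\upharpoonright(V\cup\{x\})=(X\upharpoonright V)[F'/x]$, and, conversely, any $G'$ on $X\upharpoonright V$ arises this way from $G(s)\dfn G'(s\upharpoonright V)$ with the same identity. Using the computation $s(a/x)\upharpoonright(V\cup\{x\})=(s\upharpoonright V)(a/x)$, which is valid whether or not $x\in V$, the claim follows from surjectivity of $s\mapsto s\upharpoonright V$. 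Granting it, consider $\exists x\phi$: since $\fr(\phi)\sub\fr(\exists x\phi)\cup\{x\}\sub V\cup\{x\}\sub\dom{X}\cup\{x\}=\dom{X[F/x]}$, the induction hypothesis applies to $\phi$ with variable set $V\cup\{x\}$. If $\mA\models_X\exists x\phi$, choose $F$ with $\mA\models_{X[F/x]}\phi$; by the induction hypothesis $\mA\models_{X[F/x]\upharpoonright(V\cup\{x\})}\phi$, hence $\mA\models_{(X\upharpoonright V)[F'/x]}\phi$ by the claim, so $\mA\models_{X\upharpoonright V}\exists x\phi$; the converse uses the second half of the claim with $G(s)=G'(s\upharpoonright V)$. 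The universal case $\forall x\phi$ is the instance $F\equiv A$, where $X[A/x]\upharpoonright(V\cup\{x\})=(X\upharpoonright V)[A/x]$ holds directly. I expect the existential case to be the only genuine obstacle: the point is that $s\mapsto s\upharpoonright V$ need not be injective, so passing to $X\upharpoonright V$ forces one to \emph{merge} the supplement sets $F(s)$ over all preimages of an assignment — which is exactly why $F'$ is defined as a union, and why it is union closure rather than downward closure that is compatible with locality here.
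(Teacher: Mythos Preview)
Your argument is correct and is the standard structural induction that establishes locality for team-semantic logics with lax quantifier semantics; note however that the paper does not give its own proof of this proposition but merely cites it from \cite{galliani12}, so there is no in-paper argument to compare against. One small quibble: your closing remark that ``it is union closure rather than downward closure that is compatible with locality here'' slightly mislocates the phenomenon --- what makes the merge $F'(s')=\bigcup_{s\upharpoonright V=s'}F(s)$ legitimate is the \emph{lax} semantics of $\exists$ (the supplement function ranges over nonempty \emph{sets}), not the union-closure property of $\FOinc$ as such; under strict semantics locality can fail even for union-closed logics.
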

\begin{prop}[Union Closure \cite{galliani12}]\label{prop:uc}
Let $\mA$ be a model, $\mathcal{X}$ a set of teams, and  $\phi\in \FOinc$ a formula. Then
\[\forall X\in \mathcal{X}: \mA\models_X \phi \;\onlyif\;  \mA \models_{\bigcup \mathcal{X}} \phi.\] 
\end{prop}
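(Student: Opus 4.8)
The plan is to prove the statement by induction on the structure of $\phi$. Before starting I would dispose of the degenerate case $\mathcal{X}=\emptyset$: then $\bigcup\mathcal{X}=\emptyset$, and the empty team satisfies every formula of $\FOinc$ (a routine sub-induction: literals and inclusion atoms hold vacuously, disjunctions split as $\emptyset=\emptyset\cup\emptyset$, and quantifiers use $\emptyset[F/x]=\emptyset[A/x]=\emptyset$), so from now on I may assume $\mathcal{X}\neq\emptyset$. Throughout, write $X^\ast\dfn\bigcup\mathcal{X}$.

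For the base cases: if $\phi$ is a literal, $\mA\models_X\phi$ says that every assignment in $X$ satisfies $\phi$ in Tarski semantics; since each $s\in X^\ast$ belongs to some $X\in\mathcal{X}$, it follows that $\mA\models_s\phi$ for all $s\in X^\ast$, i.e.\ $\mA\models_{X^\ast}\phi$. If $\phi$ is an inclusion atom $\tuple x\sub\tuple y$, then for $s\in X^\ast$ I pick $X\in\mathcal{X}$ with $s\in X$; by hypothesis there is $s'\in X\sub X^\ast$ with $s(\tuple x)=s'(\tuple y)$, which is exactly what the atom requires.

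For the inductive step, conjunction is immediate by applying the induction hypothesis to each conjunct. For disjunction $\phi\vee\psi$, I would for each $X\in\mathcal{X}$ fix a splitting $X=Y_X\cup Z_X$ with $\mA\models_{Y_X}\phi$ and $\mA\models_{Z_X}\psi$; then $\bigcup_{X\in\mathcal{X}}Y_X$ and $\bigcup_{X\in\mathcal{X}}Z_X$ together cover $X^\ast$, and the induction hypothesis applied to the families $\{Y_X\}_{X\in\mathcal{X}}$ and $\{Z_X\}_{X\in\mathcal{X}}$ gives $\mA\models_{\bigcup_{X}Y_X}\phi$ and $\mA\models_{\bigcup_{X}Z_X}\psi$, hence $\mA\models_{X^\ast}\phi\vee\psi$. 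The universal quantifier case reduces to the induction hypothesis via the easy set identity $X^\ast[A/x]=\bigcup_{X\in\mathcal{X}}X[A/x]$.

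The one case I expect to require genuine care is the existential quantifier, and it is where I would spend most of the effort. For each $X\in\mathcal{X}$ fix a witness $F_X\colon X\to\Po(A)\setminus\{\emptyset\}$ with $\mA\models_{X[F_X/x]}\phi$. The obstacle is that a single assignment $s$ may lie in several teams of $\mathcal{X}$ with differing witness sets $F_X(s)$, so the $F_X$ need not glue into a single function on $X^\ast$. The remedy is to set $F\colon X^\ast\to\Po(A)\setminus\{\emptyset\}$ by $F(s)\dfn\bigcup\{F_X(s)\mid X\in\mathcal{X},\ s\in X\}$, which is nonempty, and then to verify the identity $X^\ast[F/x]=\bigcup_{X\in\mathcal{X}}X[F_X/x]$: the inclusion ``$\supseteq$'' holds since $F_X(s)\sub F(s)$, and ``$\sub$'' holds because every $a\in F(s)$ already lies in some $F_X(s)$. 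Applying the induction hypothesis to the family $\{X[F_X/x]\}_{X\in\mathcal{X}}$ then yields $\mA\models_{X^\ast[F/x]}\phi$, i.e.\ $\mA\models_{X^\ast}\exists x\,\phi$, which closes the induction.
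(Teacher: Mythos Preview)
Your proof is correct; the structural induction you give is exactly the standard argument, and the existential case---gluing the witness functions by taking unions of the $F_X(s)$---is handled properly. Note that the paper itself does not prove this proposition: it is stated with a citation to \cite{galliani12} and no proof, so there is no in-paper argument to compare against, but your induction matches the one in the original source.
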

Note that union closure implies the \emph{empty team property}, that is, $\mA \models_{\emptyset} \phi$ for all inclusion logic formulae $\phi$. 

The starting point for our investigations is the result by Galliani and Hella \cite{gallhella13} characterizing the expressivity of  inclusion logic in terms of positive greatest fixed point logic. The latter logic is obtained from greatest fixed-point logic (the dual of least fixed point logic) by restricting to formulae in which fixed point operators occur only positively, that is, within a scope of an even number of negations. 
In finite models this positive fragment captures the full fixed point logic (with both least and greatest fixed points), and hence it follows from the famous result of Immerman \cite{immerman86} and Vardi \cite{vardi82} that inclusion logic captures 
polynomial time 
in finite ordered models.
\begin{thm}[\cite{gallhella13}]\label{thm:incgfp}
Every inclusion logic sentence is equivalent to some positive greatest fixed point logic sentence, and vice versa.
\end{thm}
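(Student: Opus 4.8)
The plan is to prove the two directions of the equivalence separately: (i) every $\FOinc$-sentence is equivalent to some positive $\gfp$-sentence, and (ii) conversely. For direction (i), I would proceed by a compositional translation. The natural approach is to translate an $\FOinc$-formula $\phi(\tuple x)$ with free variables among $\tuple x$ into a positive greatest-fixed-point formula $\hat\phi(\tuple x, R)$ with an extra predicate $R$ of arity $|\tuple x|$ intended to range over (super)teams, so that $\mA\models_X\phi$ iff $\mA\models_{R\mapsto X}\hat\phi$ under a suitable reading. Literals, conjunction, existential and universal quantifiers, and the inclusion atom $\tuple u\sub\tuple v$ each have direct first-order renderings relative to the ``team predicate''; the only construct that genuinely needs a fixed point is the split disjunction $\phi\vee\psi$, because union closure (Proposition~\ref{prop:uc}) tells us satisfaction is witnessed by a \emph{largest} pair of subteams, and ``largest subteam satisfying a formula'' is naturally a greatest-fixed-point. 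Concretely, one replaces each subteam variable by a fresh fixed-point predicate and takes the greatest fixed point of the monotone operator ``add all assignments consistent with the subformula''; positivity of the fixed-point operators is maintained because the translation never negates the team predicates. One should then verify by induction on $\phi$ that the translation is correct, using locality (Proposition~\ref{prop:locality}) to handle free variables cleanly.

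For direction (ii), the standard route is to show that positive $\gfp$ is contained in $\FOinc$ by first observing that $\FOinc$ can express (a team-semantic analogue of) inflationary/greatest fixed points: since inclusion logic is union-closed, one can simulate a greatest-fixed-point iteration $R_0\supseteq R_1\supseteq\cdots$ by representing the stages of the induction inside a team and using existential quantification together with inclusion atoms to enforce that each stage is closed under the defining formula. The key gadget is that an inclusion atom $\tuple x\sub\tuple y$ lets us assert that every tuple currently ``in the relation'' reappears as a witness, which is exactly the closure condition one needs for a fixed point; a universal quantifier ranges over the domain to check the body of the operator. One then shows by induction on the structure of the $\gfp$-formula that each subformula is faithfully captured, with the positivity restriction ensuring that the fixed-point predicates are only ever used in a monotone (union-closed) way, so that the $\FOinc$ simulation is well-defined.

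The main obstacle, and the part deserving the most care, is the treatment of the \emph{disjunction} in direction (i): unlike in ordinary logic, $\mA\models_X\phi\vee\psi$ quantifies existentially over a \emph{splitting} of $X$, and naively this looks like a second-order existential quantifier (hence $\ESO$/\NP, not $\PTIME$). The resolution is precisely union closure: the set of $s\in X$ that can be placed in the ``$\phi$-part'' of some satisfying split is itself a subteam satisfying $\phi$ (by taking the union of all such witnesses), so the optimal split is canonical and computable by a monotone fixed-point process rather than by a nondeterministic guess. Getting this simultaneous-fixed-point construction to nest correctly through arbitrarily deep alternations of disjunctions and quantifiers — and checking that positivity is genuinely preserved at each step — is the technical heart of the argument. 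I would also need to be careful about the bookkeeping of which variables are ``active'' at each subformula (handled via locality) and about the base case of sentences, where the team is $\{\emptyset\}$ and the fixed-point predicate is nullary.
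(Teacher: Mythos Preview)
The paper does not contain a proof of this theorem at all: it is stated with the citation \cite{gallhella13} and used as a black box (the surrounding text says ``The starting point for our investigations is the result by Galliani and Hella \ldots''), so there is no ``paper's own proof'' to compare against. Your proposal is therefore not a reconstruction of anything in this paper but an attempt to re-derive the cited result from scratch.

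As a sketch of the original Galliani--Hella argument, your outline is in the right spirit, particularly the observation that union closure (Proposition~\ref{prop:uc}) is what tames the apparently second-order split in $\phi\vee\psi$ and replaces a guess over subteams by a canonical greatest-fixed-point computation. One point that is understated in your plan: it is not quite accurate that disjunction is ``the only construct that genuinely needs a fixed point''. The translation that works is one that, for every node of the syntax tree, defines the \emph{maximal} subteam satisfying the subformula at that node (this is exactly the object the present paper calls $\maxsub(\mA,X,\phi)$ and studies at length). For a bare inclusion atom $\tuple u\sub\tuple v$ this maximal subteam is already not first-order in the team predicate --- Lemma~\ref{lem:incgraph} in this paper shows it is ``reachability to a cycle'', which needs a fixed point. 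So the correct picture is a \emph{simultaneous} positive $\gfp$ over all subformula positions, with disjunction nodes taking the union of the two child predicates and inclusion-atom leaves contributing genuine recursive clauses; your inductive scheme should be set up that way rather than treating inclusion atoms as a first-order base case. For direction (ii), your idea of encoding fixed-point stages inside a team via inclusion atoms is the right one; the present paper's proof of Theorem~\ref{thm:fromTC} (the $\TC\le\NLFOinc$ translation using a counter tuple $\tuple t_x$ and a single inclusion atom) gives a concrete instance of exactly this gadget, and scaling it up to arbitrary positive $\gfp$ operators follows the same pattern.
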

\begin{thm}[\cite{gallhella13}]\label{thm:gfpcaptures}
A class $\calC$ of finite ordered models is in $\PTIME$ iff it can be defined in $\FOinc$. 
\end{thm}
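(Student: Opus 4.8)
The plan is to derive the statement from Theorem~\ref{thm:incgfp} together with two classical facts about fixed point logics on finite structures. By Theorem~\ref{thm:incgfp}, a class $\calC$ of finite models is $\FOinc$-definable iff it is definable by a sentence of positive greatest fixed point logic (positive $\gfp$); so it is enough to show that, on finite ordered models, a class is definable in positive $\gfp$ exactly when it lies in $\PTIME$.

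For the implication from positive $\gfp$ to $\PTIME$, I would argue that positive $\gfp$ sentences are model checkable in polynomial time, via the usual bottom-up evaluation of fixed points: over a domain of size $n$, a greatest fixed point of an operator of arity $k$ stabilises after at most $n^k$ stages, each stage being the evaluation of a first-order formula over the previous stage and hence computable in polynomial time, while nested and simultaneous fixed points are handled by induction on the formula. Thus every positive-$\gfp$-definable---equivalently, by Theorem~\ref{thm:incgfp}, every $\FOinc$-definable---class of finite models is in $\PTIME$, and this half uses no ordering.

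For the converse, let $\calC$ be a $\PTIME$-recognisable class of finite ordered models. By the Immerman--Vardi theorem \cite{immerman86,vardi82}, $\calC$ is definable in least fixed point logic $\lfp$, hence in full fixed point logic (with both least and greatest fixed point operators). Over finite structures this logic is no more expressive than its positive $\gfp$ fragment: the idea is to trade a least fixed point for a greatest fixed point running over explicitly counted stages---over a finite model the iteration stabilises after polynomially many steps, and, the models being ordered, these stages can be named by tuples, turning the iteration into a recursion on a strictly decreasing stage counter, which is well-founded and so has coinciding least and greatest fixed points; this lets one place the fixed point in a positive position. Applying this (after reducing to a single outermost fixed point) produces an equivalent positive-$\gfp$ sentence, and composing with Theorem~\ref{thm:incgfp} shows $\calC$ is $\FOinc$-definable, which completes the equivalence.

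The main obstacle is precisely this collapse of full fixed point logic to its positive $\gfp$ fragment over finite models: plain duality turns a least fixed point into a \emph{negated} greatest fixed point, so positivity is not automatic, and securing it requires the stage-counting argument above together with careful tracking of negations when several, possibly nested, fixed points occur. In the write-up this collapse is standard and would be cited rather than reproved, so that the theorem reduces to Theorem~\ref{thm:incgfp}, the polynomial-time evaluability of fixed points, and Immerman--Vardi.
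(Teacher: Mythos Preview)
Your proposal is correct and matches the paper's own treatment: the paper does not prove this theorem but cites it from \cite{gallhella13}, and the explanatory sentence immediately preceding Theorems~\ref{thm:incgfp} and~\ref{thm:gfpcaptures} sketches exactly your argument---combine Theorem~\ref{thm:incgfp} with the fact that on finite models positive greatest fixed point logic captures full fixed point logic, and then invoke Immerman--Vardi \cite{immerman86,vardi82}. One minor remark: the collapse of $\lfp$ to positive $\gfp$ already holds on all finite structures (not just ordered ones), so your stage-counting via the linear order is more than is strictly needed for that step, though it is harmless here since the theorem concerns ordered models anyway.
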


\subsection{Transitive Closure Logic}
In Section \ref{sect:nl} we will  explore connections between inclusion logic and transitive closure logic, and hence we next give a short introduction to the latter. 
A $2k$-ary relation $R$ is said to be \emph{transitive} if $(\tuple a,\tuple b)\in R$ and $(\tuple b,\tuple c)\in R$  imply $(\tuple a,\tuple c)\in R$ for $k$-tuples $\tuple a,\tuple b,\tuple c$.
The \emph{transitive closure}  of a $2k$-ary relation $R$, written $\TC(R)$, is defined as the intersection of all $2k$-ary relations $S\supseteq R$ that are transitive. 
%
The transitive closure of $R$ can be alternatively defined as $R_{\infty}=\bigcup_{i=0}^{\infty} R_i$ for $R_i$  defined recursively as follows: $R_0=R$ and $R_{i+1}=R\circ R_i$ for $i>0$; here $A\circ B$ denotes the composition of two relations $A$ and $B$. 
Note that $(\tuple a,\tuple b)\in R_i$ iff there is an $R$-path of length $i+1$
from $\tuple a$ to $\tuple b$.

An assignment $s$, a model $\mA$, and a formula $\psi(\tuple x,\tuple y,\tuple z)$, where $\tuple x$ and $\tuple y$ are $k$-ary, give rise to a $2k$-ary relation defined as follows:
\[R_{\psi,\mA,s}=\{\tuple a\tuple b\in M^{2k}\mid \mA\models \psi(\tuple a,\tuple b,s(\tuple z))\}.\]

We can now define transitive closure logic. Given a term $t$, a model $\mA$, and an assignment $s$, we write $t^{\mA,s}$ for the interpretation of $t$ under $\mA,s$, defined in the usual way.

\begin{defi}[Transitive Closure Logic]
Transitive closure logic ($\TC$) is obtained by extending first-order logic with transitive closure formulae $[\TC_{\tuple x, \tuple y}\psi(\tuple x,\tuple y,\tuple z)](\tuple t_0,\tuple t_1) $
where 
$\tuple t_0$ and $\tuple t_1$ 
are $k$-tuples of terms, and  $\psi(\tuple x,\tuple y,\tuple z)$ is a formula  where $\tuple x$ and $\tuple y$ are $k$-tuples of variables. The semantics of the transitive closure formula is defined as follows:
\[\mA\models_s [\TC_{\tuple x, \tuple y}\psi(\tuple x,\tuple y,\tuple z)](\tuple t_0,\tuple t_1) \textrm{ iff } (\tuple t_0^{\mA,s} , \tuple t_1^{\mA,s})  \in \TC(R_{\psi,\mA,s}).\]
\end{defi}
Thus, $[\TC_{\tuple x, \tuple y}\psi(\tuple x,\tuple y,\tuple z)](\tuple t_0,\tuple t_1) $ is true if and only if there is a $\psi$-path from $\tuple t_0$ to $\tuple t_1$. 
It is well known that transitive closure logic captures non-deterministic logarithmic space 
in finite ordered models. In particular, this can be achieved by using only one application of the $\TC$ operator.  
We use below the notation $\min$ for the least element of the linear order, and $\tuple\min$ for the tuple $(\min,\ldots,\min)$. Similarly, $\tuple\max$ denotes the tuple $(\max,\ldots,\max)$, where $\max$ is the greatest element.
\begin{thm}[\cite{Immerman83,Immerman88}]\label{thm:TCcaptures}
A class $\calC$ of finite ordered models is in $\NL$ iff it can be defined in $\TC$. Furthermore, every $\TC$-sentence is equivalent in finite ordered models to a sentence of the form 
\[[\TC_{\tuple x, \tuple y}\alpha(\tuple x,\tuple y)](\tuple \min,\tuple \max)\] where $\alpha$ is first-order.
\end{thm}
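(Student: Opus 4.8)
The statement is Immerman's classical characterization \cite{Immerman83,Immerman88}, so the plan is to reconstruct the standard argument via configuration graphs and $s$-$t$-reachability, in three parts, and then chain the inclusions.

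First I would establish the easy inclusion $\TC \subseteq \NL$ in the single-application case. First-order formulae can be evaluated over a finite ordered model in logarithmic space by cycling a bounded number of element-counters through the universe, so the only new ingredient is the $\TC$ operator. A sentence $[\TC_{\tuple x,\tuple y}\alpha(\tuple x,\tuple y)](\tuple\min,\tuple\max)$ with first-order $\alpha$ asserts that there is an $\alpha$-path from $\tuple\min$ to $\tuple\max$ in the directed graph on $M^k$ with edge set $\{\tuple a\tuple b : \mA \models \alpha(\tuple a,\tuple b)\}$; this is exactly $s$-$t$-reachability, which lies in $\NL$: guess the path one $k$-tuple at a time, keeping in memory only the current tuple, the target, and a step counter bounded by $\lvert M\rvert^{k}$, and verify each guessed edge with the logspace procedure for $\alpha$.

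Next I would prove the converse $\NL \subseteq \TC$. Given a nondeterministic machine $M$ using $c\log n$ work-tape cells, I would encode a configuration of $M$ on an ordered model $\mA$ with universe $\{0,\dots,n-1\}$ as a tuple of universe elements of some fixed arity $k = k(M,c)$ — the logarithmic bound is exactly what makes this possible — and read $\mA$ through the usual ordered encoding of relational structures, so that the one-step transition relation of $M$ becomes a first-order formula $\psi(\tuple x,\tuple y)$ over the vocabulary of $\mA$ together with $\le$, the induced successor, $\min$, and $\max$. After normalizing $M$ to have a unique initial configuration and (by blanking the work tape and repositioning both heads before halting) a unique accepting configuration, acceptance becomes the existence of a $\psi$-path between two FO-definable tuples $\tuple c_{\mathrm{init}}$ and $\tuple c_{\mathrm{acc}}$. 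To force the endpoints to be literally $\tuple\min$ and $\tuple\max$ I would either substitute the FO definitions of $\tuple c_{\mathrm{init}}$ and $\tuple c_{\mathrm{acc}}$ into the body (pre- and post-composing the edge relation with them), or add $\tuple\min$ and $\tuple\max$ as a dummy source and sink joined to $\tuple c_{\mathrm{init}}$ and $\tuple c_{\mathrm{acc}}$ by FO-definable edges; either way one application of $\TC$ to a first-order body remains. This gives $\NL \subseteq \TC_1$, where $\TC_1$ denotes the set of classes defined by sentences in the claimed normal form.

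Finally, to obtain the normal form for \emph{arbitrary} $\TC$-sentences I would upgrade the first part to full $\TC \subseteq \NL$ by induction on formula structure: first-order combinations preserve membership in $\NL$, a negated $\NL$-subformula stays in $\NL$ by the Immerman--Szelepcs\'enyi theorem ($\NL = \mathbf{coNL}$), and $\TC$ applied to an $\NL$-decidable body is again in $\NL$ by the same path-guessing argument, now checking edges with the $\NL$ subroutine for the body (composition of $\NL$ with $\NL$ is $\NL$). Chaining the three parts gives $\TC \subseteq \NL \subseteq \TC_1 \subseteq \TC$, so all three coincide, which yields both the capture statement and the normal form. The main obstacle is the encoding in the second part: verifying carefully that packing a logarithmic-length work tape into a fixed-arity tuple and expressing head movements and statements like "cell $i$ holds symbol $b$" are genuinely first-order over the ordered structure, and that $M$ can be normalized to unique start and accept configurations. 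A secondary delicate point is the appeal to $\NL = \mathbf{coNL}$, without which the collapse to the normal form would not go through.
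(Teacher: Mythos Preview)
Your reconstruction is the standard Immerman argument and is correct as outlined, including the use of Immerman--Szelepcs\'enyi to handle negation in the inductive step for full $\TC\subseteq\NL$. However, note that the paper does not prove this theorem at all: it is stated as a known result with a citation to \cite{Immerman83,Immerman88} and no proof is given, so there is nothing in the paper to compare your argument against.
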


\section{Maximal Subteam Membership}
In this section we define the maximal subteam 
membership
problem  and discuss some of its basic properties. We also define a safety game for quantifier-free inclusion logic formulae. This game will be used later to facilitate some proofs regarding the complexity of the maximal subteam membership problem. 

\subsection{Introduction}
 For a model $\mA$, a team $X$, and an inclusion logic formula $\phi$, we define $\maxsub(\mA,X,\phi)$ as the 
unique 
subteam $Y\sub X$ such that $\mA\models_Y \phi$, and $\mA\not\models_Z\phi$ if $Y \subsetneq Z \sub X$. 
Due to the union closure property 
$\maxsub(\mA,X,\phi)$
 always exists and it can be alternatively defined as the union of all subteams $Y\sub X$ such that $\mA\models_Y \phi$.  
If $\phi$ does not contain any symbols from the underlying vocabulary, then we may write $\maxsub(X,\phi)$ instead of $\maxsub(\mA,X,\phi)$. The maximal subteam
membership
problem is now given as follows.

\begin{defi} Let $\phi\in \FOinc$.
Then
$\MR(\phi)$ is the problem of determining whether $s\in \maxsub(\mA,X,\phi)$ for a given model $\mA$, a team $X$ and an assignment $s\in X$.
\end{defi}

Gr\"adel proved that for any $\FOinc$-formula $\phi$, there is a formula $\psi$ of positive greatest fixed point logic such that for any model $\mA$ and assignment $s$, $\mA\models_s\psi$ if and only if $s$ is in the maximal team of $\mA$ satisfying $\phi$ (see Theorem 24 in \cite{Gradel16}). An easy adaptation of the proof shows that $\maxsub(\mA,X,\phi)$ is also definable in positive greatest fixed point logic. Thus, it follows that every maximal subteam 
membership
problem is polynomial time computable.

\begin{lem}\label{lem:MRinP}
For every formula $\phi\in\FOinc$, $\MR(\phi)$ is in $\PTIME$.
\end{lem}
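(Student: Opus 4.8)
The plan is to exhibit, for a fixed formula $\phi \in \FOinc$, a polynomial-time algorithm that on input $(\mA, X, s)$ decides whether $s \in \maxsub(\mA, X, \phi)$. The cleanest route is to use directly the alternative characterization of $\maxsub(\mA,X,\phi)$ as the union of all subteams $Y \sub X$ with $\mA \models_Y \phi$, which is guaranteed to make sense by the union closure property (Proposition~\ref{prop:uc}). So it suffices to show that $\maxsub(\mA,X,\phi)$ itself can be computed in polynomial time in $|X|$ and $|\mA|$; membership of $s$ is then a trivial lookup.

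The key observation is that model checking of a \emph{fixed} formula $\psi \in \FOinc$ over a given model and team is decidable in polynomial time. Indeed, $\FOinc \le \ESO$ (even $\FOinc$ is a fragment of dependence/independence logic territory) so $\mA \models_Y \psi$ is an $\NP$ predicate in the size of $(\mA,Y)$ for fixed $\psi$; but more simply, by Theorem~\ref{thm:incgfp} every $\FOinc$ sentence, and with a routine relativization every $\FOinc$ formula, translates into positive greatest fixed point logic, whose model checking for a fixed sentence is in $\PTIME$. Concretely I would cite the stronger statement already recalled in the excerpt: G\"adel (Theorem~24 in \cite{Gradel16}) gives, for each $\FOinc$-formula $\phi$, a positive GFP-formula $\psi$ such that for all $\mA$ and $s$, $\mA \models_s \psi$ iff $s$ lies in the maximal team of $\mA$ satisfying $\phi$. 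As the excerpt notes, an easy adaptation handles $\maxsub(\mA,X,\phi)$ rather than the maximal team of the whole model: one relativizes the fixed point computation to the elements and tuples occurring in $X$, i.e.\ one works inside the finite structure coded by $X$ together with the relevant part of $\mA$. Since the resulting GFP-formula is fixed (it depends only on $\phi$), its evaluation over the structure of size polynomial in $|X| + |\mA|$ runs in polynomial time by the classical fact that fixed-point logic sentences have polynomial-time data complexity.

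So the steps are: (1) invoke Theorem~\ref{thm:incgfp} / Gr\"adel's Theorem~24 to obtain a fixed positive GFP-formula $\psi_\phi(\bar z)$ defining membership in $\maxsub(\cdot,\cdot,\phi)$; (2) observe that on input $(\mA,X,s)$ one builds in logarithmic space (hence polynomial time) a finite relational structure encoding $X$, $\dom{X}$, and the restriction of $\mA$ to the active domain, of size polynomial in the input; (3) evaluate $\psi_\phi$ on this structure, which is in $\PTIME$ because fixed point logic has polynomial-time data complexity; (4) return ``yes'' iff the tuple coding $s$ satisfies $\psi_\phi$. Correctness follows from the defining property of $\psi_\phi$ together with locality (Proposition~\ref{prop:locality}), which guarantees that restricting $\mA$ to the active domain does not change satisfaction.

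The only real subtlety — and the place I would be careful — is the adaptation from ``maximal team of the model'' to ``maximal subteam of a given team $X$.'' This requires checking that Gr\"adel's fixed-point construction relativizes correctly: the greatest fixed point that picks out satisfying assignments must be computed over the lattice of subteams of $X$ rather than over all assignments into $\mA$, which amounts to intersecting each stage of the fixed-point iteration with $X$ (equivalently, adding $X$ as an extra unary/relational parameter that bounds the recursion). This is routine — downward closure is not needed since the construction is a \emph{greatest} fixed point and union closure (Proposition~\ref{prop:uc}) is exactly what makes $\maxsub$ well defined — but it is the one spot where one must verify that the translation still yields a positive GFP-formula of fixed size. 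Everything else is bookkeeping, and the polynomial-time bound then comes for free from the data complexity of fixed point logic.
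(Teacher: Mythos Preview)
Your proposal is correct and follows essentially the same route as the paper: invoke Gr\"adel's result (Theorem~24 in \cite{Gradel16}) to obtain a positive GFP-formula defining the maximal team, observe that the construction adapts to $\maxsub(\mA,X,\phi)$ by relativizing to $X$, and conclude via the polynomial-time data complexity of fixed-point logic. The paper's justification is in fact just the short paragraph immediately preceding the lemma; your write-up is more detailed (spelling out the relativization and the role of union closure), but the argument is the same.
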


In this section
we will restrict our attention to maximal subteam problems for quantifier free formulae.
Before proceeding to our findings we need to present some auxiliary concepts and results.  The following lemmata will be useful below.
\begin{lem}\label{lem:maxsub}
Let  $\alpha,\beta\in\FOinc$, 
and let $X$ be a team of a model $\mA$. Then $\maxsub(\mA,X,\alpha\vee\beta)=\maxsub(\mA,X,\alpha)\cup \maxsub(\mA,X,\beta)$.  
\end{lem}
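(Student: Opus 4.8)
The plan is to prove the identity $\maxsub(\mA,X,\alpha\vee\beta)=\maxsub(\mA,X,\alpha)\cup\maxsub(\mA,X,\beta)$ by establishing two inclusions, relying throughout on the union closure property (Proposition~\ref{prop:uc}) and on the characterization of $\maxsub(\mA,X,\phi)$ as the union of all subteams of $X$ satisfying $\phi$. Write $U_\alpha=\maxsub(\mA,X,\alpha)$, $U_\beta=\maxsub(\mA,X,\beta)$, and $U=\maxsub(\mA,X,\alpha\vee\beta)$.

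For the inclusion $U_\alpha\cup U_\beta\sub U$: by definition $\mA\models_{U_\alpha}\alpha$ and $\mA\models_{U_\beta}\beta$, and $U_\alpha,U_\beta\sub X$. Taking $Y=U_\alpha$ and $Z=U_\beta$ in the disjunction clause of Definition~\ref{def2}, we get $\mA\models_{U_\alpha\cup U_\beta}\alpha\vee\beta$. Since $U_\alpha\cup U_\beta$ is a subteam of $X$ satisfying $\alpha\vee\beta$, it is contained in the maximal such subteam $U$.

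For the reverse inclusion $U\sub U_\alpha\cup U_\beta$: since $\mA\models_U\alpha\vee\beta$, there are $Y,Z\sub U$ with $Y\cup Z=U$, $\mA\models_Y\alpha$, and $\mA\models_Z\beta$. Then $Y$ is a subteam of $X$ satisfying $\alpha$, so $Y\sub U_\alpha$; similarly $Z\sub U_\beta$. Hence $U=Y\cup Z\sub U_\alpha\cup U_\beta$, completing the argument.

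There is essentially no obstacle here — the statement is a routine consequence of union closure plus the semantics of $\vee$. The only point requiring a word of care is that the disjunction clause allows $Y$ and $Z$ to range over \emph{arbitrary} subteams with $Y\cup Z=X$ (or $=U$), not necessarily disjoint ones, which is exactly what makes both directions go through cleanly; in particular the first inclusion does not need $U_\alpha$ and $U_\beta$ to be disjoint. One should also note that the empty team property (a consequence of Proposition~\ref{prop:uc}) guarantees all the teams involved are legitimate even in degenerate cases, so no special handling of empty subteams is needed.
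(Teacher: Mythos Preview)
Your proof is correct and follows essentially the same approach as the paper: both directions use the semantics of $\vee$ together with the maximality of $\maxsub$, showing $U_\alpha\cup U_\beta\models\alpha\vee\beta$ for one inclusion and splitting $U$ into witnesses $Y,Z$ with $Y\sub U_\alpha$, $Z\sub U_\beta$ for the other. The paper's version is just more terse.
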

\begin{proof}
 For ``$\sub$'', note that by definition 
there are $Y,Z\subseteq X$ such that $Y\cup Z=\maxsub(\mA,X,\alpha\vee\beta)$,
$Y\models \alpha$ and $Z\models \beta$, 
and hence
$Y\sub \maxsub(\mA,X,\alpha)$ and $Z\sub \maxsub(\mA,X,\beta)$. For ``$\supseteq$'', note that $\maxsub(\mA,X,\alpha)\cup \maxsub(\mA,X,\beta)$ satisifes $\alpha \vee\beta$ so it must be contained by $\maxsub(\mA,X,\alpha\vee\beta)$.
\end{proof}

As an easy corollary we obtain the following lemma.
\begin{lem}\label{prop:dis}
Let $\alpha,\beta\in \FOinc$, 
 and assume that $\MR(\alpha)$ and $\MR(\beta)$ both belong to a complexity class 
$C\in\{ \LOGSPACE,\NL\}$. 
Then $\MR(\alpha\vee\beta)$ is in $C$.
\end{lem}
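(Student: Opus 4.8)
The plan is to reduce $\MR(\alpha\vee\beta)$ to the two problems $\MR(\alpha)$ and $\MR(\beta)$ in a way that respects the resource bounds of the class $C$. The key structural fact is Lemma~\ref{lem:maxsub}, which gives $\maxsub(\mA,X,\alpha\vee\beta)=\maxsub(\mA,X,\alpha)\cup\maxsub(\mA,X,\beta)$. Hence for a given input $(\mA,X,s)$ with $s\in X$, we have $s\in\maxsub(\mA,X,\alpha\vee\beta)$ if and only if $s\in\maxsub(\mA,X,\alpha)$ or $s\in\maxsub(\mA,X,\beta)$. So the algorithm for $\MR(\alpha\vee\beta)$ is: run the algorithm for $\MR(\alpha)$ on $(\mA,X,s)$, accept if it accepts; otherwise run the algorithm for $\MR(\beta)$ on $(\mA,X,s)$ and return its answer.

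First I would note that the input transformation is trivial here: the same instance $(\mA,X,s)$ is fed to both subroutines, so no actual reduction/rewriting of the input is needed (this is an easy case of a disjunctive truth-table reduction with identity queries). What remains is to check that the classes $\LOGSPACE$ and $\NL$ are closed under this kind of composition. For $\NL$ this is immediate: $\NL$ is closed under union (indeed under $\leqlogm$-reductions and under nondeterministic composition), so a machine that nondeterministically guesses which disjunct to verify and then simulates the corresponding $\NL$ machine witnesses $\MR(\alpha\vee\beta)\in\NL$; alternatively one simply invokes closure of $\NL$ under union. For $\LOGSPACE$, one runs the $\LOGSPACE$ machine for $\MR(\alpha)$; if it rejects, one reuses the same $O(\log n)$ work tape to run the $\LOGSPACE$ machine for $\MR(\beta)$, and accepts iff either accepts. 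Sequential composition of two logspace computations on the same input stays in logspace, so $\MR(\alpha\vee\beta)\in\LOGSPACE$.

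The main (and only) subtlety worth spelling out is the appeal to Lemma~\ref{lem:maxsub} together with the fact that $\maxsub$ is well-defined, which rests on the union closure property (Proposition~\ref{prop:uc}); both are already established in the excerpt, so this step is routine. I would present the proof in two or three sentences: cite Lemma~\ref{lem:maxsub} for the set equality, deduce the membership equivalence $s\in\maxsub(\mA,X,\alpha\vee\beta)\iff s\in\maxsub(\mA,X,\alpha)\text{ or }s\in\maxsub(\mA,X,\beta)$, and then invoke closure of $C$ under (disjunctive) composition with identity queries. There is essentially no obstacle; the statement is flagged as ``an easy corollary'' precisely because Lemma~\ref{lem:maxsub} does all the real work, and the only thing to be careful about is phrasing the closure argument uniformly for both $\LOGSPACE$ and $\NL$ so that a single sentence covers the parameter $C\in\{\LOGSPACE,\NL\}$.
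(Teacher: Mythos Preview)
Your proposal is correct and matches the paper's intended argument: the paper gives no explicit proof and simply presents Lemma~\ref{prop:dis} as ``an easy corollary'' of Lemma~\ref{lem:maxsub}, which is exactly the key fact you invoke. Your reduction to the disjunction of two $\MR$ instances on the same input, together with closure of $\LOGSPACE$ and $\NL$ under union, is precisely what the paper has in mind.
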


The maximal subteam problem for a single inclusion atom $\tuple x\sub \tuple y$ can be naturally represented using directed graphs. In this representation each assignment forms a vertex, and an assignment $s$ has an outgoing edge to another assignment $s'$ if $s(\tuple x)=s'(\tuple y)$. Over finite teams an assignment then belongs to the maximal subteam for $\tuple x\sub \tuple y$ if and only if it is connected to a cycle.\footnote{We are grateful to Phokion Kolaitis, who pointed out 
this fact to the second author in a private discussion 2016.}

\begin{lem}\label{lem:incgraph}
Let $\mA$ be a model, $X$ a finite team, $\tuple x$ and $\tuple y$ two tuples of the same length from $\dom{X}$, $s$ an assignment of $X$, and  $\alpha$ a first-order formula. Let $G=(X,E)$ be a 
directed
graph where $(s,s')\in E$ iff $s(\tuple x)=s'(\tuple y)$ and $\mA\models_{\{s,s'\}} \alpha$.
Then 
\begin{enumerate}[(a)]
\item $s\in \maxsub(\mA,X,\tuple x\sub \tuple y\wedge \alpha)   \iff G\text{ contains a path from $s$ to a cycle }$,
\item $s\in \maxsub(\mA,X,\tuple x\sub \tuple y\wedge \tuple y \sub \tuple x \wedge \alpha)   \iff G\text{ contains a path from one cycle to another via }s$\end{enumerate}
 \end{lem}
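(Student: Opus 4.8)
The plan is to characterize $\maxsub(\mA,X,\phi)$ directly in terms of the graph $G=(X,E)$ in each of the two cases, and then prove the two equivalences by a mutual-inclusion argument. First I would set up notation: call a vertex $s$ \emph{good} if it satisfies the right-hand side (for (a), if it reaches a cycle in $G$; for (b), if it lies on a path from one $G$-cycle to another $G$-cycle), and let $Y$ be the set of all good vertices. The core claim is that $Y=\maxsub(\mA,X,\phi)$, and by the union-closure characterization of $\maxsub$ (the union of all subteams satisfying $\phi$) it suffices to show (i) $\mA\models_Y\phi$, and (ii) every subteam $Z\sub X$ with $\mA\models_Z\phi$ consists of good vertices.

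For case (a), $\phi = \tuple x\sub\tuple y\wedge\alpha$. For (i): in $Y$, every vertex reaches a cycle, so pick for each $s\in Y$ a successor witness $s'\in Y$ (walking one step along the path toward the cycle, or around the cycle); since $(s,s')\in E$ we get $s(\tuple x)=s'(\tuple y)$ and $\mA\models_{\{s,s'\}}\alpha$, and this establishes both $\mA\models_Y\tuple x\sub\tuple y$ (the inclusion atom is witnessed) and $\mA\models_Y\alpha$ (flatness, Proposition~\ref{prop:flatness}, applied pairwise — note $\alpha$ is first-order so by Corollary~\ref{cor:dc} satisfaction on all pairs gives satisfaction on singletons and hence on $Y$; one should be slightly careful that $\alpha$ might involve only one of $s,s'$, but $\mA\models_{\{s,s'\}}\alpha$ for the chosen witness, together with flatness, covers every $s\in Y$). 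For (ii): if $\mA\models_Z\phi$ then for each $s\in Z$ there is $s'\in Z$ with $s(\tuple x)=s'(\tuple y)$ and, since $\mA\models_Z\alpha$ implies $\mA\models_{\{s,s'\}}\alpha$ by downward closure, $(s,s')\in E$; so $G$ restricted to $Z$ has no sinks, and since $Z$ is finite, following the out-edges from any $s\in Z$ must eventually repeat a vertex, producing a cycle reachable from $s$. Hence $s$ is good.

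For case (b), $\phi = \tuple x\sub\tuple y\wedge\tuple y\sub\tuple x\wedge\alpha$; the argument is symmetric. For (ii): if $\mA\models_Z\phi$, then in $G\!\restriction\!Z$ every vertex has both an out-edge (from $\tuple x\sub\tuple y$) and an in-edge (from $\tuple y\sub\tuple x$), again using downward closure of $\alpha$; finiteness then forces $s$ to lie on a path from a cycle (found by following in-edges backward) to a cycle (following out-edges forward), so $s$ is good. For (i): given the set $Y$ of vertices lying on a cycle-to-cycle path, one builds the required witnesses by moving one step forward for $\tuple x\sub\tuple y$ and one step backward for $\tuple y\sub\tuple x$, staying within $Y$ because any forward/backward neighbor on such a path (or on the terminal cycles) again lies on a cycle-to-cycle path; pairwise satisfaction of $\alpha$ on the edges used, plus flatness, gives $\mA\models_Y\alpha$.

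The main obstacle I anticipate is the careful bookkeeping in part (i): one must exhibit, for the entire team $Y$ simultaneously, a witness function for each inclusion atom whose values stay inside $Y$, and check that the edges used to witness the inclusion atoms are exactly the edges of $G$ (so that $\alpha$ holds on them). In particular, in case (b) one has to verify that the forward-successor used to witness $\tuple x\sub\tuple y$ and the backward-predecessor used to witness $\tuple y\sub\tuple x$ are both still on some cycle-to-cycle path — this is immediate on the cycles themselves and along the connecting path, but the two terminal cycles and the interior path have to be handled together, which is where a clean definition of "good vertex" (e.g. "$s$ is reachable from a cycle and co-reachable to a cycle in $G$") pays off and makes closure under forward/backward steps transparent.
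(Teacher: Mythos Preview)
Your proposal is correct and follows essentially the same approach as the paper: use finiteness to force a repeated vertex (hence a cycle) when following out-edges in $G$, and use the edge definition together with flatness/downward closure of $\alpha$ to move between satisfaction of $\phi$ and the graph-theoretic condition. Two minor organizational differences are worth noting. First, for the converse of (a) the paper simply takes $Y$ to be the vertices on one specific path-plus-cycle witnessing that $s$ is good and checks $\mA\models_Y\phi$, whereas you take $Y$ to be the set of \emph{all} good vertices and thereby identify $\maxsub$ exactly; your version is slightly stronger but not needed for the statement. Second, for (b) the paper derives the result from (a) by passing to the reverse graph $G'=(X,E^{-1})$ and observing that an $E^{-1}$-path from $s$ to an $E^{-1}$-cycle is an $E$-path from an $E$-cycle to $s$; your direct symmetric argument (every vertex in a satisfying $Z$ has both an in-edge and an out-edge, so finiteness gives cycles on both sides) is equivalent but spelled out rather than reduced to (a).
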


\begin{proof}
Assume for the first statement that $s\in \maxsub(\mA,X,\tuple x\sub \tuple y\wedge \alpha) $. Then there is a subteam $Y\sub X$ such that $s\in Y$ and $\mA \models _Y  \tuple x\sub \tuple y\wedge \alpha$. 
Thus for each $s'\in Y$ there exists $s''\in Y$ such that $s''=s'(\tuple x)$. Moreover, $\mA\models_{\{s',s''\}} \alpha$, whence $(s',s'')\in E$.
In particular there is a non-ending path in $G$ starting from $s$. Since $X$ is finite, this path necessarily ends in a cycle.
Conversely, assume $G$ contains a path from $s$ to a cycle. Then $\mA \models _Y  \tuple x\sub \tuple y\wedge \alpha$ where $Y$ consists of all assignments in the path and cycle. Hence, $s\in \maxsub(\mA,X,\tuple x\sub \tuple y\wedge \alpha)$.

For second statement note that, by the argument above,
$s\in \maxsub(\mA,X,\tuple y\sub \tuple x\wedge \alpha)$ if and only if $G'=(X,E^{-1})$ contains a path from $s$ to a cycle. But clearly an $E^{-1}$-path from $s$ to an $E^{-1}$-cycle is an $E$-path from an $E$-cycle to $s$. 
\end{proof}


\subsection{Safety Game}\label{sect:game}
In this section we present  a version of a safety game for the maximal subteam problem of inclusion logic. Our presentation is also related to the safety games for inclusion logic  examined in \cite{Gradel16}. We present a safety game for 
 a quadruple $(\mA,X,s,\phi)$, written $\safety(\mA,X,s,\phi)$, where $s$ is an assignment of a team $X$, and $\phi$ is a quantifier-free formula. 
 The main result of the section shows that the maximal subteam problem $\MR(\phi)$ over $X$ and $s$ can be characterized in terms of this game. 

We assume that the reader is familiar with basic terminology on trees. We associate each quantifier-free $\phi\in\FOinc$ with a labeled rooted tree $T_{\phi}$ such that the root of the tree is labeled by $\phi$ and each node labeled by $\psi_0\vee\psi_1$ or $\psi_0\wedge \psi_1$ has two children labeled by $\psi_0$ and $\psi_1$.  Notice that two different nodes may have the same label. The safety game for $(\mA,X,s,\phi)$ can now be interpreted as a pebble game where assignments of a team $X$ form a stack of pebbles of which one at a time is placed on a node of $T_{\phi}$. Legal moves of the game then consist of moving the pebble up and down through the tree, removing the pebble from a leaf, and placing a new pebble on a leaf. The starting position is to have $s$ placed on the root, and the winning condition for Player I is to arrive at a position where the game terminates. If no such position is ever reached, Player II wins.

\begin{defi}[Safety Game]
Let $\phi\in \FOinc$ be  quantifier-free, and let $s_0$ be an assignment in a team $X$ of a model $\mA$. The \emph{safety game} $\safety(\mA,X,s_0,\phi)$ has two players I and II, and the game moves consist of \emph{positions} $(s,n)$ and $(n,s)$ where $s\in X$ and $n$ is a node of $T_{\phi}$.  The game starts with the position $(s_0,r)$, where $r$ is the root, and given a position $(s,n)$, the game proceeds as follows:
\begin{enumerate}[(i)]
\item If $n$ is  labeled by a conjunction, then Player I selects a position $(s,n')$ where $n'$ is a child of $n$.
\item If $n$ is  labeled by a disjunction, then Player II selects a position $(s,n')$ where $n'$ is a child of $n$.
\item If $n$ is labeled by a literal 
$\psi$, then the game ends if 
$\mA\not\models_s \psi$.
 Otherwise, Player I selects a position $(s,n')$ such that $n$ is a descendant of $n'$.
\item If $n$ is labeled by $\tuple x \sub \tuple y$, then the game ends if there is no $s'\in X$ such that $s(\tuple x )=s'(\tuple y)$. Otherwise, Player I either
\begin{itemize}
\item selects a position $(s,n')$ such that $n$ is a descendant of $n'$, or
\item selects the position $(n,s)$.
\end{itemize}
\end{enumerate}
Given a position $(n,s)$, the game proceeds as follows:
\begin{enumerate}[(i)]\setcounter{enumi}{4}
\item Player II selects a position $(s',n)$ such that $s(\tuple x )=s'(\tuple y)$.
\end{enumerate}
Player I wins  if the game ends after a finite number of moves by the players. Otherwise, Player II wins.
\end{defi}
A strategy for a Player is a mapping $\pi$ on positions such that
\begin{itemize}
\item $\pi((s,n))\in \{(s,n')\mid n'\text{ is a child of }n\}$, for a non-leaf $n$,
\item $\pi((s,n))\in \{(s,n')\mid n\text{ is a descendant of }n'\}$, for a leaf $n$ labeled by a literal, 
\item $\pi((s,n))\in \{(s,n')\mid n\text{ is a descendant of }n'\}\cup\{(n,s)\}$, for a leaf $n$  labeled by $\tuple x \sub \tuple y$.
\item $\pi((n,s))\in \{(s',n)\mid s'\in X, s(\tuple x)=s'(\tuple y)\}$, for a leaf $n$ labeled by $\tuple x \sub \tuple y$.
\end{itemize}
Player $A\in \{\text{I,II}\}$ has a winning strategy for $\safety(\mA,X,s_0,\phi)$ if there is a strategy $\pi_A$ such that $A$ wins every game that she plays according to $\pi_A$. That is, $A$ wins any game where she selects the position $\pi_A(p)$ on her moves on $p$.
  
Note that if $\phi$ does not contain any symbols from the underlying vocabulary, the outcome of  $\safety(\mA,X,s,\phi)$ is independent of $\mA$, and thus we write $\safety(X,s,\phi)$ instead.

Next we show that the safety game above gives rise to a characterization of the maximal subteam problem. 

\begin{thm}
\label{thm:char}
Let $\phi\in \FOinc$ be quantifier-free, and let $s$ be an assignment in a team $X$ of a model $\mA$. Then $s\in \maxsub(\mA,X,\phi)$ iff Player II has a winning strategy in $\safety(\mA,X,s,\phi)$.
\end{thm}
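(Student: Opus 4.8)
The plan is to prove both directions by induction on the structure of the quantifier-free formula $\phi$, simultaneously with a strengthening that tracks, for each node $n$ of $T_\phi$ and each assignment $s\in X$, when Player~II wins the game started at position $(s,n)$ (rather than only at the root). More precisely, for a node $n$ labelled by a subformula $\psi$ I would let $W_n \subseteq X$ be the set of $s$ for which Player~II has a winning strategy in the game started from $(s,n)$, and I would prove that $W_n = \maxsub(\mA,X\restriction\fr(\psi),\psi)$ read back as a subset of $X$ via Locality (Proposition~\ref{prop:locality}); the theorem is then the special case $n=r$. Working node-by-node lets the disjunction and conjunction cases reduce cleanly to the children.

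First I would handle the leaves. If $n$ is labelled by a literal $\psi$, the game started at $(s,n)$ ends immediately and Player~I wins iff $\mA\not\models_s\psi$; so Player~II wins iff $\mA\models_s\psi$, and $\{s : \mA\models_s\psi\}$ is exactly $\maxsub(\mA,X,\psi)$ by flatness (Corollary~\ref{cor:dc} / Proposition~\ref{prop:flatness}). If $n$ is labelled by an inclusion atom $\tuple x\sub\tuple y$, this is the crux and where I expect the real work. Here Player~II wins from $(s,n)$ iff $s$ lies in the ``core'' of the inclusion-atom graph on $X$, namely the largest $Y\subseteq X$ with the property that every $s\in Y$ has some $s'\in Y$ with $s(\tuple x)=s'(\tuple y)$ — equivalently, by the graph description preceding Lemma~\ref{lem:incgraph}, $s$ lies on a path to a cycle. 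The forward move by Player~I in rule~(iv) that jumps to $(n,s)$, followed by Player~II's response in rule~(v) choosing some $s'$ with $s(\tuple x)=s'(\tuple y)$ and returning to $(s',n)$, is precisely a step along an outgoing edge; so Player~II survives forever iff she can keep finding successors, i.e.\ iff $s$ reaches a cycle. (Player~I's alternative of moving ``up'' out of a leaf only matters once the leaf sits inside a larger tree, and is absorbed by the induction on interior nodes.) I would phrase this carefully: Player~II's winning region at an inclusion leaf equals $\maxsub(\mA,X,\tuple x\sub\tuple y)$, which is the union of all subteams satisfying the atom.

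For interior nodes: if $n$ is labelled $\psi_0\wedge\psi_1$ with children $n_0,n_1$, then by rule~(i) Player~I picks a child, so Player~II wins from $(s,n)$ iff she wins from both $(s,n_0)$ and $(s,n_1)$; hence $W_n = W_{n_0}\cap W_{n_1}$, which by the induction hypothesis and the fact that $\maxsub$ commutes with conjunction in the obvious way ($\maxsub(\mA,X,\psi_0\wedge\psi_1)$ is the largest common satisfying subteam, hence $=W_{n_0}\cap W_{n_1}$) gives the claim. If $n$ is labelled $\psi_0\vee\psi_1$, then by rule~(ii) Player~II picks a child and so wins from $(s,n)$ iff she wins from $(s,n_0)$ or from $(s,n_1)$; hence $W_n = W_{n_0}\cup W_{n_1}$, and this matches $\maxsub(\mA,X,\psi_0\vee\psi_1) = \maxsub(\mA,X,\psi_0)\cup\maxsub(\mA,X,\psi_1)$ by Lemma~\ref{lem:maxsub}. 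The one subtlety to get right is that a pebble at a leaf can be moved back up to \emph{any} ancestor (``$n$ is a descendant of $n'$''), not just the parent; I would argue that this extra freedom for Player~I never helps her — moving up and re-descending can only return the pebble (with the same assignment) to a node in the same winning/losing status — so the naive recursion on $W_n$ is still correct. I would make this rigorous by first showing that from any position $(s,n)$ the value of the game depends only on $(s,n)$ (a standard determinacy/positional argument, since the game is a safety game on a finite position set), so that the "move up to an ancestor" option is just a move to another position whose value is already captured by the same $W$-function.

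The main obstacle I anticipate is the inclusion-atom leaf together with the up-moves: one must show that Player~II's ability to survive locally at an inclusion leaf genuinely corresponds to membership in the global maximal subteam, and that the interplay of up-moves across the whole tree does not let Player~I escape a losing region or trap Player~II. I would isolate this by proving a single lemma: for every position $p=(s,n)$, Player~II wins from $p$ iff $s\in\maxsub(\mA,X,\psi_n)$ (where $\psi_n$ is the label of $n$), by induction on the height of $n$ in $T_\phi$ — leaves as the base case and $\wedge/\vee$ as the inductive step — using for the inclusion-leaf base case exactly the cycle-reachability analysis above. Then Theorem~\ref{thm:char} is the instance $p=(s,r)$.
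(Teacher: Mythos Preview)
Your node-local invariant $W_n=\maxsub(\mA,X,\psi_n)$ is false, and the error shows up precisely at the conjunction step together with the up-moves you dismiss. First, $\maxsub$ does \emph{not} commute with conjunction: inclusion logic is not downward closed, so $\maxsub(\mA,X,\psi_0)\cap\maxsub(\mA,X,\psi_1)$ need not satisfy either conjunct. Take $X=\{s_1,s_2\}$ with $s_1\colon(x,y)\mapsto(1,2)$, $s_2\colon(x,y)\mapsto(2,1)$, and let $\alpha$ be a literal with $\mA\models_{s_1}\alpha$ but $\mA\not\models_{s_2}\alpha$. Then $\maxsub(X,x\sub y)=X$ and $\maxsub(X,\alpha)=\{s_1\}$, yet $\{s_1\}\not\models x\sub y$, so $\maxsub(X,(x\sub y)\wedge\alpha)=\emptyset\ne\{s_1\}$. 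Second, the up-move is \emph{not} harmless once the assignment has changed at an inclusion leaf. In the same example, from $(s_1,n_0)$ (the $x\sub y$-leaf) Player~I plays $(n_0,s_1)$, Player~II is forced to respond with $(s_2,n_0)$, and now Player~I moves up to the root \emph{carrying $s_2$} and descends to the $\alpha$-leaf, where the game ends. Hence $s_1\notin W_{n_0}$ although $s_1\in\maxsub(X,x\sub y)$. This interaction---change the assignment at one inclusion leaf, ascend, and test a sibling conjunct---is exactly how the game encodes the iteration that makes a conjunction of two inclusion atoms $\PTIME$-hard (Theorem~\ref{thm:trikotomia}) while each atom alone is in $\NL$; your compositional analysis cannot see it. Your positional-determinacy remark is fine but beside the point: the value at $(s,n)$ is determined by $(s,n)$ \emph{and the whole game graph}, not by the subtree at $n$.

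The paper abandons any node-local characterisation. For ``only if'' it defines teams $X_n$ \emph{top-down}: $X_r=\maxsub(\mA,X,\phi)$, and for a child $n$ of $n'$ labelled $\psi$ one sets $X_n=\maxsub(\mA,X_{n'},\psi)$, i.e., the maximal subteam \emph{of the parent's team}. Then at a conjunction $X_{n_0}=X_{n_1}=X_{n'}$ (since $X_{n'}$ already satisfies both conjuncts), and in general $X_\ell\subseteq X_{n'}$ for every ancestor $n'$ of $\ell$, so up-moves preserve the invariant $s\in X_{\text{current node}}$. For ``if'' it takes a winning strategy $\pi$ for Player~II and lets $X_n$ be the set of $s$ such that $(s,n)$ occurs in some play where Player~II follows $\pi$; one then checks directly that each $X_n$ satisfies its label and that the $X_n$'s cohere across conjunctions and disjunctions. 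Neither direction is a structural induction on $\phi$.
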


\begin{proof}
For the ``only-if'' direction, we define top-down recursively for each node $n\in T_{\phi}$ a team $X_n$ such that
\begin{itemize}
\item $X_r:=\maxsub(\mA,X,\phi)$ for the root $r$,
\item $X_n:= \maxsub(\mA,X_{n'},\psi)$, for a child $n$ of a node $n'$ where $n$ is labeled by $\psi$.
\end{itemize}
It follows that $X_n\models \psi$ for $n$ with label $\psi$; $X_n=X_{n_0}=X_{n_1}$ for conjunction-labeled $n$ with children $n_0,n_1$; and $X_n=X_{n_0}\cup X_{n_1}$ for disjunction-labeled $n$ with children $n_0,n_1$. The strategy of Player II is now the following. If $n$ is labeled by disjunction, then $(s,n)$ is mapped to some $(s,n_i)$ where $n_i$ is a child of $n$ such that $s\in X_{n_i}$, and if $n$ is labeled by $\tuple x\sub \tuple y$, then $(n,s)$ is mapped to some $(s',n)$ such that $s(\tuple x)=s'(\tuple y)$ and $s'\in X_n$. We leave it to the reader to check that this strategy is well-defined and winning.

For the ``if'' direction, assume Player II has a winning strategy $\pi$. For a node $n$ of $T_\phi$, we let $X_n$ be the set of all  assignments $s\in X$ for which there exists a game where Player II plays according to her winning strategy and position $(s,n)$ is played at some point of the game. Consider any assignment $s$ from $X_n$ for a node $n$ labeled by $\tuple x\sub \tuple y$. This means there is a game where position $(s,n)$ is played, and thus also a game where $(n,s)$, and furthermore $\pi((n,s))=(n,s')$ is played. Consequently, an assignment $s'\in X_n$ exists such that $s(\tuple x)=s'(\tuple y)$. 
By analogous reasoning we obtain that $X_n\models \psi$ for all other types of nodes $n$ with label $\psi$, too. 
Furthermore, $X_n=X_{n_0}=X_{n_1}$ for conjunction-labeled $n$ with children $n_0,n_1$, and $X_n=X_{n_0}\cup X_{n_1}$ for disjunction-labeled $n$ with children $n_0,n_1$. In particular, $X_r\models \phi$ and $s\in X_r$, and hence $s\in \maxsub(X,\phi)$.
\end{proof}

Given that $X$ is finite, it makes sense to consider bounded length restrictions of the safety game. 
We let $\safety_k(\mA,X,s,\phi)$ denote the version of $\safety(\mA,X,s,\phi)$ in which, starting position $(s,r)$ excluded, positions of the form $(s,n)$, i.e., pairs whose left element is an assignment and right element a node, are played at most $k$ times. Player I wins $\safety_k(\mA,X,s,\phi)$ if the game terminates before such assignment-node pairs have been played $k$ times. Otherwise, if exactly $k$ plays of such nodes appear, Player II wins. The next lemma will be useful 
later.

\begin{lem}\label{apulemma}
Let $\phi\in \FOinc$ be quantifier-free and such that $T_\phi$ has $k$ nodes, and let $s$ be an assignment of a team $X$ that is of size $l$. Then Player II has a winning strategy for $\safety(\mA,X,s,\phi)$ iff she has a winning strategy for $\safety_{k\cdot l}(\mA,X,s,\phi)$.
\end{lem}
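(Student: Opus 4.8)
The plan is to prove the two directions separately. The direction ``winning strategy for $\safety_{k\cdot l}$ implies winning strategy for $\safety$'' is the easy one: a winning strategy in the length-bounded game cannot be used directly, so instead I would argue contrapositively via Theorem~\ref{thm:char}, or better, argue that Player II's winning strategy in the unbounded game restricts to a winning strategy in the bounded game and vice versa. Concretely, let me first handle ``$\safety$ $\Rightarrow$ $\safety_{k\cdot l}$'': if Player~II wins $\safety(\mA,X,s,\phi)$, she wins $\safety_{k\cdot l}(\mA,X,s,\phi)$ simply because any play following her winning strategy never terminates, so in particular it survives until $k\cdot l$ assignment-node positions have been played; that is exactly Player~II's winning condition in the bounded game. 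This direction needs essentially no work.

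For the converse, ``$\safety_{k\cdot l}$ $\Rightarrow$ $\safety$'', the key idea is a pigeonhole/loop argument. Suppose Player~II has a winning strategy $\pi$ in $\safety_{k\cdot l}(\mA,X,s,\phi)$. I claim $\pi$ (suitably iterated) is already winning in the unbounded game. The crucial observation is that a position $(s',n)$ is determined by an assignment $s'\in X$ (at most $l$ choices) together with a node $n$ of $T_\phi$ (at most $k$ choices), so there are at most $k\cdot l$ distinct positions of the form $(\text{assignment},\text{node})$. Hence in any play of length exceeding $k\cdot l$ such positions, some position $(s',n)$ must repeat. The plan is then: take a play of the unbounded game in which Player~II follows $\pi$ (extended periodically — see below), and show it is infinite, i.e.\ Player~I can never force termination. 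If Player~I could force termination, consider the first point at which $k\cdot l$ assignment-node positions have been played without termination; by the pigeonhole principle some position $(s',n)$ recurred, and since both players' legal moves and $\pi$'s prescriptions depend only on the current position (strategies are defined as mappings on positions in the text), the segment of play between the two occurrences of $(s',n)$ can be spliced out or repeated arbitrarily, producing a play consistent with $\pi$ that reaches $k\cdot l$ assignment-node positions — contradicting that $\pi$ is winning in $\safety_{k\cdot l}$ only if... wait, I need to be careful about the direction of the bounded winning condition.

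Let me restate the converse argument cleanly, since the bounded winning condition favors Player~II. Assume Player~II has a winning strategy $\pi$ in $\safety_{k\cdot l}$; I want a winning strategy in $\safety$. Define Player~II's strategy in the unbounded game to follow $\pi$ until $k\cdot l$ assignment-node positions have been played; at that point, by the pigeonhole argument, some position $(s',n)$ has occurred twice, and the play between the two occurrences forms a cycle in position-space along which Player~II followed $\pi$; from then on Player~II simply ``resets'' to the earlier occurrence and repeats the same responses, continuing this looping behaviour forever. Since Player~I's options and the resulting positions depend only on the current position, this yields a legal infinite play. The only thing to verify is that the game genuinely never terminates: termination can only occur at a literal-labeled node with $\mA\not\models_s\psi$ or at an inclusion-atom node with no witness; but such a terminating position, had it been reachable while following $\pi$, would already have appeared within the first $k\cdot l$ assignment-node positions (by the same pigeonhole bound on the number of distinct positions), contradicting that $\pi$ wins $\safety_{k\cdot l}$. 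Hence the unbounded play is infinite and Player~II wins $\safety(\mA,X,s,\phi)$.

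The main obstacle I anticipate is making the ``splicing/looping'' argument fully rigorous: one must check that strategies, as defined in the paper (mappings on positions, not on histories), are genuinely memoryless, so that repeating a position-cycle produces a play still consistent with $\pi$, and that Player~I's choices along the repeated segment remain legal — this is immediate from the definitions but should be spelled out. A secondary subtlety is bookkeeping the count: only positions $(s,n)$ (assignment on the left) count toward the bound $k\cdot l$, while positions $(n,s)$ do not, but between consecutive $(s,n)$-type positions there is at most one intervening $(n,s)$-type position (by rules (iv)--(v)), so infiniteness in the $(s,n)$-count is equivalent to infiniteness of the play; this equivalence should be noted explicitly.
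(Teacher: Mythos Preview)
Your proposal is correct and follows essentially the same approach as the paper: the nontrivial direction (bounded $\Rightarrow$ unbounded) is handled via the pigeonhole bound of $k\cdot l$ distinct assignment-node positions together with a looping/splicing argument exploiting that strategies are positional. The paper's proof is terser---it treats only the nontrivial direction and leaves implicit exactly the points you flag (memorylessness of strategies, the count including the root giving $k\cdot l+1$ occurrences and hence a repeat, and the irrelevance of the interleaved $(n,s)$ positions)---so your write-up is, if anything, more complete.
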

\begin{proof}
By the end of $\safety_{k\cdot l}(\mA,X,s,\phi)$, positions of the form $(s,n)$, the root position included, have occurred $k\cdot l+1$ many times, i.e., some position $(s,n)$ has occurred twice. Every time such a repetition is encountered, we may assume that we continue the game from the first occurrence of $(s,n)$. Since the strategy of Player II is safe for $k\cdot l$ assignment-node moves, we conclude that $\safety(\mA,X,s,\phi)$ never terminates. Hence, Player II wins.
\end{proof}


\section{Complexity of Maximal Subteam Membership}
Next we examine the computational complexity of maximal subteam membership. First in Section \ref{sect:arb} we investigate this problem over arbitrary teams, and then in Section \ref{sect:arb}
 we restrict attention to inputs in which the inclusion atoms refer to a key. In Section \ref{sect:CQA} we discuss the implications of our results to consistent query answering.
\subsection{Arbitrary Teams}\label{sect:arb}
 We give a complete characterization of the maximal subteam problem for arbitrary conjunctions of unary inclusion atoms. A \emph{unary} inclusion atom is an atom of the form $x \sub y$ where $x$ and $y$ are single variables. The characterization is given in terms of inclusion graphs.

\begin{defi}
Let $\Sigma$ be a set of unary inclusion atoms over variables in $V$. Then the \emph{inclusion graph} of $\Sigma$ is defined as $G_{\Sigma}=(V,E)$ such that $(x,y)\in E$ iff $x \neq y$ and $x \sub y$ appears in $\Sigma$.
\end{defi}

We will now prove the following theorem.

\begin{thm}\label{thm:trikotomia}
Let $\Sigma$ be a finite set of unary inclusion atoms, and let $\phi$ be the conjunction of all atoms in $\Sigma$. Then $\MR(\phi)$ is
\begin{enumerate}[(a)]
\item trivially true if $G_{\Sigma}$ has no edges,
\item $\NL$-complete if $G_{\Sigma}$ has an edge $(x,y)$ and no other edges except possibly for its inverse $(y,x)$,
\item\label{kolmas}  $\PTIME$-complete otherwise. 
\end{enumerate}
\end{thm}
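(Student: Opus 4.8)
The theorem has three cases and I would handle them in increasing order of difficulty. Case (a) is immediate: if $G_\Sigma$ has no edges, then every atom in $\Sigma$ is of the form $x \sub x$ (or $\Sigma$ is empty), which is satisfied by every team, so $\maxsub(\mA,X,\phi) = X$ and $\MR(\phi)$ is trivially true. For case (b), membership in $\NL$ follows from Lemma~\ref{lem:incgraph}: when $\phi$ is $x \sub y$ (with $\alpha = \top$), deciding $s \in \maxsub(\mA,X,x\sub y)$ amounts to checking reachability of a cycle from $s$ in the directed graph $G$ on $X$ whose edges record $s(x) = s'(y)$; this graph is $\LOGSPACE$-constructible from the input and ``reaches a cycle'' is an $\NL$ predicate (guess the path, detect a repeated vertex, all in logarithmic space). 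The symmetric case $x \sub y \wedge y \sub x$ is handled by part (b) of the same lemma — ``path from one cycle to another via $s$'' is again an $\NL$ condition, and by closure of $\NL$ under complementation (Immerman–Szelepcsényi) if needed. For $\NL$-hardness I would reduce from (directed) reachability / the cycle-detection problem: given a digraph and a target, encode each vertex as an assignment so that the induced $G$ mimics the input graph, adding a self-loop only at the target so that ``reaches a cycle'' coincides with ``reaches the target''.

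\textbf{Case (c), the main work.} For the upper bound, $\MR(\phi) \in \PTIME$ is already given by Lemma~\ref{lem:MRinP}, so the entire content of (c) is $\PTIME$-hardness whenever $G_\Sigma$ is neither edgeless nor a single edge (with possibly its inverse). The plan is to reduce from a canonical $\PTIME$-complete problem — I would use the monotone circuit value problem or, more likely in this combinatorial setting, \textsc{HornSat} / path systems / alternating graph reachability (AGAP), since the maximal-subteam fixpoint has an inherently conjunctive–disjunctive flavour that matches alternating reachability. The first step is a structural case analysis of what ``otherwise'' means: $G_\Sigma$ must contain either (i) two edges $(x,y)$ and $(x',y')$ on distinct vertex pairs that are not mutual inverses — in particular a directed path $x \sub y \wedge y \sub z$ with $x,y,z$ distinct, or two disjoint edges, or a ``fork'' $x\sub y \wedge x \sub z$ or ``join'' $y \sub x \wedge z \sub x$. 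I would isolate one or two canonical ``hard gadget'' shapes (most plausibly the length-two path $x \sub y \wedge y \sub z$, and separately a fork/join), prove $\PTIME$-hardness for each, and then argue that every non-trivial $G_\Sigma$ contains one of these as a sub-pattern in a way that lets me pad the remaining variables harmlessly (assign them constant values so the extra atoms are trivially satisfied on the constructed team, hence do not restrict the maximal subteam).

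\textbf{The hard gadget.} For $x \sub y \wedge y \sub z$ the maximal subteam over a team $X$ is the set of assignments $s$ such that, reading $X$ as a directed bipartite-like structure where from $s$ one must find a witness for the $x\sub y$ demand and then that witness must in turn satisfy its $y \sub z$ demand, a certain simultaneous fixpoint is nonempty at $s$. I expect this fixpoint to encode exactly an \textsc{AGAP}-style alternating reachability or a Horn-clause forward-chaining computation: each assignment that needs a witness is an ``AND'' node (it must find a value appearing as a $y$-value somewhere), but the choice of which assignment supplies that value is an ``OR''. Concretely, I would engineer $X$ from a Horn formula so that an assignment corresponding to a clause survives iff all its body atoms are derivable, and an assignment corresponding to a variable survives iff some clause deriving it survives — giving a $\PTIME$-hard instance. \textbf{I expect this encoding — getting the witness structure of the two chained inclusion atoms to simulate Horn forward-chaining (or alternating reachability) precisely, including the base case and making sure no spurious cycles let extra assignments survive — to be the main obstacle;} the safety game of Section~\ref{sect:game} (Theorem~\ref{thm:char}) is the right tool here, since it repackages membership in $\maxsub(\mA,X,\phi)$ as: Player~II can forever parry Player~I's demands, which is visibly an alternating/pebbling condition and should let me read off the reduction and its correctness cleanly. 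Finally I would verify the padding lemma: for any $G_\Sigma$ containing the gadget pattern, the variables outside the gadget can be set to fixed domain elements chosen so each remaining unary atom holds on the whole constructed team, so that $\maxsub$ for $\phi$ restricted to the gadget variables is unaffected — completing the reduction.
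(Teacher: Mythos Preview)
Your overall plan tracks the paper's: (a) is immediate; (b) goes via Lemma~\ref{lem:incgraph} and a reduction from directed reachability (the paper adds the edge $(b,a)$ to an acyclic input rather than a self-loop at $b$, but the idea is the same); (c) takes the upper bound from Lemma~\ref{lem:MRinP}, proves $\PTIME$-hardness for a few minimal gadget shapes, and then pads to cover general $\Sigma$. Your structural case split into path, fork, join, and two disjoint edges is exactly the paper's (Fig.~\ref{fig:graphs}).

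There is, however, a real gap in the padding step. Constant padding does \emph{not} make the extra atoms ``trivially satisfied'': if $G_\Sigma$ contains your gadget, say the fork $x\sub y\wedge x\sub z$, together with an extra edge $x\sub w$ to a non-gadget vertex $w$, then setting the $w$-column to a constant $c$ forces every $x$-value in any satisfying subteam to equal $c$, which destroys the reduction. The paper's fix is to make each non-gadget column a \emph{copy} of some gadget column, so that every atom in $\Sigma$ becomes equivalent to an atom among the gadget variables. For this to be sound you need the witnessing subteam on a positive instance to satisfy \emph{all} unary inclusion atoms between the gadget variables, not just the two in your chosen pattern --- and engineering the reduction so that this holds is the actual crux.

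This is also where the paper's argument is more economical than your sketch. Rather than separate reductions for path, fork, and join, Lemma~\ref{lem:eka} gives a \emph{single} MCVP encoding (auxiliary value $c_i$ for each AND gate $i$, with rows $(i_L,i,c_i)$, $(i_R,c_i,i)$, $(c_i,\top,\top)$; each OR gate $i$ contributes rows $(i_L,i,i)$ and $(i_R,i,i)$) such that whenever the circuit accepts, the witnessing subteam $Y$ satisfies every unary inclusion atom among $x,y,z$ simultaneously. That single property gives the forward direction for all three gadget shapes at once and is precisely what makes the column-copy padding correct for arbitrary $\Sigma$; the fourth pattern (two disjoint edges) is handled by duplicating the $z$-column into a fourth variable. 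Only the converse direction (existence of a satisfying subteam $\Rightarrow$ circuit accepts) is argued per gadget, and there it is a short chase using the circuit's own evaluation game rather than the safety game of Section~\ref{sect:game}. Your HornSat/AGAP instinct is reasonable, but whatever source problem you pick, the ``all atoms satisfied'' property of the positive-instance witness is the hinge your proposal does not yet secure.
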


The first statement above  follows from the observation that $\MR(\phi)$ is true for all inputs if $\phi$ is a conjunction of trivial inclusion atoms $x\sub x$. The second statement is shown by
 relating to graph reachability. Given a directed graph $G=(V,E)$ and two vertices $a$ and $b$, the problem REACH is to determine whether $G$ contains a path from $a$ to $b$. 
 This problem is a well-known complete problem for $\NL$, and it will also be applied later in Section \ref{sect:key} where the complexity of $\MR(x\sub y \wedge u\sub v)$ over teams with keys $y$ and $v$ is examined. 
\begin{lem}\label{thm:NLarb}
$\MR(x\sub y)$ and $\MR(x \sub y \wedge y \sub x)$
 are $\NL$-complete.
\end{lem}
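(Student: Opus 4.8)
The plan is to establish membership in $\NL$ and $\NL$-hardness separately, using the graph characterization from Lemma~\ref{lem:incgraph}. For membership, recall from Lemma~\ref{lem:incgraph}(a) that, taking $\alpha$ to be a trivial tautology so that the edge relation is simply $(s,s')\in E \iff s(x)=s'(y)$, an assignment $s$ lies in $\maxsub(\mA,X,x\sub y)$ iff $G=(X,E)$ contains a path from $s$ to a cycle. Deciding this is in $\NL$: a nondeterministic logspace machine guesses a path $s=s_0,s_1,s_2,\ldots$ one vertex at a time (storing only the current vertex and a step counter bounded by $|X|$), and accepts if it ever revisits a previously seen vertex — equivalently, since $X$ is finite, if it can take $|X|$ steps without getting stuck, which certifies that the path has entered a cycle. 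Each edge check $s_i(x)=s_{i+1}(y)$ is a comparison of two field values and is computable in logspace. For the two-atom formula $x\sub y \wedge y\sub x$, Lemma~\ref{lem:incgraph}(b) says $s\in\maxsub(\mA,X,x\sub y\wedge y\sub x)$ iff $G$ contains a path from one cycle to another passing through $s$; this is checked by two independent $\NL$ searches — one for a path from $s$ to a cycle in $G$, one for a path from $s$ to a cycle in the reverse graph $G^{-1}$ — and $\NL$ is closed under intersection (and under reversal of the edge relation, which is logspace computable).

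For $\NL$-hardness, the plan is to reduce from REACH. Given a directed graph $G=(V,E)$ with designated vertices $a,b$, I would build a team $X$ over domain $\{x,y\}$ (plus an auxiliary variable if convenient to carry vertex identifiers) encoding the edges of $G$, together with a gadget forcing $b$ (and only $b$, on the relevant path) onto a cycle. Concretely: for each edge $(u,v)\in E$ put an assignment $s_{u,v}$ with $s_{u,v}(x)$ coding $u$ and $s_{u,v}(y)$ coding $v$, so that in the inclusion graph on $X$ there is an edge from $s_{u,v}$ to $s_{v,w}$ exactly when $v$ is the head of the first and the tail of the second — i.e. the inclusion-graph structure mirrors the line-graph-style reachability in $G$. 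Add a single self-loop assignment at $b$ (an assignment $t$ with $t(x)=t(y)=$ code of $b$), which is a one-vertex cycle. Then $s_{a,\cdot}$ reaches a cycle in the inclusion graph iff $b$ is reachable from $a$ in $G$; the query assignment is any edge leaving $a$. The same construction works verbatim for $x\sub y\wedge y\sub x$ provided $G$ is first symmetrized — REACH remains $\NL$-complete even on graphs where the reachability witness can be taken in a graph whose edge set is closed under reversal only on the added cycle gadget — so more carefully I would instead symmetrize the whole instance and use $b$-reachability in the undirected sense, which is still $\NL$-complete (indeed it is $\NL$-complete already, not merely $\LOGSPACE$, as long as we keep directed REACH and observe the two-sided search condition is implied; alternatively reduce from the $\NL$-complete problem of reachability in graphs with a self-loop at $b$).

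The main obstacle I anticipate is the hardness reduction for the symmetric formula $x\sub y\wedge y\sub x$: because its characterization (Lemma~\ref{lem:incgraph}(b)) requires $s$ to lie between two cycles, a naive edge gadget creates spurious two-cycles (every undirected edge $(u,v)$ with its reverse becomes a $2$-cycle in the inclusion graph), which would make membership trivial and destroy the reduction. The fix is to make the inclusion graph on edge-assignments genuinely directed and acyclic off the gadget — for instance by tagging each edge-assignment with its position or by subdividing, so that the only cycles in the inclusion graph are the deliberately planted self-loops at $a$ and at $b$ — and then $s\in\maxsub$ iff there is a path from the $a$-cycle through $s$ to the $b$-cycle, i.e. iff $b$ is reachable from $a$. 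Getting this gadget right, and checking it is logspace-computable, is the delicate part; the rest is routine. Finally, $\NL$-completeness of both problems follows by combining the $\NL$ upper bound with this $\NL$-hardness, noting $\NL=\coNL$ so no asymmetry between the problems and their complements arises.
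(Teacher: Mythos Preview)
Your membership arguments are correct and match the paper. Your hardness reduction for $x\sub y$ is also essentially right, though your edge encoding is reversed: with $s_{u,v}(x)=u$ and $s_{u,v}(y)=v$, the inclusion-graph edge (defined by $s(x)=s'(y)$) runs from $s_{u,v}$ to $s_{p,u}$, not to $s_{v,w}$ as you claim; swapping the roles of $x$ and $y$ fixes this.

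There is, however, a genuine gap in your hardness plan for $x\sub y\wedge y\sub x$. You propose symmetrizing $G$ and invoking that reachability ``in the undirected sense'' is still $\NL$-complete. It is not: undirected $s$--$t$ connectivity is in $\LOGSPACE$ by Reingold's theorem, so that route cannot yield $\NL$-hardness unless $\LOGSPACE=\NL$. Your fallback of planting self-loop cycles at both $a$ and $b$ is salvageable in principle, but as written it is underspecified: you have not fixed a single query assignment and argued both directions of the two-sided condition for it (and if the query sits on the $a$-self-loop it is trivially between two cycles regardless of reachability).

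The paper sidesteps the whole obstacle with one clean trick that handles both formulae simultaneously. Assume w.l.o.g.\ that the REACH instance $G$ is acyclic, and add a single edge $(b,a)$ to form $G'$. Then $G'$ contains a cycle if and only if $b$ is reachable from $a$ in $G$, and any such cycle passes through the new edge. Encoding the edges of $G'$ as a team and taking as query the assignment corresponding to the added edge $(b,a)$, Lemma~\ref{lem:incgraph}(a) immediately gives hardness for $x\sub y$. Since the query assignment itself lies \emph{on} the (unique) cycle whenever one exists, the two-sided condition of Lemma~\ref{lem:incgraph}(b) is automatic, so the very same reduction establishes hardness for $x\sub y\wedge y\sub x$ as well---no second gadget, no symmetrization, and no worry about spurious cycles.
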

\begin{proof}
 \textbf{Hardness.} We give a logarithmic space many-one reduction from  
REACH. 
Let $G=(V,E)$ be a graph, and let $a,b\in V$.
W.l.o.g. we can assume 
 $G$ has no cycles. Define $E'$ as the extension of $E$ with an extra edge $(b,a)$. Then $b$ is reachable from $a$ in $G$ if and only if $a$ belongs to a cycle in $G'=(V,E')$. 
We reduce from $(G,a,b)$ to a team $X=\{s_{d,c}\mid (c,d)\in E'\}$ 
  where $s_{u,v}$  maps $(x,y)$ to $(u,v)$ (see Fig. \ref{kuvaB}). 
  By Lemma \ref{lem:incgraph}, 
$b$ is reachable from $a$ if and only if 
$s_{a,b}\in \maxsub(X,\phi)$, where 
   $\phi$ is either $x\sub y$ or $x\sub y\wedge y\sub x$.

\noindent \textbf{Membership.} 
By Lemma \ref{lem:incgraph} $\MR(x\sub y)$ and $\MR(x \sub y \wedge y \sub x)$ reduce to reachability variants that are clearly in $\NL$.
\end{proof}

\begin{figure}
\centering
\begin{tabular}{ccc}
\adjustbox{valign=m}{
\begin{tikzpicture}[->,>=stealth,shorten >=1pt,auto,node distance=1.5cm,
 main node/.style={circle,draw,minimum size=0.6cm,draw,font=\sffamily\bfseries}]

  \node[main node] (1)  {$a$};
 \node[main node] (2) [below of=1] {$0$};
   \node[main node] (3) [below of=2] {$2$};
     \node[main node] (5) [left  of=3] {$1$};
   \node[main node] (4) [right of=3] {$b$};

\path[every node/.style={sloped,anchor=south,auto=false}]

    (1) edge (2)
    (2) edge (4)
    (2) edge (5)
    (5) edge (3)
    (4) edge [thick, dotted, bend right] (1)
    (2) edge (3);
      
\end{tikzpicture}}
&
\qquad
$\mapsto$
\vspace{0cm}
&
\qquad
\begin{tabular}{ccc}
\cmidrule[.5pt]{2-3}
   &$x$ & $y$ \\\cmidrule{2-3}
      $\bullet$ &$a$ & $b$\\
   $\circ$ & $0$ & $a$ \\
   & $1$ & $0$ \\
    &$2$ & $0$ \\
    $\circ$&$b$ & $0$ \\
    &$2$ & $1$  \\\cmidrule[.5pt]{2-3}
    &&\multicolumn{1}{c}{\makebox[0pt]{$\phi= \begin{cases}x\sub y\\x\sub y\wedge y\sub x\end{cases}$}}
    \end{tabular}
    \end{tabular}
\caption{Reduction from REACH to $\MR(\phi)$. The black circle marks the input assignment and all the circles together mark a subteam satisfying $\phi$. \label{kuvaB}}
\end{figure}
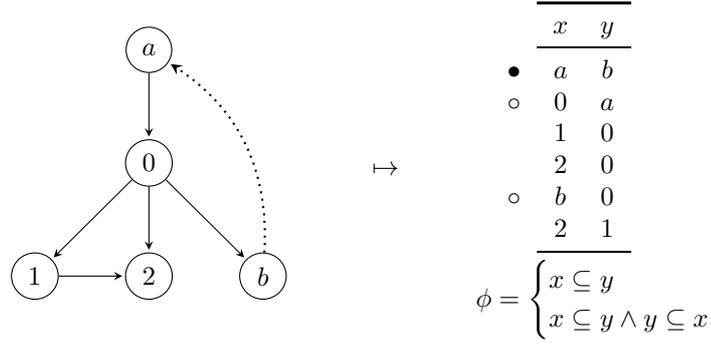



Next we turn to the third statement of Theorem \ref{thm:trikotomia}. Recall that membership in $\PTIME$ follows directly from Lemma \ref{lem:MRinP}.
For $\PTIME$-hardness we reduce from the 
monotone
circuit value problem (see, e.g., \cite{Vollmerbook}).  The proof essentially follows from the following lemma. 
\begin{lem}\label{lem:eka}
\label{thm:P}
$\MR(x\sub z\wedge y\sub z)$, $\MR(x\sub y\wedge y\sub z)$, and $\MR(x\sub y\wedge x\sub z)$
 are $\PTIME$-complete.
\end{lem}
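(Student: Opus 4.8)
**Proof plan for Lemma \ref{lem:eka} ($\PTIME$-completeness of $\MR$ for the three conjunctions of two unary inclusion atoms).**

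The plan is to prove $\PTIME$-hardness by a logarithmic-space reduction from the monotone circuit value problem; membership in $\PTIME$ is immediate from Lemma \ref{lem:MRinP}. I will concentrate on the atom $x \sub z \wedge y \sub z$ and then explain how to adapt the construction to the other two cases. Fix a monotone Boolean circuit $C$ with gates $g_1, \ldots, g_m$, where $g_m$ is the output gate, each $g_i$ is either an input gate with a fixed truth value, an $\wedge$-gate, or an $\vee$-gate, and each non-input gate has exactly two in-edges. The goal is to build a team $X$ (over domain variables $x,y,z$) and single out an assignment $s_{\mathrm{out}} \in X$ such that $s_{\mathrm{out}} \in \maxsub(X, x\sub z \wedge y\sub z)$ if and only if $g_m$ evaluates to $1$. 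The guiding intuition is the graph-theoretic characterization one gets by unfolding: an assignment survives in the maximal subteam of $x\sub z \wedge y\sub z$ exactly when, in the two overlaid "inclusion graphs" (one for $x\sub z$, one for $y\sub z$), it lies on a structure that can be sustained forever under the simultaneous demand ``every surviving $s$ needs a surviving $s'$ with $s(x)=s'(z)$ and a surviving $s''$ with $s(y)=s''(z)$.'' This AND/OR-reachability flavour is precisely what drives $\PTIME$-hardness for alternating/monotone-circuit reachability.

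Concretely, I would introduce for each gate $g_i$ a ``value'' $v_i$ in the model, plus a few auxiliary values, and create assignments as follows. For an $\wedge$-gate $g_i$ with inputs $g_j, g_k$, put an assignment $s_i$ with $s_i(z) = v_i$ and $s_i(x) = v_j$, $s_i(y) = v_k$: to keep $s_i$ alive we need both $v_j$ and $v_k$ to appear as $z$-values of surviving assignments, i.e.\ both children ``true''. For an $\vee$-gate $g_i$ with inputs $g_j,g_k$, I instead use two assignments $s_i^{(1)}$ and $s_i^{(2)}$, both with $z$-value $v_i$, where $s_i^{(1)}$ has $x$-value and $y$-value both equal to $v_j$, and $s_i^{(2)}$ has both equal to $v_k$; since in a disjunctive situation it suffices that one of these two assignments be sustainable, this realizes the OR. For a true input gate $g_i$, include a self-looping assignment $s_i$ with $s_i(x)=s_i(y)=s_i(z)=v_i$, which is always in the maximal subteam; for a false input gate, include no assignment with $z$-value $v_i$ (so anything pointing to $v_i$ dies). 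Finally let $s_{\mathrm{out}}$ be an extra assignment with $z$-value a fresh element and $x$-value $= y$-value $= v_m$; then $s_{\mathrm{out}}$ survives iff $v_m$ is the $z$-value of some surviving assignment iff $g_m = 1$. The construction is clearly computable in logarithmic space: each gate contributes $O(1)$ rows, each row is a triple of names of gates. I would prove correctness by induction on gate depth: formalize the claim ``$g_i = 1$ iff some assignment with $z$-value $v_i$ lies in $\maxsub(X, x\sub z\wedge y\sub z)$'', using the union-closure characterization of $\maxsub$ (a team is in the maximal subteam iff it is contained in some subteam satisfying the formula) for one direction, and downward pressure (any satisfying subteam is ``closed'' under the two successor demands, so it cannot contain a row whose demands are ultimately unmet) for the other.

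For $\MR(x\sub y \wedge y\sub z)$ and $\MR(x\sub y \wedge x\sub z)$ I would rework the same circuit encoding, distributing the two ``child pointers'' across the two atoms rather than both into one atom: e.g.\ for $x\sub y\wedge y\sub z$, an $\wedge$-gate $g_i$ with children $g_j,g_k$ becomes a short chain of assignments so that the $x\sub y$ demand forces one child alive and the $y\sub z$ demand forces the other, possibly using an intermediate ``relay'' value; $\vee$-gates are again handled by providing two alternative rows. The details differ but the skeleton — gate values, self-loops for true inputs, no target for false inputs, a final probe row for the output — is uniform, and correctness is again by induction on depth. I expect the main obstacle to be exactly this bookkeeping: getting the auxiliary values and relay rows right so that in every one of the three atom-shapes the ``sustainable sub-configuration'' semantics faithfully simulates AND at $\wedge$-gates and OR at $\vee$-gates, and in particular ruling out spurious infinite supporting structures (stray cycles among auxiliary rows) that could keep a ``false'' gate's value alive. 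Handling that cleanly will likely require choosing the fresh auxiliary elements so that the only cycles in the relevant graphs are the intended self-loops at true input gates, after which the inductive correctness argument goes through routinely.
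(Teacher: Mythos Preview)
Your proposal is correct and follows the same overall strategy as the paper: a logarithmic-space reduction from the monotone circuit value problem, encoding AND-gates so that both children must be sustained and OR-gates so that one suffices. The details, however, are organised differently. For $x\sub z\wedge y\sub z$ your encoding is in fact more direct than the paper's: you use a single row $(v_j,v_k,v_i)$ per AND-gate and two rows per OR-gate, with self-loops at true inputs. The paper instead builds \emph{one} team that works simultaneously for all three formulae: each AND-gate $i$ contributes three rows involving an auxiliary value $c_i$, and the key point in the forward direction is that the witnessing subteam satisfies \emph{every} unary inclusion atom among $x,y,z$, so the same subteam certifies $s_0\in\maxsub(X,\phi)$ for each of the three choices of $\phi$. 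This unified construction spares the case-by-case relay bookkeeping you anticipate for $x\sub y\wedge y\sub z$ and $x\sub y\wedge x\sub z$; conversely, your per-formula encodings are leaner and make the AND/OR semantics transparent. A second minor difference is the backward direction: you argue by induction on gate depth, whereas the paper extracts a winning strategy for Player~II in the circuit's semantic evaluation game directly from any satisfying subteam containing $s_0$. Both arguments are routine once the encoding is fixed.
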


\begin{proof}
Let $\phi$ be either $x\sub z\wedge y\sub z$, $x\sub y\wedge y\sub z$, or $x\sub y\wedge x\sub z$. 
We give a logarithmic-space many-one reduction to $\MR(\phi)$ from the monotone circuit value problem (MCVP). Given a Boolean word $w\in \{\top,\bot\}^n$, and a Boolean circuit $C$ with $n$ inputs, one output, and gates with labels from $\{\text{AND,OR}\}$, this problem is to determine whether $C$ outputs $\top$. If $C$ outputs $\top$ on $w$, we say that it \emph{accepts} $w$. W.l.o.g. we may assume that the in-degree of each AND and OR gate is $2$. 
 We annotate each input node by its corresponding input $\top$ or $\bot$, 
 and each gate by some distinct number $i\in \mathbb{N}$.
 Then each gate has two child nodes $i_L,i_R$ that are either natural numbers or input values from $\{\top,\bot\}$. Next we construct a team $X$ whose values consist of node annotations $i,\top,\bot$ and distinct copies $c_i$ of AND gates $i$. The team $X$ is constructed by the following rules (see Fig. \ref{kuvaA}): 
\begin{itemize}
\item add $s_0\colon (x,y,z) \mapsto (1,\top,\top)$ where $1$ is the output gate,
\item for AND gates $i$,  add $s_{i,0}\colon (x,y,z)\mapsto (i_L,i,c_i)$, $s_{i,1}\colon  (x,y,z)\mapsto (i_R,c_i,i)$, and $s_i:(x,y,z)\mapsto (c_i,\top,\top)$,
\item for OR gates $i$, add $s_{i,L}\colon  (x,y,z)\mapsto (i_L,i,i)$ and $s_{i,R}\colon  (x,y,z) \mapsto (i_R,i,i)$.
\end{itemize}
We will show that $C$ accepts $w$ iff $s_0\in \maxsub(X,\phi)$. For the only-if direction we actually show a slightly stronger claim. That is, we show that $s_0\in \maxsub(X,\phi)$ is implied even if $\phi$ is the conjunction of all unary inclusion atoms between $x,y,z$.
 
Assume first that $C$ accepts $w$. We show how to build a subteam $Y\sub X$ such that it includes $s_{0}$ and satisfies all unary inclusion atoms between $x,y,z$.  
 First construct a subcircuit $C'$ of $C$ recursively as follows: add the output gate $\top$ to $C'$; for each added AND gate $i$, add both child nodes of $i$; for each added OR gate $i$, add a child node of $i$ that is evaluated true under $w$. In other words, $C'$ is a set of paths from the output gate to the input gates that witnesses the assumption that $C$ accepts $w$. The team $Y$ will now list the auxiliary values $c_i$ and the gates of $C'$ in each column $x,y,z$. We construct $Y$ by the following rules: 
 \begin{itemize}
 \item add $s_0$,
\item for AND gates $i$ in $C'$, add $s_{i,0}$, $s_{i,1}$, and $s_i$,
\item for OR gates $i$ in $C'$, add $s_{i,P}$ iff $i_P$ is in $C'$, for $P=L,R$.
\end{itemize}
 It follows from the construction that $Y$ satisfies all unary inclusion atoms between $x,y,z$.

Vice versa, consider the standard semantic game between Player I and Player II on the given 
circuit $C$ and input word $w$.
This game starts from the output gate $1$, and at each AND (OR, resp.) gate $i$ Player I (Player II, resp.) selects the next node from its two child nodes $i_L$ and $i_R$. Player II wins iff the game ends at an input node that is true. By the assumption that $s_0\in \maxsub(X,\phi)$ we find a team $Y$ that contains $s_0$ and satisfies $\phi$. Note that $Y$ cannot contain any assignment that maps $x$ to $\bot$. For showing that $C$ accepts
$w$
it thus suffices to show that Player II has a strategy which imposes the following restriction: for each visited node annotated by $i$, 
  we have $s(x)=i$ for some $s\in Y$. 
  At start this holds by the assumption that $s_0 \in Y$. 
 Assume that $i$ is any gate with $s\in Y$ such that $s(x)=i$. 
If $\phi$ is $x\sub z\wedge y\sub z$, we have two cases. If $i$ is an OR gate then we find $s'$ from $Y$ with $s'(y)=s'(z)=i$. Then the strategy of Player II is to select the gate $s'(x)$ as her next step. If $i$ is an AND gate, an application of $x\sub z$ gives $s'$ from $Y$ with $s'(z)=i$.  Then $s'(y)=c_i$, which means that further application of $y\sub z$ yields $s''$ from $Y$ with $s''(z)=c_i$ and hence $s''(y)=i$. Now $\{s'(x),s''(x)\}=\{i_L,i_R\}$, and thus the claim holds for either selection by Player I. The induction step is analogous for the cases where $\phi$ is $x\sub y\wedge y\sub z$ or $x\sub y\wedge x\sub z$.
This concludes the proof.
 \end{proof}

\begin{figure}[h!]
\centering
\begin{tabular}{ccc}
\adjustbox{valign=m}{
\begin{tikzpicture}[->,>=stealth,auto,node distance=1.1cm,
 every node/.style={circle,draw,minimum size=0.6cm,draw,font=\sffamily\bfseries}]

  \node[label={$1$}]  {$\wedge$}
  child {node[label={$2$}]  {$\vee$}
  	child{node[label={}] {$\top$}}
  	 child{node[label={}] {$\wedge$}}
  	}
  child {node[label={$3$}]  {$\vee$}
  	child {node[label={$4$}] {$\wedge$}
  		child {node[label={$$}]  {$\bot$}}
  		child {node[label={$6$}]  (5) {$\vee$}}
  		}
  	child	{node[label={$5$}] (3){$\wedge$}
  		child {node (6) {$\vee$}
  		  	child {node[label={$$}] {$\bot$}}
  			child {node[label={$$}] {$\top$}}
  			}
  		child {node[label={$$}] (7) {$\top$}
  			}
  		}
	}
  		;
  		
  \path[every node/.style={sloped,anchor=south,auto=false}]
  (3) edge (5);
\end{tikzpicture}}
&

$\mapsto$
\qquad
&
\begin{tabular}{cccc}
\cmidrule[.5pt]{2-4}
&$x$&$y$&$z$\\\cmidrule{2-4}
$\bullet$&$1$&$\top$&$\top$\\\cdashline{2-4}
$\circ$&$2$&$1$&$c_1$\\
$\circ$&$3$&$c_1$&$1$\\\cdashline{2-4}
$\circ$&$\top$&$2$&$2$\\
$$&$2$&$2$&$2$\\\cdashline{2-4}
$$&$4$&$3$&$3$\\
$\circ$&$5$&$3$&$3$\\\cdashline{2-4}
$$&$\bot$&$4$&$c_4$\\
$$&$6$&$c_4$&$4$\\\cdashline{2-4}
$\circ$&$6$&$5$&$c_5$\\
$\circ$&$\top$&$c_5$&$5$\\\cdashline{2-4}
$$&$\bot$&$6$&$6$\\
$\circ$&$\top$&$6$&$6$\\\cdashline{2-4}
$\circ$&$c_1$&$\top$&$\top$\\
$$&$c_4$&$\top$&$\top$\\
$\circ$&$c_5$&$\top$&$\top$\\
\cmidrule[.5pt]{2-4}
\multicolumn{4}{c}{\makebox[0pt]{$\phi= \begin{cases}x\sub z\wedge y\sub z\\x\sub y\wedge y\sub z\\ x\sub y\wedge x\sub z\end{cases}$}}
\end{tabular}

    \end{tabular}
\caption{MCVP and $\MR(\phi)$ \label{kuvaA}}
\end{figure}

The third statement of Theorem \ref{thm:trikotomia} now follows. Any $G_\Sigma$ not covered by the previous lemma has a subgraph of a form depicted in Fig. \ref{fig:graphs}. Of these 
$G_1$--$G_3$ 
were considered above, and the reduction for $G_4$ is essentially identical to that for $G_1$; take a new variable for the new target node and insert values identical to those of $z$. 
 Additionally, for each node in $G_{\Sigma}$ but not in $G_i$ take a copy of any column in the team. That this suffices follows from the arguments of the previous lemmata; in particular, from the fact that any true MCVP instance generates a subteam that satisfies all possible unary inclusion atoms between variables.

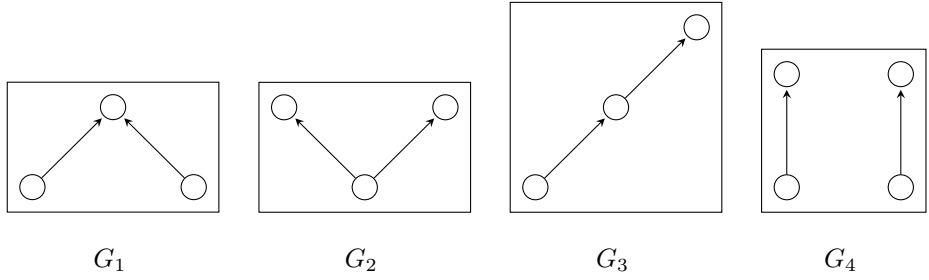
\begin{figure}
\centering
\begin{tabular}{cccc}
\adjustbox{valign=b}{
\begin{tikzpicture}[framed,->,>=stealth,shorten >=1pt,auto,node distance=1.5cm,
 main node/.style={circle,draw,minimum size=0.1cm,draw,font=\sffamily\bfseries}]

  \node[main node] (1)  {$$};
 \node[main node] (2) [below left of=1] {$$};
   \node[main node] (3) [below right of=1] {$$};

\path[every node/.style={sloped,anchor=south,auto=false}]

    (2) edge (1)
    (3) edge (1);
      
\end{tikzpicture}}
&
\adjustbox{valign=b}{
\begin{tikzpicture}[framed,->,>=stealth,shorten >=1pt,auto,node distance=1.5cm,
 main node/.style={circle,draw,minimum size=0.1cm,draw,font=\sffamily\bfseries}]

  \node[main node] (1)  {$$};
 \node[main node] (2) [above left of=1] {$$};
   \node[main node] (3) [above right of=1] {$$};

\path[every node/.style={sloped,anchor=south,auto=false}]

    (1) edge (2)
    (1) edge (3);
      
\end{tikzpicture}}
&
\adjustbox{valign=b}{
\begin{tikzpicture}[framed,->,>=stealth,shorten >=1pt,auto,node distance=1.5cm,
 main node/.style={circle,draw,minimum size=0.1cm,draw,font=\sffamily\bfseries}]

  \node[main node] (1)  {$$};
 \node[main node] (2) [ above right of=1] {$$};
   \node[main node] (3) [ above right of=2] {$$};

\path[every node/.style={sloped,anchor=south,auto=false}]

    (1) edge (2)
    (2) edge (3);
      
\end{tikzpicture}}
&
\adjustbox{valign=b}{
\begin{tikzpicture}[framed,->,>=stealth,shorten >=1pt,auto,node distance=1.5cm,
 main node/.style={circle,draw,minimum size=0.1cm,draw,font=\sffamily\bfseries}]

  \node[main node] (1)  {$$};
 \node[main node] (2) [above  of=1] {$$};
   \node[main node] (3) [ right of=1] {$$};
   \node[main node] (4) [ right of=2] {$$};

\path[every node/.style={sloped,anchor=south,auto=false}]

    (1) edge (2)
    (3) edge (4);
      
\end{tikzpicture}}\\
&&&\\
$G_1$&$G_2$&$G_3$&$G_4$

\end{tabular}
\caption{Subgraphs of $G_{\Sigma}$ \label{fig:graphs}}
\end{figure}



Considering disjunctions, observe that $\MR$ over a disjunction of unary inclusion atoms is either trivially true or $\NL$-complete. 
For membership in $\NL$, see Lemma \ref{prop:dis}. 
For $\NL$-hardness one may use the reduction of Lemma \ref{thm:NLarb} for 
 $\MR(x \sub y \vee y\sub x)$. Provided that another non-trivial inclusion atom $u \sub v$ appears in the disjunction, then $\{u,v\}\not\sub\{x,y\}$ and the values for $u,v$ can be defined in such a way  
 that the maximal subteam for $u \sub v$ is empty.

\begin{cor}\label{cor:NL}
Let $\Sigma$ be a finite set of unary inclusion atoms, and let $\phi$ be the disjunction of all atoms in $\Sigma$. Then $\MR(\phi)$ is
\begin{enumerate}[(a)]
\item trivially true if $\Sigma$ contains a trivial inclusion atom,
\item $\NL$-complete otherwise. 
\end{enumerate}
\end{cor}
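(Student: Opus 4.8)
The plan is to handle the two cases separately, leaning entirely on the machinery already developed for conjunctions. Case~(a) is immediate: if some trivial atom $t\sub t$ occurs in $\Sigma$, then $\mA\models_X t\sub t$ holds for every team $X$, so $\maxsub(\mA,X,t\sub t)=X$; writing $\phi$ as an iterated disjunction and applying Lemma~\ref{lem:maxsub} repeatedly shows that $\maxsub(\mA,X,\phi)$ is the union of the maximal subteams of the individual disjuncts, one of which already equals $X$, hence $\maxsub(\mA,X,\phi)=X$ and every assignment is a positive instance.

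For case~(b), membership in $\NL$ follows by induction on the number of disjuncts: each unary atom $x_i\sub y_i$ has $\MR(x_i\sub y_i)\in\NL$ by Lemma~\ref{thm:NLarb}, and Lemma~\ref{prop:dis} transports membership in $\NL$ through disjunctions (equivalently, by Lemma~\ref{lem:maxsub} one is asking whether $s$ lies in $\maxsub(\mA,X,\alpha)$ for one of the finitely many disjuncts~$\alpha$, and $\NL$ is closed under finite unions). For $\NL$-hardness, fix a non-trivial atom $x_0\sub y_0\in\Sigma$, which exists since we are not in case~(a), and run the REACH-reduction of Lemma~\ref{thm:NLarb} on the variables $x_0,y_0$: from an acyclic graph $G=(V,E)$ and vertices $p,q$, put $E'=E\cup\{(q,p)\}$ and let $X=\{s_{d,c}\mid(c,d)\in E'\}$, where $s_{u,v}$ maps $(x_0,y_0)$ to $(u,v)$. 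The only extra work is to neutralize the remaining disjuncts: for every variable $t$ occurring in $\Sigma$ other than $x_0,y_0$ and every $s_{d,c}\in X$, set the $t$-value of $s_{d,c}$ to the fresh element $(t,d,c)$, which occurs nowhere else in the team. Then any atom $a\sub b\in\Sigma$ with $\{a,b\}\not\sub\{x_0,y_0\}$ mentions a variable whose whole column consists of pairwise distinct values that appear in no other column, so its inclusion graph in the sense of Lemma~\ref{lem:incgraph} has no edges and $\maxsub(\mA,X,a\sub b)=\emptyset$. Hence $\maxsub(\mA,X,\phi)$ coincides with $\maxsub(\mA,X,x_0\sub y_0)$, together with $\maxsub(\mA,X,y_0\sub x_0)$ if that atom is also in $\Sigma$ --- a term which, exactly as in Lemma~\ref{thm:NLarb}, is empty unless $q$ is reachable from $p$ and contains $s_{p,q}$ if it is. By Lemma~\ref{lem:incgraph} and the analysis of Lemma~\ref{thm:NLarb}, $s_{p,q}\in\maxsub(\mA,X,\phi)$ iff $G'$ has a cycle through $p$, iff $q$ is reachable from $p$ in $G$; since the team is computable in logarithmic space, this yields the desired $\NL$-hardness.

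The argument is essentially a bookkeeping exercise on top of Lemma~\ref{thm:NLarb} and the graph characterization of Lemma~\ref{lem:incgraph}, so I do not expect a genuine obstacle. The one point demanding a little care is the possibility that $\Sigma$ contains both $x_0\sub y_0$ and $y_0\sub x_0$, where the fresh-value padding does not kill the second atom; there one must simply check --- just as in the treatment of $\MR(x\sub y\vee y\sub x)$ in Lemma~\ref{thm:NLarb} --- that adding this disjunct alters neither the acyclicity analysis nor the final answer.
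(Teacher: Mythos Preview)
Your proof is correct and follows essentially the same approach as the paper: case~(a) via Lemma~\ref{lem:maxsub}, membership via Lemma~\ref{prop:dis}, and hardness by reusing the REACH reduction of Lemma~\ref{thm:NLarb} while padding the extra variables with fresh values so that every disjunct other than $x_0\sub y_0$ (and possibly $y_0\sub x_0$) has empty maximal subteam. Your treatment is in fact more explicit than the paper's one-paragraph sketch; the only minor inaccuracy is that Lemma~\ref{thm:NLarb} handles $x\sub y$ and the \emph{conjunction} $x\sub y\wedge y\sub x$, not the disjunction, so the check that the extra disjunct $y_0\sub x_0$ does not spoil the reduction is your own (correct) observation rather than something already recorded there.
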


Note that the results of this section generalize to inclusion atoms of higher arity, obtained by replacing variables $x$ with tuples $\tuple x$ such that all pairs of distinct tuples have no common variables. More complex cases arise if the tuples are allowed to overlap. 

\subsection{Teams with a Key}\label{sect:key}
In relational database management  inclusion atoms usually appear in form of a foreign key that has the purpose of securing referential integrity upon databases. 
  In this section we investigate the maximal subteam
problems in a framework where inclusion atoms correspond to uni-relational foreign keys. That is, we consider inclusion atoms $x\sub y$ over teams on which the variable $y$ is a key. Given this additional restriction, we observe that the maximal subteam and model checking problems both collapse to lower computational levels.  The complexity of the maximal subteam problem for $x\sub y$ collapses from $\NL$ to $\LOGSPACE$, and for $x\sub z \wedge y \sub z$ it collapses from $\PTIME$ to $\NL$.
  
   \begin{defi}
Let $X$ be a team over a set of variables $V$, and let $U$ be a subset of $V$. Then $U$ is a \emph{key} on $X$ if for all $s,s'\in X:s(U)=s'(U)\implies s=s'$.
\end{defi}

First we  show that maximal subteam for single inclusion atom is  $\LOGSPACE$-complete if the inclusion atom refers to a key. 

\begin{thm}
\label{thm:L}
$\MR(x\sub y)$ over teams 
for which $y$ is a key is
$\LOGSPACE$-complete.
\end{thm}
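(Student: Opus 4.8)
Both directions rest on Lemma~\ref{lem:incgraph}(a). For $\phi=x\sub y$ (take the first-order conjunct $\alpha$ there to be a trivially true literal such as $x=x$, so that $x\sub y\wedge\alpha\equiv x\sub y$ and the edge relation is simply $(s,s')\in E\iff s(x)=s'(y)$), the lemma says that $s\in\maxsub(X,x\sub y)$ if and only if the directed graph $G=(X,E)$ contains a path from $s$ to a cycle. Since $x\sub y$ is vocabulary-free, the model plays no role and we work with $\maxsub(X,x\sub y)$. The point of the key hypothesis is that it forces $G$ to be \emph{functional}: for a fixed value $v=s(x)$ there is at most one $s'\in X$ with $s'(y)=v$, so every vertex of $G$ has out-degree at most $1$.

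\textbf{Membership in $\LOGSPACE$.} In a functional graph, from $s_0$ there is a unique maximal walk $s_0,s_1,s_2,\dots$ obtained by repeatedly following the (unique, when it exists) out-edge, and $s_0$ reaches a cycle iff this walk is infinite, i.e.\ never reaches a vertex with no out-edge. As $G$ has $|X|$ vertices, this holds iff the walk can be continued for $|X|$ steps without hitting a dead end (otherwise a vertex is repeated within $|X|$ steps, giving a reachable cycle). A logarithmic-space machine decides this directly: keep a step counter bounded by $|X|$ and a pointer to the current assignment $s_i$; to advance, scan $X$ for the unique $s_{i+1}$ with $s_{i+1}(y)=s_i(x)$ and reject if none exists; accept once $|X|$ steps have been taken. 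This uses $O(\log|X|)$ work space.

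\textbf{$\LOGSPACE$-hardness.} Reduce from the order problem $\mathrm{ORD}$: given a finite set presented by a successor relation that makes it a linear order $v_1\to v_2\to\cdots\to v_n$ together with two elements $a,b$, decide whether $a\le b$; this problem is complete for $\LOGSPACE$ under first-order reductions. From such an instance build a functional graph $H$ on vertex set $V\cup\{\star,\bot\}$ with $\star,\bot$ fresh and edges: $\star\to\star$; no out-edge at $\bot$; $b\to\star$; $v_n\to\bot$ if $v_n\neq b$; and $v\to\mathrm{succ}(v)$ for every remaining $v$. Then every vertex of $V$ has out-degree exactly $1$ in $H$, and the walk in $H$ from $a$ reaches a cycle exactly when it reaches $b$ (after which it loops at $\star$), i.e.\ exactly when $a\le b$. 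Now encode $H$ as a team as in the proof of Lemma~\ref{thm:NLarb}: put $X_H=\{s_{d,c}\mid (c,d)\text{ is an edge of }H\}$, where $s_{u,v}$ maps $(x,y)$ to $(u,v)$, and let $s_0=s_{d,a}$ with $d$ the $H$-successor of $a$ (which exists by construction). Each vertex of $H$ is the tail of at most one edge, so $y$ is a key on $X_H$; and by Lemma~\ref{lem:incgraph}(a), $s_0\in\maxsub(X_H,x\sub y)$ iff $a$ reaches a cycle in $H$, i.e.\ iff $a\le b$. The map $(L,a,b)\mapsto(X_H,s_0)$ is first-order computable.

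\textbf{Expected difficulties.} There is no deep obstacle here. The points requiring care are: (i) the degenerate cases of $\mathrm{ORD}$ (e.g.\ $a=v_n$ or $a=b$), so that the designated assignment $s_0$ always exists in $X_H$ --- this is the role of the sink $\bot$; and (ii) using a reduction weak enough (first-order, hence $\mathsf{AC}^0$) for $\LOGSPACE$-hardness to be meaningful rather than a plain logarithmic-space reduction.
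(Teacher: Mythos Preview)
Your proof is correct and follows essentially the same approach as the paper: both use Lemma~\ref{lem:incgraph}(a) to identify $s\in\maxsub(X,x\sub y)$ with cycle-reachability in a functional graph, and then exploit out-degree~$\le 1$ for the logspace upper bound by following the unique outgoing walk for $|X|$ steps. For hardness the paper reduces from deterministic reachability (assuming a self-loop at the target $b$) rather than from $\mathrm{ORD}$ with your auxiliary $\star,\bot$ vertices, but the encoding of edges as assignments with key $y$ is the same construction.
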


\begin{proof}
\textbf{Hardness.} We show a first-order many-one reduction from a deterministic variant of the reachability problem. Given a directed graph $G$ and its two vertices $a$ and $b$, this problem is to determine whether there is a deterministic path from $a$ to $b$. A deterministic path is such that for every edge $(i,j)$ in the path there is no other edge in $G$ going out of $i$. Similarly to above, and with respect to first-order reductions, w.l.o.g. we may assume that the out-degrees of $a$ and $b$ are $1$ and $G$ has no cycles except for a self-loop on $b$. An instance of  deterministic rechability is reduced to a team $X$ over $\{x,y\}$ by applying the formation rule:
\begin{itemize}
\item For each edge $(u,v)$ that is the only edge going out from $u$, add the assignment $(y,x)\mapsto (u,v)$ to $X$.
\end{itemize}
 For instance, the instance illustrated in Fig. \ref{kuvaB} reduces to the team that consists of the three 
assignments 
$(y,x)\mapsto \{(a,0),(1,2),(b,b)\}$. We show that the input instance admits positive answer iff $m\in \maxsub(X,x\sub y)$, where $m$ is the 
assignment
that corresponds to the only edge going out from $s$.
 

By Lemma \ref{lem:incgraph}, $s\in \maxsub(X,x\sub y)$ iff $s$ is connected to a cycle in the graph $G_X=(X,E)$, where $(s_0,s_1)\in E$ iff $s_0(y)=s_1(x)$. Since the only cycle in $G$ is the self-loop on $b$, accordingly the only cycle in $G_X$
is the self-loop on the mapping that corresponds to the self-loop on $b$. Furthermore, the team formation rule for $X$ implies that any path in $G_X$ corresponds to a deterministic path in $G$, and vice versa. Consequently, $s\in \maxsub(X,x\sub y)$ iff there is a deterministic path from $a$ to $b$.

Lastly, we note that the reduction to $X$ and $s$ is clearly first-order, and that $y$ is a key on $X$. This concludes the hardness proof. 

\noindent
\textbf{Membership.}  Since $y$ is a key on $X$, all paths in the graph $G$ given in Lemma \ref{lem:incgraph} are deterministic.  The existence of a path from a given assignment of $X$ to a cycle in $G$ is then confirmed if the unique path outgoing this assignment is of length at least $|X|$. This process can be clearly executed in $\LOGSPACE$. 

\end{proof}


Next we consider maximal subteam membership for a conjunction of two inclusion atoms which both refer to a key on the input team. Again, we utilize 
 reachability for showing $\NL$-hardness. In contrast to our earlier reduction from the same problem, 
we now have both an extra restriction and an extra allowance: teams must be of the sort where inclusion atoms point to some key, however instead of one inclusion atom now two inclusion atoms are available. 

\begin{thm}
\label{thm:NLL}
$\MR(x\sub z\wedge y\sub z)$ over teams 
for which $z$ is a key
is $\NL$-complete.
\end{thm}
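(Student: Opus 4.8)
The plan is to characterize, for key-teams, the maximal subteam by a reachability condition on an auxiliary digraph, and then to read off both bounds from that characterization. Let $X$ be a team on which $z$ is a key. Because $z$ is a key, for each $s\in X$ there is at most one $s'\in X$ with $s'(z)=s(x)$; write $f(s)$ for it when it exists and leave $f(s)$ undefined otherwise, and define $g(s)$ analogously from $s(y)$. Call $s$ \emph{full} if both $f(s)$ and $g(s)$ are defined, and let $H=(X,E)$ be the digraph with $(s,s')\in E$ iff $s'\in\{f(s),g(s)\}$. The heart of the argument is the claim that $s\in\maxsub(X,x\sub z\wedge y\sub z)$ iff every vertex reachable from $s$ in $H$ (including $s$) is full. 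For the forward direction: if $Y\models x\sub z\wedge y\sub z$ and $s\in Y\sub X$, then for each $t\in Y$ the witness for $x\sub z$ is an element of $Y$ with $z$-value $t(x)$, which by keyness must be $f(t)$; hence $f(t)$ is defined and lies in $Y$, and likewise $g(t)\in Y$. Thus $Y$ is closed under $f,g$ and consists of full assignments, and an induction on $H$-distance shows every vertex reachable from $s$ belongs to $Y$. For the converse, let $R$ be the set of vertices $H$-reachable from $s$; by construction $R$ is closed under $f,g$, and if all its members are full then each $t\in R$ has its witnesses $f(t),g(t)$ inside $R$, so $R\models x\sub z\wedge y\sub z$ and $s\in R\sub\maxsub(X,x\sub z\wedge y\sub z)$.

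For membership, observe that $H$ and the predicate ``full'' are computable from $X$ in logarithmic space: to find $f(s)$ one scans $X$ for the unique assignment whose $z$-value equals $s(x)$. By the claim, $s\notin\maxsub(X,x\sub z\wedge y\sub z)$ iff some non-full vertex of $H$ is reachable from $s$, which is an instance of REACH and hence decidable in $\NL$; since $\NL$ is closed under complementation (Immerman--Szelepcs\'enyi), $\MR(x\sub z\wedge y\sub z)$ restricted to key-teams lies in $\NL$.

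For $\NL$-hardness I would reduce from the complement of REACH on digraphs of out-degree at most $2$, which is still $\NL$-complete by a standard logarithmic-space normalization. Given such a graph $G=(V,E)$ and vertices $a,b$, first delete the out-edges of $b$; this leaves unchanged whether $b$ is reachable from $a$. Then build a team $X=\{t_v\mid v\in V\}$ over $\{x,y,z\}$ with $t_v(z)=v$, where $t_b(x)=t_b(y)=\star$ for a fresh symbol $\star\notin V$; if $v\ne b$ has out-neighbours $w_1,w_2$ then $t_v(x)=w_1$ and $t_v(y)=w_2$; if $v\ne b$ has a single out-neighbour $w$ then $t_v(x)=t_v(y)=w$; and if $v\ne b$ has no out-neighbour then $t_v(x)=t_v(y)=v$. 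Then $z$ is a key on $X$, the unique non-full assignment is $t_b$, and the set of vertices $H$-reachable from $t_a$ is precisely $\{t_v\mid v\text{ is reachable from }a\text{ in }G\}$ (the added self-loops contribute no new vertices). Hence by the claim $t_a\in\maxsub(X,x\sub z\wedge y\sub z)$ iff $t_b$ is not among them, that is, iff $b$ is not reachable from $a$. The reduction is first-order, in particular computable in logarithmic space.

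The main obstacle is the characterization claim: one must use keyness carefully to see that the witnesses for \emph{both} inclusion atoms are unique and therefore forced to lie in any witnessing subteam, so that $\maxsub(X,x\sub z\wedge y\sub z)$ is exactly the largest $f,g$-closed set of full assignments and its complement becomes a plain reachability question. The hardness direction is then routine, the only twist being that one reduces from the complement of REACH rather than from REACH, which is also why the upper bound goes through the self-complementarity of $\NL$.
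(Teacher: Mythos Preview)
Your proof is correct and rests on the same underlying idea as the paper's: because $z$ is a key, the witnesses for the two inclusion atoms are forced, so membership in the maximal subteam reduces to a reachability question in the digraph where each assignment has at most two out-neighbours. The hardness reductions are essentially the same construction (the paper simply omits the assignment for $b$ rather than pointing its $x,y$-values to a fresh symbol, but the effect is identical).

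Where you differ is in the packaging. The paper routes both directions through the safety game: it invokes Theorem~\ref{thm:char} to identify $s\in\maxsub(X,\phi)$ with the existence of a winning strategy for Player~II, observes that keyness makes Player~II's moves forced, and then uses Lemma~\ref{apulemma} to bound the game length for the membership argument. You instead prove the reachability characterisation of $\maxsub(X,x\sub z\wedge y\sub z)$ directly from the semantics, which is more elementary and self-contained --- in particular it avoids importing the safety-game framework and the bounded-length lemma. The cost is that your argument is tailored to this specific formula, whereas the paper's game approach is a uniform tool that applies to arbitrary quantifier-free $\FOinc$-formulae and is reused elsewhere (e.g., in the key-restricted analysis of Section~\ref{sect:key}).
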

\begin{proof}
\textbf{Hardness.} As above, it suffices to 
give
a logarithmic-space many-one reduction from reachability. 
Recall that reachability is the problem to decide, given a directed graph $G$ with two nodes $a$ and $b$,  whether there is a path from $a$ to $b$. 
 W.l.o.g. we  restrict attention to instances in which all vertices have out-degree at most two. Since $\NL$ is closed under complement, we may reduce  to the complement of $\MR(x\sub z\wedge y\sub z)$ over teams with a key $z$. 
 W.l.o.g. we may assume that $b$ has out-degree $0$ and that all the other nodes have out-degree at least $1$. The team $X$ over $\{x,y,z\}$ is now constructed by the following formation rule (see also Fig. \ref{kuvaC}). For all vertices $i\neq b$:
\begin{itemize}
\item if $i$ has single outgoing edge $(i,j)$, add the assignment $s_i:(x,y,z)\mapsto (j,j,i)$ to $X$;
\item if $i$ has two outgoing edges $(i,j)$ and $(i,k)$, add the assignment $s_i:(x,y,z)\mapsto (j,k,i)$ to $X$.
\end{itemize}
Observe that $z$ is a key on $X$.
\begin{figure}
\centering
\begin{tabular}{ccc}
\adjustbox{valign=m}{
\begin{tikzpicture}[->,>=stealth,shorten >=1pt,auto,node distance=1.5cm,
 main node/.style={circle,draw,minimum size=0.6cm,draw,font=\sffamily\bfseries}]

  \node[main node] (1)  {$a$};
 \node[main node] (2) [below left of =1] {$0$};
  \node[main node] (5) [below right of =1] {$1$};
    \node[main node] (6) [below right of =5] {$3$};
   \node[main node] (3) [below left of=2] {$2$};
   \node[main node] (4) [below right of=2] {$b$};
\path[every node/.style={sloped,anchor=south,auto=false}]

    (1) edge (2)
    (2) edge (4)
   (1) edge (5)
    (5) edge (6)
    (3) edge [loop right]  (3)
        (6) edge [loop right]  (6)
    (2) edge (3);
      
\end{tikzpicture}}
&

$\mapsto$
\vspace{0cm}
&
\qquad
    \begin{tabular}{ccc}
    \toprule
    $x$ & $y$ & $z$ \\
    \midrule
    $0$ & $1$ & $a$ \\
    $2$ & $b$ & $0$ \\
    $3$ & $3$ & $1$ \\
    $2$ & $2$ & $2$ \\
    $3$ & $3$ & $3$ \\
    \bottomrule
    \end{tabular}
    \end{tabular}
\caption{Reachability and $\MR(x\sub z\wedge y\sub z)$ over teams with a key $z$\label{kuvaC}}
\end{figure}
By Theorem \ref{thm:char} it suffices to show that Player II has no winning strategy in $\safety(X,s_a,x\sub z\wedge y\sub z)$ iff $a$ is connected to $b$. 

First note that since $z$ is a key, Player II has a pre-determined strategy: either there is no assignment to choose or else Player II has to always 
pick
the only possible assignment which keeps the game running. For instance, given an assignment-node position $(s,n)$ where $n$ is labeled by $x\sub z$, Player II can only counter by selecting the position $(n,s')$ where $s'$ is the only assignment of $X'$ such that $s'(y)=s(x)$, provided that such an assignment exists. Hence, $a$ is connected to $b$ iff Player I wins some instance of $\safety(X,s_a,x\sub z\wedge y\sub z)$. Furthermore, Player I wins iff the game reaches a position $(s,n)$ where $s$ maps either $x$ or $u$  to $b$. A sequence of positions that is winning for Player I now generates a path from $a$ to $b$, formed by following the edges which correspond to Player I's selections between $x\sub z$ and $y\sub z$. Conversely, a winning sequence for Player I can be found by moving to  $x\sub z$ ($y\sub z$, resp.) whenever the values of $(z,x)$ ($(z,y)$, resp.) form the next edge on the path.

\noindent
\textbf{Membership.} Since $\NL$ is closed under complement, it suffices to describe a $\NL$ procedure for deciding the complement problem for $\MR(x\sub z\wedge y\sub z)$ over teams with a key $z$. 
By Theorem \ref{thm:char} and Lemma \ref{apulemma}, $s\notin \maxsub(X,x\sub z\wedge y\sub z)$ iff Player II has no winning strategy in $\safety_{3\cdot l}(X,s,x\sub z\wedge y\sub z)$, where $l=|X|$. Recall that $z$ is a key on the given team $X$, and hence the strategy of Player II is pre-determined. 
Hence, it suffices to guess the choices of Player I and accept iff the game terminates before $3\cdot l$ assignment-element pairs have been played in the game.
\end{proof}

Adapting the reasoning behind Corollary \ref{cor:NL} we obtain a corollary for disjunction, too.
\begin{cor}
$\MR(x\sub z \vee y \sub z)$ over teams 
for which $z$ is a key,
is $\LOGSPACE$-complete.
\end{cor}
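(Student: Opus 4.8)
The plan is to obtain both directions from Theorem~\ref{thm:L} together with the fact that $\maxsub$ distributes over disjunction. By Lemma~\ref{lem:maxsub}, $\maxsub(\mA,X,x\sub z\vee y\sub z)=\maxsub(\mA,X,x\sub z)\cup\maxsub(\mA,X,y\sub z)$, so for an input $(\mA,X,s)$ in which $z$ is a key on $X$ we have $s\in\maxsub(\mA,X,x\sub z\vee y\sub z)$ iff $s\in\maxsub(\mA,X,x\sub z)$ or $s\in\maxsub(\mA,X,y\sub z)$. As $z$ is a key, each of these two membership tests is, up to renaming the source variable, an instance of $\MR(x\sub z)$ over a team on which $z$ is a key, hence decidable in $\LOGSPACE$ by Theorem~\ref{thm:L}. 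A logarithmic-space machine can run the two subroutines one after the other on the same input and return the disjunction of their outputs; this is the obvious key-restricted analogue of Lemma~\ref{prop:dis}, and it gives membership in $\LOGSPACE$.

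For $\LOGSPACE$-hardness I would reuse the first-order reduction from deterministic reachability constructed in the proof of Theorem~\ref{thm:L}, padded so that the second disjunct becomes inert, in the same spirit as the $\NL$-hardness argument behind Corollary~\ref{cor:NL}. Rename that reduction so that it outputs a team $X$ over $\{x,z\}$ on which $z$ is a key, with the distinguished assignment $m$ lying in $\maxsub(X,x\sub z)$ exactly when there is a deterministic path from $a$ to $b$. Now extend each assignment of $X$ to the variable $y$ by mapping $y$ to a single fresh value $\ast$ that does not appear in the $z$-column (a first-order reduction can supply such an element). Then $z$ is still a key on the enlarged team, and, by Lemma~\ref{lem:incgraph} applied to $y\sub z$, $\maxsub(X,y\sub z)=\emptyset$, since the graph on $X$ with an edge from $t$ to $t'$ whenever $t(y)=t'(z)$ has no edges at all, hence no cycle reachable from any vertex. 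Therefore $\maxsub(X,x\sub z\vee y\sub z)=\maxsub(X,x\sub z)$, and $m$ is a positive instance of $\MR(x\sub z\vee y\sub z)$ over teams with key $z$ iff $b$ is deterministically reachable from $a$. Since adjoining the constant $y$-column is first-order, this is a first-order many-one reduction, which completes the hardness proof.

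The only delicate points — the ``main obstacle'', such as it is — lie at the interface with Theorem~\ref{thm:L}: one has to check that the padded construction still outputs a legal instance (adjoining the $y$-column trivially preserves $z$ being a key), that the fresh value $\ast$ is available within a first-order reduction, and that $\maxsub(X,y\sub z)$ is genuinely empty rather than merely small. One should also record that Lemma~\ref{prop:dis} is stated for unrestricted teams, so for the membership direction it cannot be cited verbatim and its (one-line) argument must be rerun under the standing key restriction. Everything else follows directly from the results already established.
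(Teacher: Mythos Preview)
Your proof is correct and follows essentially the same approach as the paper, which only remarks that the result is obtained by ``adapting the reasoning behind Corollary~\ref{cor:NL}'': membership via the disjunction-splitting of Lemma~\ref{lem:maxsub}/Lemma~\ref{prop:dis} (rerun under the key restriction) together with Theorem~\ref{thm:L}, and hardness by padding the Theorem~\ref{thm:L} reduction so that the second disjunct has empty maximal subteam. Your write-up is in fact more explicit than the paper's, and your flagged ``delicate points'' (key preservation, availability of a fresh value, re-deriving Lemma~\ref{prop:dis} under the restriction) are the right ones to note.
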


\subsection{Consistent Query Answering}\label{sect:CQA} The maximal subteam problem has a close connection to database repairing which provides a framework for managing inconsistency in databases.  An inconsistent database is a database that does not satisfy all the integrity constraints that it is supposed to satisfy. Inconsistency may arise, e.g., from data integration where the task is to bring together data from different sources. Often in practise inconsistency is handled through data cleaning which is the process of identifying and correcting inaccurate data records from databases. An inherent limitation of this approach is its inability to avoid arbitrary choices as consistency can usually be restored in a number of ways. The approach of database repair is to tolerate inconsistencies in databases and investigate reliable answers to queries. 

A \emph{database} is an interpretation of a relational vocabulary $\sigma=\{R_1,\ldots ,R_n\}$ in which each $R_i$ is associated with an arity $\#R_i$. Given a (finite) set $\Sigma$ of integrity constraints, a database $D$ is called \emph{inconsistent} (w.r.t. to $\Sigma$) if $D\not\models \Sigma$, and \emph{consistent} otherwise. Given a partial order $\leq$ on databases over a fixed $\sigma$, and a set $\Sigma$ of integrity constraints, a \emph{repair} of an inconsistent database $I$ is a database $D$ such that it is consistent and all $D'< D$ are inconsistent. The database $D$ is called a $\oplus$-\emph{repair} if the partial order is defined in terms of symmetric difference: $D\leq D'$ if  $D\oplus I \sub D' \oplus I$. If additionally $D$ is a subset (superset, resp.) of $I$, then $D$ is called a \emph{subset-repair} (\emph{superset-repair}, resp.). An \emph{answer} to a first-order query $q=\psi(x_1, \ldots ,x_n)$ on a database $D$ is any $(a_1, \ldots ,a_n)$ such that  $D$ satisfies $\psi(a_1, \ldots ,a_n)$, and a \emph{consistent answer} on an inconsistent database $I$ is any value $(a_1, \ldots ,a_n)$ such that each repair $D$ of $I$ satisfies $\psi(a_1, \ldots ,a_n)$. Let $*\in \{\oplus$, subset, superset$\}$ and let $\Sigma$ be a set of integrity constraints. 
The \emph{$*$-repair checking problem w.r.t. $\Sigma$} ($*$-$\RC(\Sigma)$) is to determine, given two databases $D$ and $I$, whether $D$ is a $*$-repair of $I$. Let also $q$ be a Boolean query. The \emph{$*$-consistent query answering problem w.r.t. $\Sigma$ and $q$} ($*$-$\CQA(\Sigma,q)$) is to determine, given an inconsistent database $I$, whether $q$ is true in every $*$-repair of $I$. Inclusion dependencies are a special case of so-called LAV tgds that are first-order formulae of the form 
$\forall \tuple x (\psi(\tuple x \to \exists \tuple y \theta(\tuple x,\tuple y))$
 where $\psi$ is a single relational atom and $\theta$ is a conjunction of relational atoms. For LAV tgds these problems are solvable in polynomial time.
 \begin{thm}[\cite{CateFK15}]
 Let $*\in \{\oplus$, subset, superset$\}$, let $\Sigma$ be a set of LAV tgds, and let $q$ be a conjunctive query. The $*$-repair checking problem w.r.t. $\Sigma$ and the $*$-consistent query answering problem w.r.t. $\Sigma$ and $q$ are both solvable in polynomial time.
 \end{thm}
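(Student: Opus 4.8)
The engine of the proof is the uniqueness result recalled above: for a set $\Sigma$ of LAV tgds, every database $I$ has a unique subset repair (\cite{ChomickiM05,CateFK15}). First I would make this effective. Each tgd has the form $\forall\tuple x\,(R\tuple x \to \exists\tuple y\,\theta(\tuple x,\tuple y))$ with $\theta$ a conjunction of atoms, so for a tuple $R(\tuple a)$ the condition $\exists\tuple y\,\theta(\tuple a,\tuple y)$ is a fixed Boolean conjunctive query, decidable in time polynomial in the size of the current instance. Define $\Gamma(K)$ to be the set of tuples $R(\tuple a)\in K$ for which $K\models\exists\tuple y\,\theta(\tuple a,\tuple y)$ holds for every tgd whose body atom is over $R$. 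Then $\Gamma$ is monotone and deflationary, the chain $I\supseteq\Gamma(I)\supseteq\Gamma^2(I)\supseteq\cdots$ stabilises after at most $|I|$ steps at the greatest fixed point $J^\ast$, and a short argument (every consistent $C\sub I$ is a fixed point of $\Gamma$, and the union of fixed points is again a fixed point) shows $J^\ast$ is the maximum consistent subset of $I$; by the uniqueness result this is exactly the subset repair $J$ of $I$. Hence $J$ is computable in polynomial time.

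The subset case is then immediate. For subset-$\RC(\Sigma)$ one computes $J$ from $I$ in polynomial time and checks whether the second input $D$ equals $J$. For subset-$\CQA(\Sigma,q)$ one computes $J$ and evaluates the conjunctive query $q$ on it, which is polynomial in data complexity; since the subset repair is unique, ``$q$ holds in every subset repair of $I$'' is equivalent to ``$J\models q$''.

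For superset and $\oplus$-repairs uniqueness fails, and this is where the real work lies. The plan for superset-$\CQA$ is to reduce certain answers to a bounded portion of the chase: every superset repair of $I$ arises, up to homomorphism, from chasing $I$ with $\Sigma$, and for a \emph{fixed} conjunctive query $q$ only the chase up to depth bounded by $|q|$ is relevant; since each LAV-tgd chase step introduces only boundedly many tuples per existing tuple, the depth-$|q|$ chase of $I$ has polynomial size and can be built in polynomial time, after which one evaluates $q$ on it treating nulls as non-answers. One handles separately the case where $I$ has no finite superset repair, in which superset-$\CQA$ is vacuously true and superset-$\RC$ always false. For superset-$\RC$, given $D\supseteq I$ one checks $D\models\Sigma$ and then tests minimality: $D$ is a superset repair iff no tuple of $D\setminus I$ can be removed --- together with the cascade of tuples that thereby become unsupported (another $\Gamma$-style fixed point, but forbidden to delete tuples of $I$) --- while keeping the result consistent; this makes minimality polynomial-time decidable. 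The $\oplus$ case is obtained by combining the deletion fixed point with the superset analysis, balancing the two contributions to the symmetric difference.

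The main obstacle is precisely this superset/$\oplus$ analysis: the chase of LAV tgds need not terminate, so one cannot simply ``repair and compare'', and the argument must carefully isolate the query-dependent, polynomially bounded fragment of the chase that governs certain answers while treating instances with no finite superset repair as a degenerate case. The subset case --- the one directly relevant to the maximal subteam membership problem --- is by contrast routine once the uniqueness result is in hand.
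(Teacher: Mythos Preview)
The paper does not prove this theorem at all: it is quoted from \cite{CateFK15} and used as a black box in the discussion of consistent query answering. There is therefore no proof in the paper to compare your attempt against.

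That said, a few remarks on your sketch. The subset case is correct and is the standard argument: the deflationary operator $\Gamma$ reaches its greatest fixed point in at most $|I|$ rounds, and this fixed point is the unique subset repair; both subset-$\RC$ and subset-$\CQA$ follow immediately. Your superset-$\RC$ test can also be made precise and correct: $D$ is a superset repair iff $D\models\Sigma$, $I\subseteq D$, and for every $t\in D\setminus I$ the maximal consistent subset of $D\setminus\{t\}$ (again a $\Gamma$-fixed point) fails to contain $I$.

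The superset-$\CQA$ part is where your outline is thin. The bounded-chase idea gives the easy direction (if $q$ holds in the chase then, by universality, it holds in every consistent superset, hence in every superset repair), but the converse --- producing a \emph{finite} superset repair in which $q$ fails from the assumption that $q$ fails in the possibly infinite chase --- is exactly the nontrivial step, and your sketch does not address it. Also, the case ``$I$ has no finite superset repair'' does not occur for sets of tgds (the full relations over the active domain plus one fresh element always satisfy every tgd), so that clause is a red herring. Finally, your treatment of the $\oplus$ case is too vague to evaluate; the route actually taken in \cite{CateFK15}, and alluded to later in this section of the paper, is that for LAV tgds the unique subset repair is already a universal $\oplus$-repair, so $\oplus$-$\CQA$ reduces to evaluating $q$ on it.
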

 Furthermore, it is known that weakly acyclic collections of LAV tgds enjoy subset-repair checking in logarithmic space \cite{AfratiK09}. Nevertheless, it seems not much attention in general has been devoted to complexity thresholds within polynomial time. Our results can thus be seen as steps toward this direction as the trichotomy in Theorem \ref{thm:trikotomia}  extends to repair checking and consistent query answering. 
 Let $IC$ be a collection of finite sets of integrity constraints and let $C$ be a complexity class. 
 We say that the consistent query answering problem is $C$-complete for $IC$  if for all  $\Sigma\in IC$,  $*$-$\CQA(\Sigma,q)$ is in $C$ for all Boolean conjunctive queries $q$ and $C$-complete for some such $q$. 
 
\begin{cor}
 Let $*\in \{\oplus$, subset$\}$. 
  The subset-repair checking problem 
   and the $*$-consistent query answering problem for finite sets $\Sigma$ of unary unirelational inclusion dependencies are
 \begin{enumerate}[(a)]
\item first-order definable if $G_{\Sigma}$ has no edges,
\item $\NL$-complete if $G_{\Sigma}$ has an edge $(x,y)$ and no other edges except possibly for its inverse $(y,x)$,
\item $\PTIME$-complete otherwise. 
\end{enumerate}
\end{cor}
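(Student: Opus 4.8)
The plan is to leverage the tight correspondence, established in the introduction, between the maximal subteam membership problem $\MR(\phi)$ (for $\phi$ a conjunction of unary inclusion atoms) and the subset-repair/consistent query answering problems for the corresponding set $\Sigma$ of unary unirelational inclusion dependencies. Since a team $X$ with domain $\{x_1,\dots,x_n\}$ is naturally a single $n$-ary relation instance $I$, and the unique subset repair of $I$ with respect to $\Sigma$ is exactly $\maxsub(\mathfrak{A},X,\phi)$ where $\phi=\bigwedge\Sigma$, all three items of the trichotomy in Theorem~\ref{thm:trikotomia} should transfer. The three cases to handle are (a) $G_\Sigma$ has no edges, (b) $G_\Sigma$ has a single edge up to inverse, and (c) everything else.

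First I would record the translation in both directions. Going from CQA/repair checking to $\MR$: membership of subset-repair checking in a complexity class $C$ follows because checking whether a given $D$ equals the unique subset repair amounts to verifying, for each tuple, whether it lies in $\maxsub$, i.e.\ $|I|$-many instances of $\MR(\phi)$; and by the standard observation that for LAV tgds the unique subset repair exists, consistent query answering for a Boolean conjunctive query $q$ reduces to evaluating $q$ on that unique repair, which an $\NL$ (resp.\ $\PTIME$, resp.\ FO) machine can do by first computing the repair. For case (a), when $\Sigma$ consists only of trivial atoms $x\sub x$, every instance is already consistent, so the repair is $I$ itself and CQA reduces to plain first-order query evaluation, which is first-order definable; this matches item~(a) of Theorem~\ref{thm:trikotomia}. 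For the upper bounds in (b) and (c) I would invoke Lemma~\ref{thm:NLarb} and Lemma~\ref{lem:MRinP} (together with Lemma~\ref{lem:eka}), combined with the fact that $q$ is fixed, so composing the repair computation with query evaluation stays within $\NL$ (resp.\ $\PTIME$).

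For the hardness directions, the key is that the reductions in Lemmas~\ref{thm:NLarb} and \ref{lem:eka} already produce, from a REACH instance (resp.\ an MCVP instance), a single-relation instance $I$ and a distinguished tuple $t$ such that $t$ lies in the unique subset repair iff the source instance is a yes-instance. To obtain hardness of subset-repair checking one takes $D$ to be the full unique subset repair except possibly for $t$ and asks whether $D$ is a repair; to obtain hardness of CQA one introduces a fresh Boolean query $q$ that is true precisely when the tuple $t$ survives (e.g.\ $q\dfn\exists\bar u\, R(\bar u)$ restricted by equalities pinning $\bar u$ to the values of $t$), so that $q$ holds in the unique repair iff the source instance is a yes-instance. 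Crucially, the reductions from Lemma~\ref{lem:eka} already generate, on every yes-instance, a subteam satisfying \emph{all} unary inclusion atoms between the relevant variables, so the same instance $I$ works uniformly for whichever $\Sigma$ falls into case (c); the extra-column padding argument from the paragraph after Lemma~\ref{lem:eka} handles variables of $G_\Sigma$ outside the chosen subgraph $G_1$--$G_4$.

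The main obstacle I anticipate is bookkeeping rather than conceptual: one must verify that the query $q$ witnessing hardness can be taken to be a genuine Boolean \emph{conjunctive} query (no negation, no disjunction) and still detect survival of the distinguished tuple, and that the reductions respect the correct resource bound (first-order reductions for the $\NL$-hardness, logarithmic-space for $\PTIME$-hardness). One must also be slightly careful that the equivalence "$D$ is a subset repair of $I$" is being checked for the \emph{given} $\Sigma$, exploiting uniqueness of the subset repair for inclusion dependencies (from \cite{ChomickiM05}), so that repair checking really is the pointwise membership question. Finally, I would note that the $\oplus$-repair case coincides with the subset-repair case here because inclusion dependencies are LAV tgds whose only minimal repairs by symmetric difference are in fact subset repairs (no tuple insertion can decrease the symmetric difference once one may also delete), so items (a)--(c) hold verbatim for $*\in\{\oplus,\text{subset}\}$.
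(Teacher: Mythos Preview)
Your upper-bound arguments and the overall correspondence with Theorem~\ref{thm:trikotomia} are fine and match the paper. The gap is in the two hardness reductions.

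For subset-repair checking you propose taking $D$ to be ``the full unique subset repair except possibly for $t$''. But the reduction must output the pair $(I,D)$ within the claimed resource bound (first-order, resp.\ logspace), and computing the unique subset repair is precisely the hard problem you are reducing \emph{to}; you cannot assume it available when constructing $D$. Similarly, for CQA you propose a fixed query $q$ that pins $\bar u$ to the values of the distinguished tuple $t$. Since $q$ is fixed in the problem $*$-$\CQA(\Sigma,q)$ while $t$ varies with the input (e.g.\ in the REACH reduction of Lemma~\ref{thm:NLarb} the distinguished assignment $s_{a,b}$ has values $(a,b)$ depending on the graph's source and target), no single conjunctive query can detect survival of $t$ across all instances.

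The paper sidesteps both issues with one observation: the reductions in Lemmas~\ref{thm:NLarb} and~\ref{lem:eka} can be arranged so that $s\in\maxsub(X,\phi)$ if and only if $\maxsub(X,\phi)\neq\emptyset$ (the MCVP reduction needs a small modification for this to hold). Then for repair checking one simply takes $D=\emptyset$, and for CQA one takes an atomic query $q=\exists\bar u\,R\bar u$; both are trivially producible by the reduction and both test exactly nonemptiness of the repair. A secondary point: your justification for the $\oplus$ case (``no tuple insertion can decrease the symmetric difference'') is not correct for inclusion dependencies in general---superset and $\oplus$-repairs with insertions do exist. The paper instead invokes the result of \cite{CateFK15} that the unique subset repair is a \emph{universal} $\oplus$-repair, so $\oplus$-CQA answers coincide with subset-CQA answers; note also that the corollary asserts the trichotomy only for \emph{subset}-repair checking, not $\oplus$-repair checking.
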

Since $\NL$ and $\PTIME$ are closed under complement, we may consider  the complement of subset-repair checking. For the upper bounds note that $D$ is a repair of $I$ if and only if $D$ satisfies $\Sigma$ (a first-order property) and no tuple in $I\setminus D$ is in the unique subset repair of $I$; for the lower bounds note that in our reductions $s\in \maxsub(X,\phi)$ if and only if $\maxsub(X,\phi)\neq \emptyset$.\footnote{In point of fact, the reduction of Lemma \ref{lem:eka} requires slight modification: remove assignments $(c_i,\top,\top)$ and add assignments $(c_i,j,k)$ for each assignment $(i,j,k)\in X$ where $i$ is an AND gate.}
 Considering consistent query answering, for the upper bounds we may apply the maximal subteam membership problem and for lower bounds we may simply use atomic queries.
That we may include also $\oplus$-repairs follows from the fact that each set of inclusion dependencies $\Sigma$ has a unique subset repair which is also the unique universal subset repair and the unique universal $\oplus$-repair \cite{CateFK15}. A database $U$ is a \emph{universal $*$-repair} of an inconsistent database $I$ if for each conjunctive query $q$, a tuple is a consistent answer to $q$ on $I$ if and only if it is an answer to $q$ on $U$ and contains only values that appear in $I$. That is, it only suffices to consult the universal repair for consistent answers.

\section{Complexity of Model Checking}
In this section we discuss the model checking problem for quantifier-free inclusion logic formulae. It turns out that the results of the previous section are now easily adaptable. 
 As above, we herein restrict attention to 
 quantifier-free formulae.

\begin{defi} Let $\phi\in \FOinc$. 
Then $\MC(\phi)$ is the problem of determining whether $\mA\models_X \phi$, given a model $\mA$ and a team $X$. 
\end{defi}

Hardness results for model checking can now be obtained by relating to maximal subteam.

\begin{lem}
\label{prop:up}
Let $\alpha,\beta\in \FOinc$ be such that 
\begin{enumerate}[(i)]
\item $\Fr{\alpha}\cap\Fr{\beta}=\emptyset$,
\item $\MR(\alpha)$ is $C$-hard for $C\in\{\LOGSPACE,\NL,\PTIME\}$, and
\item There is a team $Y$ of $\dom{\mA}$ with domain $\Fr{\beta}$ such that $\emptyset\neq \maxsub(\mA,Y,\beta)\subsetneq Y$.
\end{enumerate}
Then $\MC(\alpha\vee\beta)$ is $C$-hard.
\end{lem}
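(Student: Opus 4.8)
The plan is to reduce $\MR(\alpha)$ to $\MC(\alpha\vee\beta)$ by a logarithmic-space (or first-order, when $C=\LOGSPACE$) many-one reduction. Given an instance $(\mA,X,s)$ of $\MR(\alpha)$ — so $X$ is a team with $\Fr{\alpha}\sub\dom{X}$ and $s\in X$ — I want to build an instance $(\mA',X')$ of $\MC(\alpha\vee\beta)$ that accepts iff $s\in\maxsub(\mA,X,\alpha)$. By Proposition~\ref{prop:locality} (Locality) I may assume $\dom{X}=\Fr{\alpha}$. The idea is to ``tag'' every assignment of $X$ other than $s$ with a $\beta$-witnessing block so that those assignments can always be absorbed into the $\beta$-disjunct, while $s$ itself is tagged in a way that forces it into the $\alpha$-disjunct.

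Concretely, fix a team $Y$ over $\Fr{\beta}$ with $\emptyset\neq\maxsub(\mA,Y,\beta)\subsetneq Y$, as guaranteed by hypothesis~(iii); pick $t^{+}\in\maxsub(\mA,Y,\beta)$ and $t^{-}\in Y\setminus\maxsub(\mA,Y,\beta)$. Take $\mA'$ to be $\mA$ expanded (if necessary) so that all elements occurring in $Y$ are present. Define $X'$ over $\dom{X}\cup\Fr{\beta}$ (the union is disjoint by~(i)) as follows: for the distinguished assignment $s$, put the single assignment $s\cup t^{-}$ into $X'$; for every other $s'\in X$, put $s'\cup t'$ into $X'$ for every $t'\in Y$. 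Both $\mA'$ and $X'$ are clearly computable from $(\mA,X,s)$ in logarithmic space (indeed by a first-order query, since $Y,t^{+},t^{-}$ are constants depending only on the fixed formula $\beta$), which handles the reduction's resource bound.

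Next I verify correctness, i.e. that $\mA'\models_{X'}\alpha\vee\beta$ iff $s\in\maxsub(\mA,X,\alpha)$. By the team semantics of disjunction, $\mA'\models_{X'}\alpha\vee\beta$ iff $X'=X'_1\cup X'_2$ with $\mA'\models_{X'_1}\alpha$ and $\mA'\models_{X'_2}\beta$. Using Locality again, $\mA'\models_{X'_1}\alpha$ depends only on $X'_1\upharpoonright\dom{X}$ and $\mA'\models_{X'_2}\beta$ only on $X'_2\upharpoonright\Fr{\beta}$. For the ``if'' direction: if $s\in\maxsub(\mA,X,\alpha)$, let $Z\sub X$ be the maximal $\alpha$-subteam containing $s$; take $X'_1=\{s'\cup t : s'\cup t\in X', s'\in Z\}$ and $X'_2=X'\setminus\{s\cup t^{-}\}$ together with enough of $X'_1$ to cover everything — more simply, let $X'_1$ be all extensions of assignments in $Z$ and $X'_2$ all extensions of assignments in $X\setminus\{s\}$; then $X'_1\upharpoonright\dom X=Z$ so $\mA'\models_{X'_1}\alpha$ by Locality, and $X'_2\upharpoonright\Fr\beta\sub Y$ with $\mA\models_Y\beta$, so by Downward Closure (Corollary~\ref{cor:dc} applied in the flat case, or directly since $Y\models\beta$ and $X'_2\upharpoonright\Fr\beta\sub Y$) $\mA'\models_{X'_2}\beta$; and $X'_1\cup X'_2=X'$ since every extension of $s$ lies in $X'_1$ and every other extension lies in $X'_2$. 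Wait — $\beta$ is not first-order in general, so Downward Closure does not apply; instead I should choose $X'_2$ so that $X'_2\upharpoonright\Fr\beta=\maxsub(\mA,Y,\beta)$ by making the $\beta$-block of each absorbed assignment equal to $t^{+}$ rather than ranging over all of $Y$. So I revise the construction: for $s'\in X\setminus\{s\}$, put $s'\cup t^{+}$ into $X'$; for $s$, put $s\cup t^{-}$. Then $X'$ has exactly $|X|$ assignments, and for the ``if'' direction take $X'_1=\{s'\cup u\in X' : s'\in Z\}$, $X'_2=\{s'\cup t^{+} : s'\in X\setminus\{s\}\}$; now $X'_1\upharpoonright\dom X=Z\models\alpha$ and $X'_2\upharpoonright\Fr\beta\sub\{t^{+}\}\sub\maxsub(\mA,Y,\beta)$, which satisfies $\beta$ by union closure (a subset of a $\beta$-satisfying team that is itself a union of sub-singletons — actually I use that any nonempty subteam of $\maxsub(\mA,Y,\beta)$ need not satisfy $\beta$; so instead I should allow $X'_2$ to be exactly $\maxsub(\mA,Y,\beta)$ by spreading the $t^{+}$-block over all of $\maxsub(\mA,Y,\beta)$). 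This is the delicate point and I will handle it by choosing the $\beta$-tags of absorbed assignments to collectively realize precisely $\maxsub(\mA,Y,\beta)$, using that $\maxsub$ is nonempty.

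For the ``only if'' direction, suppose $\mA'\models_{X'}\alpha\vee\beta$ via $X'=X'_1\cup X'_2$. The assignment $s\cup t^{-}$ must lie in $X'_1$ or $X'_2$; if it is in $X'_2$ then $X'_2\upharpoonright\Fr\beta\ni t^{-}\notin\maxsub(\mA,Y,\beta)$, and since $X'_2\upharpoonright\Fr\beta\sub Y$ and $\maxsub(\mA,Y,\beta)$ is the \emph{largest} $\beta$-satisfying subteam of $Y$, we get $\mA'\not\models_{X'_2\upharpoonright\Fr\beta}\beta$, contradicting $\mA'\models_{X'_2}\beta$ and Locality. Hence $s\cup t^{-}\in X'_1$, so $s\in X'_1\upharpoonright\dom X$ and $\mA'\models_{X'_1\upharpoonright\dom X}\alpha$ by Locality, whence $s\in\maxsub(\mA,X,\alpha)$. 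The main obstacle is the bookkeeping in hypothesis~(iii): ensuring the ``garbage'' assignments can always be legally parked in the $\beta$-disjunct while the distinguished assignment provably cannot, which forces the careful choice of $t^{+},t^{-}$ and the requirement that $\maxsub$ of $Y$ under $\beta$ be both nonempty (so absorption is possible) and proper (so $t^{-}$ exists to pin $s$ down). Finally, $\MR(\alpha)$ being $C$-hard then transfers to $\MC(\alpha\vee\beta)$ via this reduction, since $C\in\{\LOGSPACE,\NL,\PTIME\}$ is closed under the relevant reductions.
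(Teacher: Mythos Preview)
Your approach is the paper's, and after two false starts you correctly identify the needed construction: pair every $s'\in X\setminus\{s\}$ with \emph{all} of $Z_0:=\maxsub(\mA,Y,\beta)$, and pair $s$ with some (or all) $t'\in Z_1:=Y\setminus Z_0$. The paper does exactly this, taking
\[X':=\{s\cup t'\mid t'\in Z_1\}\cup\{t\cup t'\mid t\in X\setminus\{s\},\ t'\in Z_0\}.\]
However, the construction you actually commit to and argue with is the single-tag version ($t^+$ for the others, $t^-$ for $s$), and that one genuinely fails in the ``if'' direction for precisely the reason you notice mid-proof: inclusion logic is not downward closed, so $\{t^+\}$ need not satisfy $\beta$, and hence the assignments with $\alpha$-part outside $\maxsub(\mA,X,\alpha)$ cannot be parked in the $\beta$-disjunct. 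You then promise the fix (``spreading the $t^+$-block over all of $\maxsub(\mA,Y,\beta)$'') but never carry it out, so the written proof is incomplete even though you clearly see what is required. Your ``only if'' direction is correct as written.

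Once you commit to the full-$Z_0$ construction, the correctness argument becomes a one-liner via Lemma~\ref{lem:maxsub}, which is how the paper does it: since $X'\upharpoonright\Fr{\alpha}=X$ and $X'\upharpoonright\Fr{\beta}=Y$, Locality gives $\maxsub(\mA,X',\alpha)\upharpoonright\Fr{\alpha}=\maxsub(\mA,X,\alpha)$ and $\maxsub(\mA,X',\beta)\upharpoonright\Fr{\beta}=Z_0$; thus $\mA\models_{X'}\alpha\vee\beta$ iff every $t\cup t'\in X'$ satisfies $t\in\maxsub(\mA,X,\alpha)$ or $t'\in Z_0$, which by construction is equivalent to $s\in\maxsub(\mA,X,\alpha)$. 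Your remark about possibly needing to expand $\mA$ to accommodate the fixed team $Y$ is a valid concern the paper glosses over; in the applications (Proposition~\ref{prop:example}) $\beta$ is simple enough that a suitable $Y$ exists over any domain with at least two elements.
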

\begin{proof}
Let $(\mA,X,s)$ be an instance of $\MR(\alpha)$, that is, $\mA$ is a model, $X$ a team over $\Fr{\alpha}$ and $s\in X$. It suffices to 
define 
a first-order reduction from $(\mA,X,s)$ to a team $X'$ over $\Fr{\alpha}\cup \Fr{\beta}$ such that $s\in \maxsub(\mA,X,\alpha)$ iff $\mA\models_{X'} \alpha \vee\beta$. Let $Z_0:=\maxsub(\mA,Y,\beta)$ and $Z_1:=Y\setminus Z_0$. 
Note that by condition (i), the union of any $t\in X$ and $t'\in Y$ is an assignment over $\Fr{\alpha}\cup \Fr{\beta}$.
We define
\[X':=\{s\cup t'\mid t'\in Z_1\}\cup \{t\cup t'\mid t\in X\setminus \{s\},t'\in Z_0\}.\]
Since $Z_0$ and $Z_1$ are fixed, $X'$ is first-order 
definable
from $(\mA,X,s)$. By Locality (Proposition \ref{prop:locality}), 
we have $\maxsub(\mA,X',\alpha)\upharpoonright \Fr{\alpha} = \maxsub(\mA,X'\upharpoonright\Fr{\alpha},\alpha)=\maxsub(\mA,X,\alpha)$, and similarly $\maxsub(\mA,X',\beta)\upharpoonright \Fr{\beta} =\maxsub(\mA,Y,\beta)=Z_0$. 
Hence, it follows from Lemma~\ref{lem:maxsub} that $\mA\models_{X'} \alpha \vee\beta$ iff for all $t\cup t'\in X': t\in \maxsub(\mA,X,\alpha)\vee t'\in \maxsub(\mA,Y,\beta)$ iff $s\in \maxsub(\mA,X,\alpha)$. 
\end{proof}

Note that $\mA \models_X \phi$ if and only if $\maxsub(\mA,X,\phi)=X$ over inclusion logic formulae $\phi$. Hence, model checking can be reduced to maximal subteam membership tests over each individual assignment of a team. In particular, this means that model checking is at most as hard as maximal subteam membership; in some cases, as illustrated in Proposition \ref{prop:example}(\ref{eka}), it is strictly less hard.
\begin{lem}\label{lem:tomc}
Let $\alpha\in \FOinc$ be such that $\MR(\alpha)$ is in $C\in\{\LOGSPACE,\NL\}$. Then $\MC(\alpha)$ is in $C$.
\end{lem}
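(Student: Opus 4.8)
The plan is to exploit the identity observed just before the statement, namely that for inclusion logic formulae $\phi$ we have $\mA\models_X\phi$ if and only if $\maxsub(\mA,X,\phi)=X$, and since $\maxsub(\mA,X,\phi)\sub X$ always holds, this is equivalent to $X\sub\maxsub(\mA,X,\phi)$, i.e.\ $s\in\maxsub(\mA,X,\phi)$ for every $s\in X$. So the decision procedure for $\MC(\alpha)$ is: on input $(\mA,X)$, iterate over all $s\in X$ and run the assumed $C$-algorithm deciding $\MR(\alpha)$ on the instance $(\mA,X,s)$; accept iff every call accepts.

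First I would argue correctness, which is immediate from the identity above: the universally quantified disjunction of membership tests is exactly the statement $\mA\models_X\alpha$. Then I would argue the complexity bound. For $C=\LOGSPACE$ this is a conjunction of polynomially many logspace tests; one simply reuses the work tape for each successive value of $s$, keeping only a logarithmic-size pointer to the ``current'' assignment $s$ in $X$ (a counter into the input), so the whole computation runs in logarithmic space. For $C=\NL$ the same construction works because $\NL$ is closed under the relevant operations: running a nondeterministic logspace procedure sequentially for each $s\in X$, threading through a logspace index, and accepting only if all runs accept, is still an $\NL$ computation — equivalently, $\NL=\coNL$ is closed under (polynomially-bounded, logspace-uniform) conjunction, so the language $\{(\mA,X)\mid \forall s\in X:(\mA,X,s)\in\MR(\alpha)\}$ is in $\NL$. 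One small point to note is that the instance $(\mA,X,s)$ fed to the $\MR(\alpha)$ oracle is obtained from $(\mA,X)$ by a trivial logspace (indeed first-order) transformation, just appending a marked assignment, so composing with the $C$-algorithm stays in $C$.

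I do not expect any real obstacle here; the lemma is essentially a bookkeeping observation combined with the closure properties of $\LOGSPACE$ and $\NL$. The only point requiring a word of care is that one must not call the subroutine as a genuine oracle (which could blow up space) but rather inline it and reuse space across the $|X|$ iterations — standard for these classes. It is also worth remarking, as the paper does, that the analogous statement fails to give anything new for $C=\PTIME$ only in the trivial sense that $\MC(\alpha)$ is already in $\PTIME$ by Lemma~\ref{lem:MRinP} together with this reduction, and that the inequality can be strict, as Proposition~\ref{prop:example}(\ref{eka}) will show.
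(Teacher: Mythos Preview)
Your proposal is correct and follows exactly the approach the paper takes: the paper does not give a formal proof of Lemma~\ref{lem:tomc} but merely records in the preceding paragraph that $\mA\models_X\phi$ iff $\maxsub(\mA,X,\phi)=X$, so model checking reduces to running the maximal subteam membership test on each assignment of the team. Your elaboration of the space-reuse argument and the appeal to closure of $\LOGSPACE$ and $\NL$ under logspace-indexed conjunction is precisely what is needed to make that remark rigorous.
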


By Lemmata \ref{prop:dis}, \ref{prop:up}, \ref{lem:tomc}, Theorem \ref{thm:gfpcaptures}, and the results of the previous section, the computational complexity of model checking for various quantifier-free inclusion formulae directly follows. The following proposition illustrates some examples. 
Note that the semantics of the inclusion atom is clearly first-order expressible, and the same applies to any conjunction of inclusion atoms.
\begin{prop}\label{prop:example}
\text{ }\\\vspace{-5mm}
\begin{enumerate}[(a)]
\item $\MC(x\sub y)$ and $\MC(x\sub y\wedge u\sub v)$ are first-order definable.
\item $\MC(x\sub y \vee u\sub v)\text{ and }\MC(x\sub y \vee u=v)$ are $\NL$-complete.
\item $\MC((x\sub z\wedge y\sub z)\vee u\sub v)\text{ and }\MC((x\sub z\wedge y\sub z)\vee u=v)$ are $\PTIME$-complete.
\end{enumerate}
\end{prop}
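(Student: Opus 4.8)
\textbf{Proof plan for Proposition~\ref{prop:example}.}
The plan is to assemble all three parts from the machinery already developed, treating upper and lower bounds separately. For part (a), the key observation is that a conjunction of inclusion atoms is itself first-order expressible: the inclusion atom $\tuple x\sub\tuple y$ unfolds to $\forall\tuple u\,(\tuple u=\tuple x\to\exists\tuple v\,(\tuple v=\tuple y\wedge\tuple v=\tuple u))$-style conditions that quantify only over the team, and a conjunction of such atoms is again such a first-order condition. Since a quantifier-free first-order formula is flat (Proposition~\ref{prop:flatness}), $\MC(\psi)$ for such $\psi$ amounts to checking a first-order sentence over the model coding $(\mA,X)$, hence is first-order definable; I would phrase this by writing $X$ as a relation and observing that ``$\mA\models_X\psi$'' becomes a first-order query over that structure. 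So part (a) is essentially immediate from the remark preceding the proposition.

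For parts (b) and (c), the upper bounds come from Lemma~\ref{lem:tomc} combined with Lemma~\ref{prop:dis}: in each case $\phi=\alpha\vee\beta$ where $\alpha$ is a conjunction of unary inclusion atoms and $\beta$ is a single unary inclusion atom or an equality literal. By Lemma~\ref{thm:NLarb} and Lemma~\ref{lem:eka}, $\MR(x\sub y)$ and $\MR(u\sub v)$ are in $\NL$, and $\MR(x\sub z\wedge y\sub z)$ is in $\PTIME$ (the latter also by Lemma~\ref{lem:MRinP}); $\MR(u=v)$ is trivially first-order hence in $\LOGSPACE$. Then $\MR(\alpha\vee\beta)$ lies in $\NL$ (resp.\ $\PTIME$) by Lemma~\ref{prop:dis} (resp.\ Lemma~\ref{lem:MRinP}), and $\MC(\alpha\vee\beta)$ inherits this by Lemma~\ref{lem:tomc}. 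For the $\PTIME$ upper bound of part (c) I would invoke Lemma~\ref{lem:MRinP} directly, or note that $\PTIME$ membership of $\MC$ follows from Theorem~\ref{thm:gfpcaptures} since $\MC(\phi)$ for fixed $\phi\in\FOinc$ is polynomial-time computable.

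For the lower bounds I would apply Lemma~\ref{prop:up}. Take $\alpha$ to be the inclusion-atom part over variables $x,y$ (resp.\ $x,y,z$) and $\beta$ to be $u\sub v$ or $u=v$ over the disjoint variable set $\{u,v\}$; condition~(i) of Lemma~\ref{prop:up} holds by choice of disjoint variables, and condition~(ii) holds because $\MR(x\sub y)$ is $\NL$-hard (Lemma~\ref{thm:NLarb}) and $\MR(x\sub z\wedge y\sub z)$ is $\PTIME$-hard (Lemma~\ref{lem:eka}). For condition~(iii) I need a fixed team $Y$ with domain $\{u,v\}$ such that $\emptyset\neq\maxsub(\mA,Y,\beta)\subsetneq Y$: for $\beta=u\sub v$ take $Y=\{(u,v)\mapsto(0,1),(u,v)\mapsto(1,1)\}$, whose maximal subteam is $\{(1,1)\}$; for $\beta=(u=v)$ take $Y=\{(0,0),(0,1)\}$ with maximal subteam $\{(0,0)\}$. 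Then Lemma~\ref{prop:up} gives $\NL$-hardness (resp.\ $\PTIME$-hardness) of $\MC(\alpha\vee\beta)$, completing parts (b) and (c).

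The one point requiring care—the ``main obstacle,'' though it is minor—is making sure the witness teams $Y$ for condition~(iii) of Lemma~\ref{prop:up} are genuinely \emph{fixed} (independent of the $\MR(\alpha)$ instance) and that their element values ($0$ and $1$) do not clash with values used in the $\MR(\alpha)$ reductions; this is harmless since one may relabel or assume the domain is large enough, but it should be stated. Everything else is a direct citation of the lemmas listed before the proposition, so the proof is essentially a bookkeeping exercise once the right $\alpha$, $\beta$, and $Y$ are identified in each case.
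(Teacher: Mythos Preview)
Your proposal is correct and follows essentially the same approach as the paper: the paper's ``proof'' is simply the sentence preceding the proposition, pointing to Lemmata~\ref{prop:dis}, \ref{prop:up}, \ref{lem:tomc}, Theorem~\ref{thm:gfpcaptures}, and the $\MR$ results of the preceding section, together with the observation that conjunctions of inclusion atoms are first-order expressible. You have accurately unpacked how each of those ingredients is applied, including the construction of the witness teams $Y$ for condition~(iii) of Lemma~\ref{prop:up}, and your caveat about the fixed values $0,1$ living in the domain of the instance is the only genuine detail not spelled out in the paper (and is indeed harmless).
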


\section{An $\NL$ Fragment of Inclusion Logic}\label{sect:nl}

Our aim in this section is to find a natural fragment of 
inclusion logic that
captures the complexity class $\NL$ over ordered finite models. 
Our approach is to consider preservation of $\NL$-computability under
the standard logical operators of $\FOinc$. By Lemma \ref{prop:dis}, 
we already know that $\NL$-computability of maximal 
subteam membership
is preserved under disjunctions. However, Theorem 
\ref{thm:trikotomia} shows 
that conjunction can increase the complexity of the maximal 
subteam membership problem from $\NL$ to $\PTIME$-complete, and by
Proposition \ref{prop:example}, combining a conjunction with a disjunction leads to
$\PTIME$-complete model-checking problems. Thus conjunction cannot be
used freely in the fragment we aim for. 

The following proposition shows
that a single universal quantifier can also increase complexity from 
$\NL$ to $\PTIME$-complete. In the proof we show $\PTIME$-hardness
by reduction from the $\PTIME$-complete problem GAME. An input to 
GAME is a DAG (directed acyclic graph) $G=(V,E)$ together with a node 
$a\in V$. Given such input $(V,E,a)$ we consider the following game 
$\game(V,E,a)$ between two players, I and II. 
During the game the players move a pebble along the edges of
$G$. In the initial position the pebble is on the node $a_0=a$. If after 
$2i$ moves the pebble is on a node $a_{2i}$, then player I chooses a node
$a_{2i+1}$ such that $(a_{2i},a_{2i+1})\in E$, and player II responds
by choosing a node $a_{2i+2}$ such that $(a_{2i+1},a_{2i+2})\in E$. 
The first player unable to move loses the game, and the other player 
wins it. Since $G$ is a DAG, every play of the game is finite.
In particular, the game is determined, i.e., one of the players 
has a winning strategy. Now we define
$(V,E,a)$ to be a positive instance of GAME if and only if player II has
a winning strategy in $\game(V,E,a)$. 

Note that GAME can be seen as a variation of the monotone circuit
value problem MCVP. Indeed, it is straighforward to define for
a given monotone circuit $C$ and input word $w$ an input $(V,E,a)$
for GAME such that $\game(V,E,a)$ simulates the evaluation game of
$C$ on $w$. Thus MCVP is logarithmic-space reducible to GAME. Conversely,
it is also easy to give a logarithmic-space reduction from GAME to
MCVP.


\begin{prop}
Let $\phi$ be the formula $\forall z(\lnot Eyz\lor z\subseteq x)$. Then
$\MR(\phi)$ is $\PTIME$-complete. Consequently, $\MC(\phi\lor Euv)$
is also $\PTIME$-complete.
\end{prop}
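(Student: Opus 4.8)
The plan is to establish $\PTIME$-hardness of $\MR(\phi)$ by a logarithmic-space reduction from GAME; the matching upper bound is Lemma~\ref{lem:MRinP}, and the claim for $\MC(\phi\lor Euv)$ then follows from Lemma~\ref{prop:up} together with the observation (made just before Lemma~\ref{lem:tomc}) that model checking of any $\FOinc$-formula is polynomial-time reducible to its maximal subteam membership problem, so that $\MC(\phi\lor Euv)\in\PTIME$ by Lemma~\ref{lem:MRinP}.

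I would first unwind the team semantics of $\phi$ into a combinatorial criterion: for every model $\mA$ and every team $Y$ with $x,y\in\dom Y$,
\[\mA\models_Y\phi\iff\text{for all }s\in Y\text{ and all }b\text{ with }(s(y),b)\in E^{\mA}\text{ we have }b\in\{s'(x)\mid s'\in Y\}.\]
The forward direction is immediate: after $\forall z$ one must split $Y[A/z]$, and each $s(b/z)$ with $(s(y),b)\in E^{\mA}$ is forced into the $z\sub x$-part, which makes $b$ occur as an $x$-value. For the converse, take the $z\sub x$-part to be $\{s(b/z)\mid s\in Y,\ (s(y),b)\in E^{\mA}\}\cup\{s(s(x)/z)\mid s\in Y\}$ and the remaining assignments as the $\lnot Eyz$-part; the ``self-witnessing'' assignments $s(s(x)/z)$ are what make the inclusion atom hold. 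Consequently $\maxsub(\mA,X,\phi)$ is the largest $Y\sub X$ that is \emph{demand-closed}: whenever $s\in Y$, every $E^{\mA}$-successor of $s(y)$ occurs as the $x$-value of some member of $Y$.

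For the reduction, let $(V,E,a)$ be an instance of GAME; it is positive iff $\mathrm{WinII}(a)$, where $\mathrm{WinII}(v)$ holds iff $v$ is a sink, or for every $w$ with $(v,w)\in E$ there is $u$ with $(w,u)\in E$ and $\mathrm{WinII}(u)$. I would let $\mA$ have domain consisting of fresh markers $m^{\mathrm I}_v,m^{\mathrm{II}}_v$ for $v\in V$ and auxiliary elements $n_v,n'_v$, interpret $E$ by joining $n_v$ to exactly $\{m^{\mathrm{II}}_w\mid (v,w)\in E\}$ and $n'_u$ to exactly $\{m^{\mathrm I}_u\}$ (plus one dummy edge on fresh elements, so $E^{\mA}$ is neither empty nor total), and let $X$ over $\{x,y\}$ consist of $s_v\colon(x,y)\mapsto(m^{\mathrm I}_v,n_v)$ for $v\in V$ and $t_{w,u}\colon(x,y)\mapsto(m^{\mathrm{II}}_w,n'_u)$ for $(w,u)\in E$. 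Here a sink $v$ imposes no demand on $s_v$; the only assignments with $x$-value $m^{\mathrm{II}}_w$ are the $t_{w,u}$ with $(w,u)\in E$, so ``$m^{\mathrm{II}}_w$ occurs as an $x$-value'' is a disjunction over such $u$; and the only assignment with $x$-value $m^{\mathrm I}_u$ is $s_u$. This map is logarithmic-space computable. One then proves, by induction along a topological order of $G$, that the demand-closed set $Y^{*}=\{s_v\mid\mathrm{WinII}(v)\}\cup\{t_{w,u}\mid(w,u)\in E,\ \mathrm{WinII}(u)\}$ equals $\maxsub(\mA,X,\phi)$: demand-closedness of $Y^{*}$ is a direct check from the recursion defining $\mathrm{WinII}$, and $\maxsub\sub Y^{*}$ holds because a minimal-rank element of $\maxsub\setminus Y^{*}$ would, via the demand structure, force a strictly lower-rank such element. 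Hence $s_a\in\maxsub(\mA,X,\phi)$ iff $\mathrm{WinII}(a)$, so $\MR(\phi)$ is $\PTIME$-hard.

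Finally, $\MC(\phi\lor Euv)$ is $\PTIME$-hard by Lemma~\ref{prop:up} with $\alpha=\phi$ and $\beta=Euv$: condition~(i) holds since $\Fr\phi=\{x,y\}$ and $\Fr{Euv}=\{u,v\}$ are disjoint; condition~(ii) is the hardness just established; and condition~(iii) holds because every model $\mA$ produced above has an $E$-edge and an $E$-non-edge, so a two-element team over $\{u,v\}$ yields $\emptyset\neq\maxsub(\mA,Y,Euv)\subsetneq Y$. The $\PTIME$ upper bound is as noted above. The main obstacle is the design of $X$: the conjunctive flavour of the universal quantifier (``all demands met'') must encode player~I's universal move while the existential ``occurs as an $x$-value'' must encode player~II's move, and simultaneously the induced dependency structure must stay acyclic, so that the \emph{greatest}-fixed-point computation carried out by $\maxsub$ returns exactly $\mathrm{WinII}$ rather than something larger.
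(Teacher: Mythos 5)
Your proof is correct, and its overall strategy coincides with the paper's: both establish $\PTIME$-hardness of $\MR(\phi)$ by a logspace reduction from GAME, identify $\maxsub(\mA,X,\phi)$ with (an encoding of) player~II's winning region, and then obtain the $\MC$ statement via Lemma~\ref{prop:up} with $\beta=Euv$. The differences are in the execution. First, you isolate as an explicit lemma the ``demand-closure'' characterization of $\mA\models_Y\phi$ (every $E$-successor of every $s(y)$ must occur as an $x$-value of $Y$); this is exactly the combinatorial content that the paper proves inline by exhibiting the split $Z',Z''$ of $Y[A/z]$ (your $z\sub x$-part, with self-witnessing assignments $s(s(x)/z)$ for \emph{all} $s\in Y$, is a slightly more generous but equally valid choice than the paper's $Z_0$, which only self-witnesses assignments with $s(y)$ a sink). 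Making this a standalone lemma is a genuine clarity gain. Second, your reduction is gadget-based: you build a fresh model with markers $m^{\mathrm I}_v,m^{\mathrm{II}}_v$ and auxiliary nodes so that the two kinds of demands cleanly separate player~I's universal move from player~II's existential response. The paper instead takes $\mA=(V,E)$ itself and lets the team be the edge set, with $s(y)$ recording the current position; this is more economical (and makes condition~(iii) of Lemma~\ref{prop:up} immediate), but requires the small normalization that some edge enters $a$ so that the designated assignment exists, which your construction avoids by including $s_v$ for every $v$. Your maximality argument by descent along a topological order is sound and plays the same role as the paper's extraction of a strategy for player~II from a satisfying subteam. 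In short: same reduction source and same proof architecture, with a more modular characterization lemma and a more elaborate but correct encoding.
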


\begin{proof}
We give now a reduction from GAME to $\MR(\phi)$. Let $(V,E,a)$ be an 
input to GAME. Without loss of generality we assume that there is 
$b\in V$ such that $(b,a)\in E$. Now we simply let $\mA=(V,E)$, 
$X=\{s:\{x,y\}\to V\mid (s(x),s(y))\in E\}$ and $s_0=\{(x,b),(y,a)\}$.

We will use below the notation 
$I=\{c\in V\mid \forall d\in V:(c,d)\not\in E)\}$. Thus, $I$ consists 
of those elements $c\in V$ for which player II wins $\game(V,E,c)$ 
immediately because I cannot move. Furthermore, we denote by $W$
the set of all elements $c\in V$ such that player II has a winning
strategy in $\game(V,E,c)$.

Let $Y$ be the subteam of $X$ consisting of those assignments $s\in X$
for which $s(y)\in W$.
We will show that $Y=\maxsub(\mA,X,\phi)$. Hence in particular
$s_0\in\maxsub(\mA,X,\phi)$ if and only if $(V,E,a)$ is a positive 
instance of GAME, as desired.

To prove that $Y\subseteq\maxsub(\mA,X,\phi)$ it suffices to show that
$\mA\models_Y\phi$. Thus let $Z=Y[A/z]$, 
$Z'=\{s\in Z\mid (s(y),s(z))\not\in E\}$ and $Z''=(Z\setminus Z')
\cup Z_0$, where $Z_0=\{s\in Z\mid s(z)=s(x)\text{ and }s(y)\in I\}$. 
Then clearly $\mA\models_{Z'}\lnot Eyz$. To show that
$\mA\models_{Z''}z\subseteq x$ assume that $s\in Z''$. If 
$s\in Z\setminus Z'$, then $(s(y),s(z))\in E$, and since 
$s\restriction \{x,y\}\in Y$, player II has an answer $c$ to the move 
$s(z)$ of player I in $\game(V,E,s(y))$ such that $c\in W$. 
Thus, $s^*=\{(x,s(z)),(y,c)\}\in Y$. 
If $c\in I$, then $s^*(s^*(x)/z)\in Z_0$.
Otherwise there is some $d\in V$ such that $(c,d)\in E$, whence 
$s^*(d/z)\in Z\setminus Z'$. In both cases, there is $s'\in Z''$
such that $s'(x)=s(z)$. Assume then that $s\in Z_0$. Then by the definition
of $Z_0$ we have $s(x)=s(z)$. Thus we see that
for every $s\in Z''$ there is $s'\in Z''$ such that $s'(x)=s(z)$.
Now we can conclude that $\mA\models_Z \lnot Eyz\lor z\subseteq x$,
and hence $\mA\models_Y\phi$.

To prove that $\maxsub(\mA,X,\phi)\subseteq Y$ it suffices to show that
if $\mA\models_{Y'}\phi$ for a team $Y'\subseteq X$, then $s(y)\in W$ 
for every $s\in Y'$. Thus 
assume that $Y'$ satisfies $\phi$ and $s\in Y'$. We describe a winning
strategy for player II in $\game(V,E,s(y))$. If $s(y)\in I$
she has a trivial winning strategy. Otherwise player I is able to
move; let $c\in V$ be his first move. Since 
$\mA\models_{Y'}\phi$, there are $Z',Z''\subseteq Y'[A/z]$ such that
$Y'[A/z]=Z'\cup Z''$, $\mA\models_{Z'}\lnot Eyz$ and 
$\mA\models_{Z''}z\subseteq x$. Consider the assignment 
$s'=s(c/z)\in Y'[A/z]$. Since $(s'(y),s'(z))=(s(y),c)\in E$, it must 
be the case that $s'\in Z''$. Thus there is $s''\in Z''$ such that
$s''(x)=s'(z)=c$. Then the assignment $s^*=s''\restriction\{x,y\}$ is
in $Y'\subseteq X$, whence $(c,d)\in E$, where $d=s^*(y)$. Let $d$
be the answer of player II for the move $c$ of player I. We observe
now that using this strategy player II can find a legal answer 
from the set $\{s^*(y)\mid s^*\in Y'\}$ to any move of player I,
as long as player I is able to move. 
Since the game is determined, this is indeed a winning strategy.

Using Lemma \ref{prop:up}, we see that $\MC(\forall z(\lnot Eyz\lor 
z\subseteq x)\lor\beta)$ is $\PTIME$-hard for any non-trivial 
formula $\beta$ such that $x,y\not\in\fr(\beta)$, in particular for  
$\beta= Euv$.
\end{proof}
\begin{figure}
\centering
\begin{tabular}{ccc}
\adjustbox{valign=m}{
\begin{tikzpicture}[->,>=stealth,shorten >=1pt,auto,node distance=1.5cm,
 main node/.style={circle,draw,minimum size=0.6cm,draw,font=\sffamily\bfseries}]
\node[main node] (0) {$0$};
  \node[main node] (1) [right of =0]  {$a$};
 \node[main node] (2) [right of =1] {$1$};
  \node[main node] (5) [below of =1] {$2$};
    \node[main node] (6) [below of =2] {$3$};
   \node[main node] (3) [below of=5] {$b$};
   \node[main node] (4) [below of=6] {$4$};
\path[every node/.style={sloped,anchor=south,auto=false}]
(0) edge (1)
    (1) edge (2)
    (2) edge (5)
   (1) edge (5)
    (6) edge (5)
        (5) edge (3)
    (4) edge (3)
        (4) edge (5)
  (6) edge (4)
    (2) edge (6);
      
\end{tikzpicture}}
&
\hspace{.5cm}
$\mapsto$
&
\qquad
    \begin{tabular}{cccc}
    \cmidrule[.5pt]{2-4}
    &$x$ & $y$ & $z$ \\
    \cmidrule[.5pt]{2-4}
   $\bullet$&$0$ & $a$ & $1,2$ \\
    &$a$ & $1$ & $2,3$ \\
   $$ &$a$ & $2$ & $b$ \\
    $$&$1$ & $2$ & $b$ \\
    $\circ$&$1$ & $3$ & $2,4$ \\
    $\circ$&$2$ & $b$ & $2$ \\
    &$3$ & $2$ & $b$ \\
    &$3$ & $4$ & $2,b$ \\
    &$4$ & $2$ & $b$ \\
    $\circ$&$4$ & $b$ & $4$ \\
    \cmidrule[.5pt]{2-4}
    \end{tabular}
    \end{tabular}
\caption{GAME and $\MR(\forall z(\lnot Eyz\lor 
z\subseteq x))$\label{kuvaD}}
\end{figure}

The example above shows that, similarly as conjunction, universal 
quantification cannot be freely used if the goal is to construct 
a fragment of inclusion logic that captures $\NL$. On the positive
side, we prove next that existential quantification preserves 
$\NL$-computability. Furthermore, we show that the same holds for
conjunction, provided that one of the conjuncts is in $\FO$.

\begin{lem}\label{exq-focon-maxsub}
Let  $\phi\in\FOinc$, $\psi\in \FO$, 
and let $X$ be a team of a model $\mA$. Then

(a) $\maxsub(\mA,X,\exists x\phi)=\{s\in X\mid 
s(a/x)\in X'\text{ for some }a\in A\}$,
where $X'=\maxsub(\mA,X[A/x],\phi)$, 

(b) $\maxsub(\mA,X,\phi\land\psi)=\maxsub(\mA,X',\phi)$, where
$X'=\maxsub(\mA,X,\psi)$.
\end{lem}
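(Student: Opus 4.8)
Both parts of Lemma~\ref{exq-focon-maxsub} are statements about the maximal subteam, so the natural strategy is to prove each equality by establishing the two inclusions, exploiting in each direction the characterization of $\maxsub$ as the \emph{largest} subteam satisfying the formula (equivalently, the union of all satisfying subteams) together with union closure (Proposition~\ref{prop:uc}), locality (Proposition~\ref{prop:locality}), and, crucially for part~(b), downward closure of $\FO$ (Corollary~\ref{cor:dc}).

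\emph{Part (b).} Write $X'=\maxsub(\mA,X,\psi)$. For ``$\supseteq$'': by definition $\maxsub(\mA,X',\phi)$ satisfies $\phi$; since it is a subteam of $X'$ and $X'\models\psi$, downward closure (Corollary~\ref{cor:dc}) gives $\maxsub(\mA,X',\phi)\models\psi$, hence $\maxsub(\mA,X',\phi)\models\phi\land\psi$, and being a subteam of $X$ it is contained in $\maxsub(\mA,X,\phi\land\psi)$. For ``$\subseteq$'': let $Y=\maxsub(\mA,X,\phi\land\psi)$. Then $Y\models\psi$, so $Y\subseteq X'$ by maximality of $X'$; and $Y\models\phi$, so $Y\subseteq\maxsub(\mA,X',\phi)$ by maximality of the latter inside $X'$. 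This part is essentially routine.

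\emph{Part (a).} Let $X'=\maxsub(\mA,X[A/x],\phi)$ and let $Y=\{s\in X\mid s(a/x)\in X' \text{ for some }a\in A\}$. To see $Y\subseteq\maxsub(\mA,X,\exists x\phi)$, define $F\colon Y\to\Po(A)\setminus\{\emptyset\}$ by $F(s)=\{a\in A\mid s(a/x)\in X'\}$, which is nonempty precisely by the definition of $Y$; then $Y[F/x]\subseteq X'$, so by downward closure $Y[F/x]\models\phi$, whence $\mA\models_Y\exists x\phi$, and $Y\subseteq X$ gives $Y\subseteq\maxsub(\mA,X,\exists x\phi)$. For the reverse inclusion, let $Z=\maxsub(\mA,X,\exists x\phi)$, so there is $G\colon Z\to\Po(A)\setminus\{\emptyset\}$ with $\mA\models_{Z[G/x]}\phi$. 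Since $Z\subseteq X$ and every $s(a/x)$ with $a\in G(s)$ lies in $X[A/x]$, we get $Z[G/x]\subseteq X[A/x]$, and by maximality $Z[G/x]\subseteq X'$. Hence every $s\in Z$ has some $a\in G(s)$ with $s(a/x)\in X'$, i.e.\ $s\in Y$, so $Z\subseteq Y$.

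\emph{Main obstacle.} The only subtle point is making sure the supplier functions $F$ (resp.\ $G$) are genuinely total with nonempty values on the relevant team and that the expansions land in the right teams; this is where locality is implicitly used to ensure restricting to $\dom X$ versus $\dom X\cup\{x\}$ causes no trouble when $x$ already occurs in $\dom X$ (one should note $X[A/x]$ simply overwrites the $x$-column, so $X[A/x]\models$-facts depend only on the other variables via Proposition~\ref{prop:locality}). A minor case to keep in mind is the empty team, where both sides are trivially empty by the empty team property; this is handled uniformly by the union-of-satisfying-subteams description and needs no separate argument.
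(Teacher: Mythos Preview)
Your overall approach coincides with the paper's, and part (b) as well as the inclusion $\maxsub(\mA,X,\exists x\phi)\subseteq Y$ in part (a) are correct and essentially identical to the original argument.

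There is, however, a genuine gap in your proof of $Y\subseteq\maxsub(\mA,X,\exists x\phi)$. You write that $Y[F/x]\subseteq X'$ and then conclude $\mA\models_{Y[F/x]}\phi$ ``by downward closure''. But $\phi\in\FOinc$, and inclusion logic is \emph{not} downward closed (it is union closed, Proposition~\ref{prop:uc}); downward closure (Corollary~\ref{cor:dc}) applies only to $\FO$-formulae. A simple counterexample: $\{(0,1),(1,0)\}\models x\sub y$, but its subteam $\{(0,1)\}$ does not. So this step, as written, is invalid.

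The fix is exactly what the paper does: observe that in fact $Y[F/x]=X'$, not merely $Y[F/x]\subseteq X'$. The inclusion $Y[F/x]\subseteq X'$ is immediate from the definition of $F$. For the converse, any $t\in X'$ lies in $X[A/x]$, hence $t=s(a/x)$ for some $s\in X$ and $a\in A$; since $s(a/x)\in X'$ we get $s\in Y$ and $a\in F(s)$, so $t\in Y[F/x]$. From $Y[F/x]=X'$ and $\mA\models_{X'}\phi$ you obtain $\mA\models_{Y[F/x]}\phi$ directly, with no appeal to downward closure. With this correction your argument goes through and matches the paper's proof.
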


\begin{proof}
(a) Let $X'=\maxsub(\mA,X[A/x],\phi)$ and $X''=\{s\in X\mid 
s(a/x)\in X'\text{ for some }a\in A\}$. Assume that $Y\subseteq X$
is a team such that $\mA\models_Y\exists x\phi$. Then there is a function
$F: X  \rightarrow \Po(A)\setminus \{\emptyset\}$ such that
$\mA\models_{Y[F/x]}\phi$, and since clearly $Y[F/x]\subseteq X[A/x]$,
we have $Y[F/x]\subseteq X'$. Thus for every $s\in Y$ there is $a\in A$
such that $s(a/x)\in X'$, and hence we see that $Y\subseteq X''$.
In particular $\maxsub(\mA,X,\exists x\phi)\subseteq X''$. 
To prove the converse inclusion it suffices to show that 
$\mA\models_{X''}\exists x\phi$. Let $G: X''  \rightarrow 
\Po(A)\setminus \{\emptyset\}$ be the function defined by 
$G(s)=\{a\in A\mid s(a/x)\in X'\}$. By the definition of $X''$,
this function is well-defined. It is now easy to see that 
$X''[G/x]=X'$, whence $\mA\models_{X''[G/x]}\phi$, as desired.

(b) Let $X'=\maxsub(\mA,X,\psi)$ and $X''=\maxsub(\mA,X',\psi)$. 
Assume first that $Y\subseteq X$ is a team such that 
$\mA\models_Y\phi\land\psi$. Then $\mA\models_Y\psi$, 
whence $Y\subseteq X'$, and furthermore $Y\subseteq X''$, 
since $\mA\models_Y\phi$. In particular, 
$\maxsub(\mA,X,\phi\land\psi)\subseteq X''$. 
On the other hand, by definition $\mA\models_{X''}\phi$. Similarly
$\mA\models_{X'}\psi$, whence by downward closure of $\FO$ (Corollary \ref{cor:dc}), $\mA\models_{X''}\psi$. Thus we see that
$\mA\models_{X''}\phi\land\psi$, which implies that 
$X''\subseteq\maxsub(\mA,X,\phi\land\psi)$.
\end{proof}

As a straightforward corollary to this lemma we obtain the following complexity preservation result.

\begin{prop}\label{NL-pres}
Let  $\phi\in\FOinc$, $\psi\in \FO$, and assume that $\MR(\phi)$ is in
a complexity class $C\in\{\LOGSPACE,\NL\}$. Then

(a) $\MR(\exists x\phi)$ is in $C$, and

(b) $\MR(\phi\land\psi)$ is in $C$.
\end{prop}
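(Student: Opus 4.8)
The plan is to derive both statements directly from Lemma \ref{exq-focon-maxsub}, observing that the set-theoretic descriptions of the maximal subteams given there translate into $\LOGSPACE$- or $\NL$-computations with only logarithmic-space overhead. For part (b), given an instance $(\mA,X,s)$ of $\MR(\phi\land\psi)$, one first computes (the membership status in) $X'=\maxsub(\mA,X,\psi)$; since $\psi\in\FO$, flatness (Proposition \ref{prop:flatness}) tells us that $X'=\{t\in X\mid \mA\models_t\psi\}$, so deciding whether a given assignment lies in $X'$ is a first-order model-checking query on $\mA$ with the assignment plugged in, hence computable in $\LOGSPACE$ (indeed in $\FOcomp\subseteq\LOGSPACE$). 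By Lemma \ref{exq-focon-maxsub}(b), $s\in\maxsub(\mA,X,\phi\land\psi)$ iff $s\in\maxsub(\mA,X',\phi)$, so it suffices to run the assumed $C$-algorithm for $\MR(\phi)$ on input $(\mA,X',s)$, answering its queries ``is $t\in X'$?'' by the $\LOGSPACE$ subroutine above. Since $C\in\{\LOGSPACE,\NL\}$ is closed under $\LOGSPACE$-computable oracle queries (equivalently, under logspace reductions composed with the relevant machine), the overall procedure stays in $C$.

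For part (a), the argument is analogous but uses that the auxiliary team $X[A/x]$ is itself explicitly presentable: its assignments are exactly the $t(a/x)$ for $t\in X$ and $a\in A$, so membership in $X[A/x]$ is trivially decidable, and an index into $X[A/x]$ is just a pair (index into $X$, element of $A$), costing only logarithmically many extra bits. By Lemma \ref{exq-focon-maxsub}(a), $s\in\maxsub(\mA,X,\exists x\phi)$ iff there exists $a\in A$ with $s(a/x)\in X'$, where $X'=\maxsub(\mA,X[A/x],\phi)$. Thus the procedure cycles through all $a\in A$ (a $\LOGSPACE$ loop over the domain) and, for each, invokes the assumed $C$-algorithm for $\MR(\phi)$ on the instance $(\mA,X[A/x],s(a/x))$; it accepts if any of these calls accepts. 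A single such call runs in $C$, and taking an existential disjunction over polynomially many $C$-computations keeps us in $C$ (using closure of $\LOGSPACE$ and $\NL$ under union, or simply by nondeterministically guessing $a$ in the $\NL$ case and reusing space in the $\LOGSPACE$ case).

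The only point requiring a little care — and the one I would state explicitly — is the closure property being invoked: a $C$-algorithm that makes calls to a $\LOGSPACE$-computable ``membership in $X'$'' subroutine, rather than receiving $X'$ on its input tape, is still a $C$-algorithm because $\LOGSPACE$ functions compose with $\LOGSPACE$ and $\NL$ computations without leaving the class. Concretely, one can preprocess: the map $(\mA,X,s)\mapsto(\mA,X',s)$ in part (b) is a $\LOGSPACE$-computable function (it just filters $X$ by the first-order test), so $\MR(\phi\land\psi)$ logspace-reduces to $\MR(\phi)$, giving the claim; part (a) is handled the same way after noting $\MR(\exists x\phi)$ logspace-reduces to $\MR(\phi)$ via the map $(\mA,X,s)\mapsto$ the disjunction of the instances $(\mA,X[A/x],s(a/x))$ over $a\in A$, which is a polynomially-bounded disjunction and hence harmless for both $\LOGSPACE$ and $\NL$. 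No step here is a genuine obstacle; the content is entirely in Lemma \ref{exq-focon-maxsub}, and this proposition is the routine ``promotion'' of that structural lemma to a complexity statement.
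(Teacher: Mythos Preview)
Your proposal is correct and follows essentially the same approach as the paper: both parts are derived directly from Lemma \ref{exq-focon-maxsub}, reducing $\MR(\exists x\phi)$ and $\MR(\phi\land\psi)$ to calls to $\MR(\phi)$ on auxiliary teams ($X[A/x]$ and $X'=\{t\in X\mid \mA\models_t\psi\}$, respectively) that are computable with logspace overhead. Your write-up is in fact more explicit than the paper's about flatness, the representation of $X[A/x]$, and the closure properties of $\LOGSPACE$ and $\NL$ being invoked, but the underlying argument is identical.
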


\begin{proof}
(a) By Lemma \ref{exq-focon-maxsub}(a), to check whether a given assignment
$s$ is in $\maxsub(\mA,X,\exists x\phi)$ it suffices to check
whether $s(a/x)$ is in $\maxsub(\mA,X[A/x],\phi)$ for some $a\in A$.
Clearly this task can be done in $C$ assuming that $\MR(\phi)$ is in $C$.

(b) By Lemma \ref{exq-focon-maxsub}(b), it suffices to show that
the problem whether an assignment $s$ is in $\maxsub(\mA,X',\phi)$,
where $X'=\maxsub(\mA,X,\psi)$, can be solved in $C$ with respect to
the input $(s,\mA,X)$. Since $\psi\in\FO$, the team $X'$ can be computed
in $C$, whence the claim follows from the assumption that 
$\MR(\phi)$ is in $C$.
\end{proof}

Summarising Lemma \ref{prop:dis} and Proposition \ref{NL-pres}, 
$\NL$-computability
of maximal subteam membership is preserved by disjunction, conjunction
with first-order formulas, and existential quantification. Since 
maximal subteam problem is in $\NL$ for all first-order formulas and,
by Lemma \ref{thm:NLarb}, for all inclusion atoms, we define our fragment
$\NLFOinc$ of inclusion logic by the following grammar:
\[\phi::=\alpha\mid \tuple x\sub \tuple y \mid \phi\vee\phi \mid\phi\wedge\alpha\mid \exists x\phi,\]
where $\alpha\in \FO$.

\begin{thm}
$\MC(\phi)$ is in $\NL$ for every $\phi\in\NLFOinc$. 
\end{thm}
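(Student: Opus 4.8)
The plan is to prove this by structural induction on the grammar defining $\NLFOinc$, following precisely the case analysis that the preceding lemmas and propositions were set up to support. The statement to establish is stronger in spirit than it looks: rather than proving $\MC(\phi)\in\NL$ directly, I would prove the stronger claim that $\MR(\phi)\in\NL$ for every $\phi\in\NLFOinc$, and then invoke Lemma~\ref{lem:tomc} to conclude $\MC(\phi)\in\NL$. The point is that $\NL$-computability of maximal subteam membership is the invariant that is actually preserved by all the operators in the grammar, whereas $\MC$ alone is not obviously self-propagating through the connectives.

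Here is how I would organize the induction. For the base cases: if $\phi=\alpha\in\FO$, then $\maxsub(\mA,X,\alpha)=X$ by downward closure (Corollary~\ref{cor:dc}) whenever $\mA\models_X\alpha$ and is empty otherwise — actually more carefully, by flatness $s\in\maxsub(\mA,X,\alpha)$ iff $\mA\models_s\alpha$, which is a first-order (hence $\LOGSPACE\subseteq\NL$) check; if $\phi=\tuple x\sub\tuple y$ is an inclusion atom, then $\MR(\phi)\in\NL$ by Lemma~\ref{thm:NLarb} together with the observation in the text that its reduction-to-reachability argument generalizes to inclusion atoms of higher arity (replacing single variables by pairwise disjoint tuples — and note that in the grammar the tuples in a single atom are just tuples, with the graph-reachability characterization of Lemma~\ref{lem:incgraph} applying verbatim since $G=(X,E)$ with $(s,s')\in E$ iff $s(\tuple x)=s'(\tuple y)$ still has the property that $s$ is in the maximal subteam iff it reaches a cycle). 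For the inductive steps: $\phi=\psi_1\vee\psi_2$ is handled by Lemma~\ref{prop:dis}; $\phi=\psi\wedge\alpha$ with $\alpha\in\FO$ is handled by Proposition~\ref{NL-pres}(b); and $\phi=\exists x\psi$ is handled by Proposition~\ref{NL-pres}(a). In each case the inductive hypothesis gives $\MR(\psi_i)\in\NL$ (resp.\ $\MR(\psi)\in\NL$), and the cited result upgrades this to $\MR(\phi)\in\NL$. Finally, $\MC(\phi)\in\NL$ follows from $\MR(\phi)\in\NL$ by Lemma~\ref{lem:tomc}.

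The one subtlety worth being explicit about — and what I'd flag as the main thing to get right rather than a genuine obstacle — is the composition of $\NL$ reductions/computations through nested operators. Proposition~\ref{NL-pres}(b), for instance, says: to decide $s\in\maxsub(\mA,X,\phi\wedge\psi)$, first compute $X'=\maxsub(\mA,X,\psi)$ (doable in $\LOGSPACE$ since $\psi\in\FO$ and $\maxsub$ for a first-order formula is just the flat restriction to assignments satisfying $\psi$), then query $\MR(\phi)$ on $(s,\mA,X')$. Chaining these across the whole formula tree one must make sure that we are not secretly composing a superconstant number of $\NL$ oracle calls in a way that escapes $\NL$; but since $\phi$ is a \emph{fixed} formula, the parse tree has constant size, so we only compose a constant number of $\LOGSPACE$/$\NL$ stages, and $\NL$ is closed under this bounded composition (each first-order preprocessing stage is $\LOGSPACE$-computable and $\NL$ is closed under $\LOGSPACE$ reductions and, being equal to $\coNL$ by Immerman–Szelepcsényi, under the constantly-many alternations implicit in the disjunction case). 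I would state this closure observation once, up front, and then let each inductive case cite it together with the relevant proposition.

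In short, the proof is essentially a one-paragraph induction: base cases from Corollary~\ref{cor:dc}/flatness and Lemma~\ref{thm:NLarb}, inductive cases from Lemma~\ref{prop:dis} and Proposition~\ref{NL-pres}, and the final passage from $\MR$ to $\MC$ via Lemma~\ref{lem:tomc}. No new combinatorial idea is required; the content was front-loaded into the preservation lemmas, and the grammar for $\NLFOinc$ was reverse-engineered precisely so that every production corresponds to one of those lemmas.
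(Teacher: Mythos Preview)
Your proposal is correct and matches the paper's own proof essentially verbatim: the paper also proves by an easy structural induction that $\MR(\phi)\in\NL$ for every $\phi\in\NLFOinc$ (using exactly the base cases and preservation results you cite) and then invokes Lemma~\ref{lem:tomc} to conclude $\MC(\phi)\in\NL$. Your added remark about the bounded composition of $\NL$ stages is a reasonable point of care, but the paper leaves it implicit.
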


\begin{proof}
By an easy induction we see that $\MR(\phi)$ is in $\NL$ for every 
$\phi\in\NLFOinc$. The claim follows now from Lemma \ref{lem:tomc}.
\end{proof}

Vice versa, to show that each $\NL$ property of ordered models can be expressed in $\NLFOinc$, it suffices to show that $\TC$ translates to $\NLFOinc$ over ordered models.
\begin{thm}
\label{thm:fromTC}
Over finite ordered models, $\TC\leq \NLFOinc$.
\end{thm}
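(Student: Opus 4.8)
The plan is to combine the normal form in Theorem~\ref{thm:TCcaptures} with an explicit translation. By that theorem it suffices to express, uniformly over finite ordered models and by an $\NLFOinc$-sentence, a $\TC$-sentence of the shape $\psi := [\TC_{\tuple x,\tuple y}\alpha(\tuple x,\tuple y)](\tuple\min,\tuple\max)$ with $\alpha\in\FO$ and $\tuple x,\tuple y$ $k$-tuples; and $\psi$ merely says that the digraph on $A^k$ with edge relation $E_\alpha=\{(\tuple a,\tuple b)\mid\mA\models\alpha(\tuple a,\tuple b)\}$ has a non-empty path from $\tuple\min$ to $\tuple\max$.

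Using the order, fix the $\FO$-definable lexicographic ordering of $A^k$, with least tuple $\tuple\min$, greatest tuple $\tuple\max$, and cyclic successor $\sigma$ (so $\sigma(\tuple\max)=\tuple\min$); there are $|A|^k$ such tuples, and we use them both as \emph{time stamps} and as \emph{nodes}. With $k$-tuple variables $\tuple u$ (time), $\tuple z$ (current node), $\tuple z'$ (next node), $\tuple w$ (next time) I would set
\[\phi_0 \;:=\; \exists\tuple u\,\exists\tuple z\,\exists\tuple z'\,\exists\tuple w\,\bigl((\tuple w\tuple z'\sub\tuple u\tuple z)\wedge\beta\bigr),\]
where $\beta\in\FO$ asserts that $\tuple w=\sigma(\tuple u)$; that $\tuple z=\tuple\max$ and $\tuple z'=\tuple\min$ whenever $\tuple u=\tuple\max$; and that $\alpha(\tuple z,\tuple z')\vee\tuple z=\tuple z'$ whenever $\tuple u\neq\tuple\max$. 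Applying the existential block $\exists\tuple u\,\exists\tuple z\,\exists\tuple z'\,\exists\tuple w$ to $\{\emptyset\}$ yields exactly the non-empty teams with domain $\{\tuple u,\tuple z,\tuple z',\tuple w\}$, so $\mA\models\phi_0$ iff some non-empty such team $X$ satisfies the inclusion atom and, flatly, $\beta$. On such an $X$ the inclusion atom forces the set of realised (time, node) pairs to be closed under ``advance time by $\sigma$ and either follow an $\alpha$-edge or stay put''; as $\sigma$ cycles through all time stamps, any non-empty $X$ must populate every time stamp, the only node admissible at time $\tuple\max$ is $\tuple\max$ itself, the wrap clause of $\beta$ then pins the node value $\tuple\min$ at the least time stamp, and following the closure forward through $\tuple\min,\sigma(\tuple\min),\dots,\tuple\max$ produces a length-$(|A|^k-1)$ walk from $\tuple\min$ to $\tuple\max$ whose steps are $\alpha$-edges or repetitions. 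I would verify both implications carefully: for ``$\Leftarrow$'', from a simple $E_\alpha$-path from $\tuple\min$ to $\tuple\max$ pad with repetitions of $\tuple\max$ to length $|A|^k-1$, append the wrap step, and index the result by the time stamps to obtain a witnessing $X$; for ``$\Rightarrow$'', run the argument just sketched on a witnessing $X$ and note that deleting the stay-steps from the resulting walk leaves an $E_\alpha$-walk from $\tuple\min$ to $\tuple\max$, which is non-empty because $\tuple\min\neq\tuple\max$ as soon as $|A|\ge 2$.

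The construction degenerates only on one-element models, where $\tuple\min=\tuple\max$, the forward sweep is vacuous, and $\phi_0$ holds trivially. As ``$|A|=1$'' is $\FO$-definable and on such a model $\psi$ is equivalent to the $\FO$-sentence $\gamma_1:=(\forall p\,\forall q\,p=q)\wedge\alpha(\tuple\min,\tuple\max)$, I would take $\phi:=\bigl(\phi_0\wedge\exists p\,\exists q\,p\neq q\bigr)\vee\gamma_1$. Then $\phi\in\NLFOinc$: $(\tuple w\tuple z'\sub\tuple u\tuple z)$ is an inclusion atom, the rules $\phi\wedge\alpha$ (with $\alpha:=\beta$) and $\exists x\phi$ build $\phi_0$, and conjoining the first-order $\exists p\,\exists q\,p\neq q$ and then disjoining the first-order $\gamma_1$ are also allowed. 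Composing with Theorem~\ref{thm:TCcaptures} yields $\TC\leq\NLFOinc$ over finite ordered models.

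The main obstacle is that $\NLFOinc$ admits neither universal quantifiers nor conjunctions of two inclusion atoms, so one cannot simply say ``every time stamp carries a node'' or ``$\tuple\min$ occurs somewhere''. Running the clock on a \emph{cycle} is precisely what lets a single inclusion atom, fed an arbitrary non-empty team, fill in all time stamps at once and, via the wrap step out of $(\tuple\max,\tuple\max)$, pin the node value $\tuple\min$ at the least time stamp, thereby simulating a ``for all times'' quantifier with one existential sweep. The careful bookkeeping needed to show that this propagation pins both endpoints and that the stay-steps introduce no spurious short path --- which is exactly why the one-element case is split off --- is where the bulk of the work lies.
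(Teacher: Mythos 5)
Your proof is correct and takes essentially the same route as the paper's: both encode the $\TC$-path by a single inclusion atom over (node, time-stamp) pairs together with a first-order conjunct that forces each step to be an $\alpha$-edge and a wrap-around step from $(\tuple\max,\tuple\max)$ to $(\tuple\min,\tuple\min)$ that pins the endpoints; the only differences are that the paper lets the time stamp jump forward by arbitrary amounts ($\tuple t_x<\tuple t_y$) so no padding is needed, whereas you advance it by the cyclic successor and pad with stay-steps. Your explicit split-off of one-element models is a small point of care that the paper's version glosses over.
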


\begin{proof}
 By Theorem \ref{thm:TCcaptures} we may assume without loss of generality that any $\TC$ sentence $\phi$ is of the form 
$[\TC_{\tuple x,\tuple y}\alpha(\tuple x,\tuple y)](\tuple \min,\tuple \max)$ 
where $\tuple x$ and $\tuple y$ are $n$-ary sequences of variables. We define an equivalent $\NLFOinc$ sentence $\phi'$ as follows:
\begin{equation}\label{lause}
\phi':=\exists \tuple x\tuple y \tuple t_x\tuple t_y (\psi_1\wedge \psi_2\wedge \psi_3\wedge\psi_4)
\end{equation}
where 
\begin{itemize}
\item $\psi_1:=\,\tuple y\tuple t_y\sub \tuple x\tuple t_x$,
\item $ \psi_2:=\,(\tuple t_x <\tuple \max \wedge \tuple t_x < \tuple t_y\wedge \alpha(\tuple x,\tuple y)) \vee (\tuple t_x=\tuple \max \wedge \tuple t_y =\tuple \min)$,
\item $\psi_3:=\,\neg \tuple t_x=\tuple \min\vee \tuple x=\tuple \min $, and
\item $\psi_4 :=\,\neg \tuple t_x=\tuple \max\vee \tuple x= \tuple \max$.
\end{itemize}
For two tuples of variables $\tuple x$ and $\tuple y$ of the same length, we write $\tuple x < \tuple y$ as a shorthand for the formula expressing that $\tuple x$ is less than $\tuple y$ in the induced lexicographic ordering, and $\tuple x = \tuple y$ for the conjunction expressing that $\tuple x$ and $\tuple y$ are pointwise identical. Observe that in \eqref{lause} the 
tuple
$\tuple t_x$ can be thought of as a counter which indicates an upper bound for the $\alpha$-path distance of $\tuple x$ from $\tuple \min$.

Assuming
$\mA\models \phi'$, 
we find a non-empty team $X$ such that $\mA \models_X  \psi_1\wedge \psi_2\wedge \psi_3\wedge\psi_4$. Now, $\mA \models_X  \psi_1\wedge \psi_2$ entails that there is an assignment $s\in X$ mapping $\tuple t_x$ to $\tuple \min$, and $\mA \models_X  \psi_3$ implies that $s$ maps $\tuple x$ to $\tuple \min$, too. Then $\mA \models_X  \psi_1\wedge \psi_2$ entails that there is an $\alpha$-path from $\tuple \min$ to $s'(\tuple x)$ for some $s'\in X$ with $s'(\tuple t_x)=\tuple \max$. Lastly, by $\mA \models_X  \psi_4$ it follows that $s'(\tuple x) = \tuple \max$ which shows that 
$[\TC_{\tuple x,\tuple y}\alpha(\tuple x,\tuple y)](\tuple \min,\tuple \max)$.

Assume then that  
$[\TC_{\tuple x,\tuple y}\alpha(\tuple x,\tuple y)](\tuple \min,\tuple \max)$, 
that is, there is an $\alpha$-path $\tuple v_1, \ldots ,\tuple v_k$ where $\tuple v_1=\tuple \min$ and $\tuple v_k=\tuple \max$. We may assume that there are no cycles in the path. Let $\tuple a_i$ denote the $i$th element in the lexicographic ordering of $A^n$. 
 We let $X=\{s_1, \ldots ,s_k\}$ be such that $(\tuple x,\tuple y ,\tuple t_x,\tuple t_y)$ is mapped   to $( \tuple v_i, \tuple v_{i+1} ,\tuple a_i ,\tuple a_{i+1})$ by $s_i$, for $i=1, \ldots ,k-2$,  to $( \tuple v_{k-1}, \tuple v_{k} ,\tuple a_{k-1} ,\tuple \max)$ by $s_{k-1}$, and  to $( \tuple v_k, \tuple v_1 ,\tuple \max ,\tuple \min)$ by $s_{k}$. It is straightforward to verify that $\mA\models_X \psi_1\wedge \psi_2\wedge \psi_3\wedge\psi_4$ from which it follows that $\mA\models \phi'$. 
\end{proof}

It now follows by the above two theorems and Theorem \ref{thm:TCcaptures} that $ \NLFOinc$ captures $\NL$.
\begin{thm}
A class $\calC$ of finite ordered models is in $\NL$ iff it can be defined in $\NLFOinc$. 
\end{thm}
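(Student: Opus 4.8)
The plan is to obtain both directions by simply composing results already in hand: the $\NL$ model-checking bound for $\NLFOinc$ proved immediately above, Immerman's characterization of $\NL$ via $\TC$ (Theorem~\ref{thm:TCcaptures}), and the translation $\TC\leq\NLFOinc$ over finite ordered models (Theorem~\ref{thm:fromTC}). No genuinely new argument is needed; only two small bookkeeping points require attention, namely that a sentence is evaluated on the singleton team $\{\emptyset\}$, and that the formula produced by Theorem~\ref{thm:fromTC} is indeed a \emph{sentence} of $\NLFOinc$ to which the model-checking theorem applies.

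First I would handle the ``if'' direction. Suppose $\calC$ is defined by a sentence $\phi\in\NLFOinc$, so that for every finite ordered model $\mA$ we have $\mA\in\calC$ iff $\mA\models\phi$ iff $\mA\models_{\{\emptyset\}}\phi$. By the theorem just established, $\MC(\phi)$ is in $\NL$; restricting its input team to the fixed singleton $\{\emptyset\}$ still lies in $\NL$, and this restricted problem is precisely the membership problem for $\calC$. Hence $\calC\in\NL$.

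Next I would handle the ``only if'' direction. Let $\calC$ be a class of finite ordered models that is in $\NL$. By Theorem~\ref{thm:TCcaptures} there is a $\TC$-sentence, which we may take in the normal form $[\TC_{\tuple x,\tuple y}\alpha](\tuple\min,\tuple\max)$, defining $\calC$ over finite ordered models. Theorem~\ref{thm:fromTC} then supplies an $\NLFOinc$-sentence $\phi'$ (of the shape exhibited in~\eqref{lause}) equivalent to it over finite ordered models, and this $\phi'$ defines $\calC$. Combining the two directions yields the stated equivalence.

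The main obstacle here is essentially nil: it is the routine check that $\phi'$ has no free variables (all displayed tuples in~\eqref{lause} are existentially bound and $\psi_1,\dots,\psi_4$ introduce none) and that $\phi'\in\NLFOinc$, which is visible directly from its grammar; both were in effect already verified in the proof of Theorem~\ref{thm:fromTC}.
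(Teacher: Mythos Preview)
Your proposal is correct and follows exactly the paper's approach: the paper simply notes that the result follows from the preceding model-checking theorem, Theorem~\ref{thm:fromTC}, and Theorem~\ref{thm:TCcaptures}. Your additional bookkeeping about the singleton team $\{\emptyset\}$ and the fact that $\phi'$ is an $\NLFOinc$-sentence is accurate and merely makes explicit what the paper leaves implicit.
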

\section{Conclusion}

We have studied the complexity of inclusion logic from the vantage point of two computational problems: the maximal subteam membership and the model checking problems for fixed inclusion logic formulae. We gave a complete characterization for the former in terms of arbitrary conjunctions/disjunctions of unary inclusion atoms. In particular, we showed that maximal subteam membership is $\PTIME$-complete for any conjunction of unary inclusion atoms, provided that the conjunction contains two non-trivial atoms that are not inverses of each other. Using these results we characterized the complexity of model checking for several 
quantifier-free inclusion logic formulae. 
We also presented a safety game for the maximal subteam problem and used it to demonstrate that the problem is less complex if the range of inputs is restricted to teams on which the inclusion atoms reference a key.  We leave it for future research to address the complexity of 
quantifier-free inclusion logic formulae that involve inclusion atoms of higher arity and both disjunctions and conjunctions. 

Assuming the presence of quantifiers we presented a simple universally quantified formula that has $\PTIME$-complete maximal subteam membership problem. Finally, we defined a fragment of inclusion logic, obtained by restricting the scope of conjunction and universal quantification, that captures non-deterministic logarithmic space over finite ordered models. 

\bibliography{biblio}

\end{document}